\theoremstyle{plain}
\newtheorem{lemma}{Lemma}[section]
\newtheorem{theorem}[lemma]{Theorem}
\newtheorem{proposition}[lemma]{Proposition}
\newtheorem{corollary}[lemma]{Corollary}
\theoremstyle{definition}
\newtheorem{definition}[lemma]{Definition}
\newtheorem{remark}[lemma]{Remark}
\numberwithin{equation}{section}
\newcommand{\R}{\mathbb{R}}
\newcommand{\N}{\mathbb{N}}
\newcommand{\dist}{\text{\rm dist}}
\newcommand{\supp}{\text{\rm supp}}
\newcommand{\id}{\mathrm{Id}}
\newcommand{\ve}{\varepsilon}
\newcommand{\mm}{\mathfrak m}
\newcommand{\sfd}{\mathsf d}
\newcommand{\cP}{\mathcal P}
\newcommand{\vol}{\mathrm{vol}}
\newcommand{\Ric}{\mathrm{Ric}}
\newcommand{\Det}{\mathrm{Det}}
\newcommand{\Hess}{\mathrm{Hess}}
\newcommand{\Ent}{\mathrm{Ent}}
\newcommand{\rG}{\mathrm{G}}
\newcommand{\TCD}{\mathsf{TCD}}
\newcommand{\wTCD}{\mathsf{wTCD}}
\newcommand{\cA}{\mathcal{A}}
\newcommand{\cB}{\mathcal{B}}
\newcommand{\cU}{\mathcal{U}}
\newcommand{\cC}{\mathcal{C}}
\newcommand{\cL}{\mathcal{L}}
\newcommand{\cK}{\mathcal{K}}
\newcommand{\cH}{\mathcal{H}}
\newcommand{\Int} {{\rm Int}}
\newcommand{\Cpl} {{\rm Cpl}}
\newcommand{\ee} {{\rm e}}
\newcommand{\tr} {{\rm tr}}
\newcommand{\Tr} {{\rm Tr}}
\newcommand{\red}{\color{red}}
\begin{document}

\title[An optimal transport formulation of the Einstein  equations]{  An optimal transport formulation of the Einstein equations of general relativity}
\author{A. Mondino}  \thanks{A.  Mondino: University of Oxford,  Mathematical Institut,  email: Andrea.Mondino@maths.ox.ac.uk. Supported by the EPSRC First Grant  EP/R004730/1 ``Optimal transport and
geometric analysis'' and by the ERC Starting Grant  802689  ``CURVATURE''}
\author{S. Suhr}  \thanks{S. Suhr: University of Bochum, email: Stefan.Suhr@ruhr-uni-bochum.de.  Supported by the SFB/TRR 191 ``Symplectic Structures in Geometry, Algebra and Dynamics'', funded by the Deutsche Forschungsgemeinschaft}

\keywords{Ricci curvature,  optimal transport, Lorentzian manifold, general relativity, Einstein equations, strong energy condition}

\bibliographystyle{plain}

\maketitle

\begin{abstract} 

The goal of the paper is to give an optimal transport formulation of the full Einstein equations of general relativity, linking the (Ricci) curvature of a space-time with the cosmological constant and the energy-momentum tensor. Such an optimal transport formulation is in terms of convexity/concavity properties of the Shannon-Bolzmann entropy along curves of probability measures extremizing suitable optimal transport costs. The result gives a new connection between general relativity and  optimal transport; moreover it gives a mathematical reinforcement of the strong link between general relativity and   thermodynamics/information theory that emerged in the physics literature of the  last years.

\end{abstract}

\tableofcontents

\section{Introduction}
In recent years, optimal transport revealed to be a very effective and innovative tool in several fields of mathematics and applications. By way of example, let us mention fluid mechanics (e.g. Brenier \cite{Br77}  and Benamou-Brenier \cite{BeBr00}), partial differential equations (e.g. Jordan-Kinderleher-Otto \cite{JKO98} and Otto \cite{O01}), random matrices (e.g. Figalli-Guionnet \cite{FiGu}), optimization (e.g.  Bouchitt\'e-Buttazzo \cite{BB1}), non-linear $\sigma$-models (e.g. Carfora \cite{Car}),  geometric and functional inequalities (e.g. Cordero-Erausquin-Nazaret-Villani \cite{CENV}, Figalli-Maggi-Pratelli \cite{FMP}, Klartag \cite{Kl}, Cavalletti- Mondino \cite{CaMo}) Ricci curvature in Riemannian geometry (e.g. Otto-Villani \cite{OV}, Cordero Erausquin-McCann-Schmuckenschl\"ager \cite{CMS}, Sturm-VonRenesse \cite{SVR})  and in metric measure spaces (e.g. Lott-Villani \cite{lottvillani:metric}, Sturm \cite{sturm:I,sturm:II}, Ambrosio-Gigli-Savar\'e \cite{AGS}). For more details about optimal transport and its applications in both pure and applied mathematics, we refer the reader to the many books on the topic, e.g. \cite{AGUser, AGSBook, Sant,Vil, VilTopics}.

Here let us just quote two of the many applications to partial differential equations. In the pioneering work of Jordan-Kinderleher-Otto \cite{JKO98}  it was discovered a new optimal transport formulation of the Fokker-Planck equation (and in particular of the heat equation) as a gradient flow of a suitable functional (roughly, the Boltzmann-Shannon entropy defined below in \eqref{eq:defEntintro} plus a potential)  in the Wasserstein space (i.e. the space of probability measures with finite second  moments endowed with the quadratic Kantorovich-Wasserstein distance); later, Otto \cite{O01} found a related optimal transport formulation of the porous medium equation. The impact of these works in the optimal transport community has been huge, and opened the way to  a more general theory of gradient flows (see for instance the monograph by Ambrosio-Gigli-Savar\'e \cite{AGSBook}). 

The goal of the present work is to give a new optimal transport formulation of another fundamental class of partial differential equations: the Einstein equations of general relativity. First published by Einstein in 1915, the Einstein equations describe gravitation as a result of space-time being curved by mass and energy; more precisely,   the space-time (Ricci) curvature is related to the local energy and momentum expressed by the energy-momentum tensor.
Before entering into the topic, let us first recall that the Einstein equations are \emph{hyperbolic} evolution equations (for a comprehensive treatment see the recent monograph by Klainerman-Nicol\'o \cite{KlNi}).  Instead of a gradient flow/PDE approach, we will see the evolution from a geometric/thermodynamic/information point of view.

Next we briefly recall the formulation of the  Einstein equations.  Let $M^{n}$ be an $n$-dimensional manifold ($n\geq 3$, the physical dimension being $n=4$) endowed with a Lorentzian metric $g$, i.e. $g$ is a nondegenerate symmetric bilinear form of signature $(-++\ldots+)$.
Denote with $\Ric$ and ${\rm Scal}$ the Ricci and the scalar curvatures of $(M^{n},g)$. The Einstein equations read as
\begin{equation}\label{eq:EFEintro}
\Ric-\frac{1}{2} {\rm Scal} \, g +\Lambda g=8\pi T,
\end{equation}
where  $\Lambda \in \R$ is the cosmological constant, and $T$ is the energy-momentum tensor.  Physically, the cosmological constant $\Lambda$ corresponds to the energy density of the vacuum; the energy-momentum tensor is a symmetric bilinear form on $M$ representing the density of energy and momentum, acting as  the source of the gravitational field.

\subsection{Statement of the main results}

Recall that in  
 a Lorentzian manifold $(M^{n},g)$,  a non-zero tangent vector $v\in T_{x}M$ is called \emph{time-like} if $g(v,v)<0$.   If $M$ admits a continuous no-where vanishing time-like vector field $X$, then $(M,g)$ is said to be \emph{time-oriented} and it
 is called a \emph{space-time}. The vector field $X$ induces a partition on the set of time-like vectors, into two equivalence classes: the \emph{future pointing} tangent vectors  $v$ for which $g(X,v)< 0$ and the  \emph{past pointing}  tangent vectors  $v$ for which $g(X,v)> 0$.  The closure of the set of future pointing time-like vectors is denoted 
$$
\cC={\rm Cl}(\{ v \in TM:\,\, g(v,v)<0 \textrm{ and } g(X,v)<0\})\subset TM.
$$
A physical particle moving in the space-time $(M,g,\cC)$ is represented by a  \emph{causal curve}
which is an absolutely continuous curve, $\gamma$, satisfying 
$$\dot{\gamma}_{t}\in \cC \textrm{ a.e. } t \in [0,1].$$
If the particle cannot reach the speed of light (e.g. massive particle), then it is represented by a \emph{chronological} curve
which is an absolutely continuous curve, $\gamma$, satisfying 
$$\dot{\gamma}_{t} \in \Int(\cC) \textrm{ a.e. } t\in [0,1],$$
where  $\Int(\cC)$ is the interior of the cone $\cC$ made of future pointing time-like vectors.  
The Lorentz length of a causal curve is 
$$
L_g(\gamma)= \int_0^1 \sqrt{-g(\dot{\gamma}_{t}, \dot{\gamma}_{t})} \, dt.
$$
A point $y$ is in the future of $x$, denoted $y>>x$, if there is a future oriented chronological curve from $x$ to $y$; in this case,
 the {\em Lorentz distance} or {\em proper time } between $x$ and $y$ is defined by
$$
\sup\{ L_g(\gamma):\,\, \gamma_{0}=x \textrm{ and } \gamma_{1}=y , \; \text{$\gamma$ chronological}\}>0,
$$
which is achieved by a geodesic which is called a {\em maximal geodesic}.  See for example \cite{EH, oneill,W} 

In this paper, we consider the following Lorentzian Lagrangian on $TM$ for $p\in (0,1)$:
\begin{equation}\label{def:pLangrangianLpIntro}
\cL_{p}(v):=
\begin{cases}
-\frac{1}{p}(-g(v,v))^{\frac{p}{2}} & \text{ if $v\in \cC$}\\
+\infty &  \text{otherwise.}
\end{cases}
\end{equation}
Note that if $p$ were $1$ this would be the negative of the integrand for the Lorentz length given above.   
Here we study $p\in (0,1)$
because this  Lorentzian Lagrangian $\cL_{p}$ has good convexity properties for such $p$ [see Lemma \ref{lem:Lpconvex}]. 

%

Let  ${\rm AC}([0,1], M)$ denote the space of absolutely continuous curves from $[0,1]$ to $M.$ The Lagrangian action $\cA_{p}$, corresponding to the Lagrangian $\cL_{p}$ and defined for any $\gamma\in {\rm AC}([0,1], M)$,  is given by
\begin{equation}\label{def:pActionApIntro}
\cA_{p}(\gamma):=\int_{0}^{1} \cL_{p}(\dot{\gamma}_{t}) dt \in (-\infty, 0]\cup\{+\infty\}.
\end{equation}
Observe that $ \cA_{p}(\gamma)\in (-\infty, 0]$ if and only if $\gamma$ is a causal curve.
Note that, if $p$ were $1$, this would be the negative of the Lorentz length of $\gamma$ or the proper time along $\gamma$.
Thus $-\cA_{p}(\dot{\gamma})$ can be seen as a kind of non-linear $p$-proper time along $\gamma$, enjoying better convexity properties.  The reader may note the parallel with the theory of  Riemannian  geodesics, where one often studies the energy functional $\int |\dot{\gamma}|^{2}$ in place of the length functional $\int |\dot{\gamma}|$, due to the analogous advantages.

The choice of the minus sign in (\ref{def:pActionApIntro}) is motivated by optimal transport theory, in order to have a minimization problem instead of a maximization one (as in the sup defining the Lorentz distance between points above).  It is readily checked that the critical points of $\cA_{p}$ with negative action are time-like geodesics [see Lemma \ref{lemma_minimizer}]. The advantage of $\cA_{p}$ with $p\in (0,1)$ is that it automatically selects an affine parametrization for its critical points with negative action. 

The cost function, $c_{p}: \;M\times M\to (-\infty, 0]\cup\{+\infty\} $,  relative to the $p$-action $\cA_{p}$ is
defined by
$$
c_{p}(x,y)=\inf \{\cA_{p}(\gamma)\,:\, \gamma\in {\rm AC}([0,1], M), \gamma_{0}=x, \gamma_{1}=y\}.
$$
Note that, if $p$ were $1$, then this would be the negative of the Lorentz distance between $x$ and $y$.

%
Consider a  relatively  compact  open subset 
$$
E\subset\subset  \Int(\cC) \subset TM
\textrm{ and }
r\in (0, {\rm inj}_{g}(E)),
$$
where ${\rm inj}_{g}(E)>0$ is  the injectivity radius of the exponential map of $g$ restricted to $E$.
If we take $p_{TM\to M}:TM\to M$ to be the canonical projection map then
$$
\forall x\in p_{TM\to M}(E) \textrm{ and } v \in T_xM\cap E \textrm{ with } g(v,v)=-r^2,
$$
we have a maximal geodesic $\gamma_x:[0,1]\to M$ defined by
$$
\gamma_x(t)=\exp_x((t-1/2)v) \textrm{ such that } \gamma_x(1/2)=x.
$$ 
The Ricci curvature, $\Ric_x(v,v)$ at a point $x\in M$ in the direction $v$, is a trace of the
curvature tensor so that intuitively it
measures the average way in which geodesics near $\gamma_x$ bend towards or
away from it.  See Section~\ref{Ex-FLRW}. In Riemannian geometry, the Ricci curvature influences the volumes of 
balls.  Here, instead of balls, we define
for any $x\in p_{TM\to M}(E)$ 
$$B^{g,E}_{r}(x):= \{\exp_{x}^{g}(t w): w\in T_{x}M\cap E, \, g(w,w)=-1, t\in [0,r] \}$$
precisely to avoid the null directions.

Rather than considering individual paths between a given pair of points, we will
consider distributions of paths between a given pair of distributions of points using the optimal transport
approach.

We denote by ${\cP}(M)$ the set of Borel probability measures on $M$. For any $\mu_1,\mu_{2}\in \cP(M)$, we say that  a Borel probability measure 
$$\pi\in \cP(M\times M) \textrm{ is a coupling of }\mu_{1} \textrm{ and }\mu_{2}
$$
 if $(p_{i})_{\sharp}\pi=\mu_{i}, i=1,2$, where $p_{1}, p_{2}:M\times M\to M$ are the projections onto the first and second coordinate. Recall that the push-forward $(p_{1})_{\sharp} \pi$ is defined by  
$$(p_{1})_{\sharp} \pi (A):=\pi \big(p_{1}^{-1}(A)\big)$$ for any Borel subset $A\subset M$. The set of couplings of $\mu_{1},\mu_{2}$ is denoted by $\Cpl(\mu_{1},\mu_{2})$. The \emph{$c_{p}$-cost of a coupling} $\pi$ is  given by 
$$
\int_{M\times M} c_{p}(x,y) d\pi(x,y) \in  [-\infty, 0]\cup \{+\infty\}.
$$
Denote by $C_{p}(\mu_{1}, \mu_{2})$ the \emph{minimal cost relative to $c_{p}$} among all couplings from $\mu_{1}$ to $\mu_{2}$, i.e.
$$
C_{p}(\mu_{1}, \mu_{2}):=\inf \left\{\int c_{p} d\pi \,:\, \pi \in \Cpl(\mu_{1},\mu_{2}) \right\}\in  [-\infty, 0]\cup \{+\infty\}.
$$
If $C_{p}(\mu_{1}, \mu_{2})\in\R$, a coupling achieving the infimum  is said to be $c_{p}$-\emph{optimal}. 

For $t\in [0,1]$ denote by $\ee_{t}: {\rm AC}([0,1], M)\to M$ the evaluation map 
$$
\ee_{t}(\gamma):=\gamma_{t}.
$$ 
A \emph{ $c_{p}$-optimal dynamical plan} is a probability measure $\Pi$ on ${\rm AC}([0,1], M)$ such that $(\ee_{0}, \ee_{1})_{\sharp} \Pi$ is a  $c_{p}$-optimal   coupling from  $\mu_{0}:=(\ee_{0})_{\sharp}\Pi$ to  $\mu_{1}:=(\ee_{1})_{\sharp}\Pi$. 
One can naturally associate to $\Pi$ a curve  
$$\left(\mu_{t}:=(\ee_{t})_{\sharp}\Pi \right)_{t\in [0,1]}\subset {\cP}(M)$$ of probability measures. The condition that $\Pi$ is a  $c_{p}$-optimal dynamical plan corresponds to saying that the curve $(\mu_{t})_{t\in [0,1]}\subset {\cP}(M)$ is a length minimizing geodesic with respect to $C_{p}$, i.e. 
$$C_{p}(\mu_{s}, \mu_{t})=|t-s| C_{p}(\mu_{0}, \mu_{1}) \quad \forall s,t\in [0,1].
$$

We will mainly consider a special class of $c_{p}$-optimal dynamical plans, that we call \emph{regular}: roughly, a $c_{p}$-optimal dynamical plan is said to be regular if it is obtained by exponentiating the gradient (which is assumed to be time-like) of a smooth Kantorovich potential $\phi$:
\begin{equation}\label{def_kant_pot}
\mu_{t}=(\Psi_{1/2}^{t})_{\sharp}\mu_{1/2}, \; \Psi_{1/2}^{t}(x):={\rm exp}^{g}_{x}\Big(-(t-1/2)|\nabla_{g}\phi|_{g}^{q-2} \nabla_{g}\phi(x)\Big), \; \frac{1}{p}+\frac{1}{q}=1,
\end{equation}
and moreover $\mu_{t}\ll \vol_{g}$ for all $t\in (0,1)$, where $\vol_{g}$ denotes the standard volume measure of $(M,g)$. For the precise notions, the reader is referred to Section \ref{SS:BasicsOT}.

A key role in our optimal transport formulation of the Einstein equations will be played by the (relative) Boltzmann-Shannon entropy.  Denote by $\vol_{g}$ the standard volume measure on $(M,g)$. Given an absolutely continuous probability measure $\mu=\varrho\, \vol_{g}$ with density $\varrho\in C_{c}(M)$, its  Boltzmann-Shannon entropy (relative to $\vol_{g}$) is defined as
\begin{equation}\label{eq:defEntintro}
 \Ent(\mu|\vol_{g}):=\int_{M} \varrho \log \varrho \, d\vol_{g}.
\end{equation}
We will be proving that the second order derivative of this entropy along a $c_{p}$-optimal  dynamical plan,
is equivalent to the Einstein Equation in Theorem \ref{thm:RiccieqT}.   See Figure~\ref{fig1}.
Throughout the paper we will assume the cosmological constant $\Lambda$ and the energy momentum tensor $T$ to be given, say from physics and/or mathematical general relativity.  Given $g,\Lambda$ and $T$ it is convenient to set
\begin{equation}\label{eq:deftildeTIntro}
\tilde{T}:=\frac{2\Lambda}{n-2} g + 8 \pi T - \frac{8\pi} {n-2} \Tr_{g}(T) \, g.
\end{equation}
so that the Einstein Equation can be written as $\Ric=\tilde{T}$, see Lemma~\ref{Lem-T}.

\begin{theorem}[Theorem \ref{thm:RiccieqT}]\label{thm:RiccieqTIntro}
Let $(M,g,\cC)$ be a space-time of dimension $n \geq 3$. Then the following assertions are equivalent:
\begin{enumerate}
\item[(1)]  $(M,g,\cC)$ satisfies the Einstein equations \eqref{eq:EFEintro}, which can be rewritten in terms of $\tilde{T}$
of (\ref{eq:deftildeTIntro})  as $\Ric=\tilde{T}$.
\item[(2)]  For every $p\in (0,1)$ and for every relatively  compact  open subset $E\subset\subset  \Int(\cC)$ there exist $R=R(E)\in (0,1)$ and a function $$\epsilon=\epsilon_{E}:(0,\infty)\to (0,\infty)\textrm{ with }\lim_{r\downarrow 0}\epsilon(r)=0 \textrm{ such that }
$$ 
$$
\forall x\in p_{TM\to M}(E) \textrm{ and }v\in T_{x}M\cap E \textrm{ with }g(v,v)=-R^{2}$$  the next assertion holds.
For every $r\in (0,R)$, setting $y=\exp^{g}_{x}(rv)$, there exists a regular $c_{p}$-optimal  dynamical plan $\Pi=\Pi(x,v,r)$ with associated curve of  probability measures 
$$\quad (\mu_{t}:=(\ee_{t})_{\sharp}\Pi)_{t\in [0,1]}\subset \cP(M)
$$ such that 
\begin{itemize}
\item $\mu_{1/2}=\vol_{g}(B^{g,E}_{r^{4}}(x))^{-1}\,  \vol_{g}\llcorner B^{g,E}_{r^{4}}(x)$,   
\item $\supp(\mu_{1})\subset \{\exp_{y}^{g}(r^{2}w): w\in T_{y}M\cap \cC, \, g(w,w)=-1 \}$
\end{itemize}
\noindent and which has convex/concave entropy in the following sense:
\begin{equation}\label{eq:RiceqTIntro}
\qquad \left| \tfrac{4}{r^{2}}\left[\Ent(\mu_{1}|\vol_{g})-2 \Ent(\mu_{1/2}|\vol_{g})+\Ent(\mu_{0}|\vol_{g}) \right] - \tilde{T}(v,v)\right|\le \epsilon(r)
\end{equation}

\item [(3)] There exists  $p\in (0,1)$ such that the  assertion as in  {\rm (2)}  holds true.
\end{enumerate}
\end{theorem}

\begin{figure}[ht]
	\centering
  \includegraphics[scale=0.5]{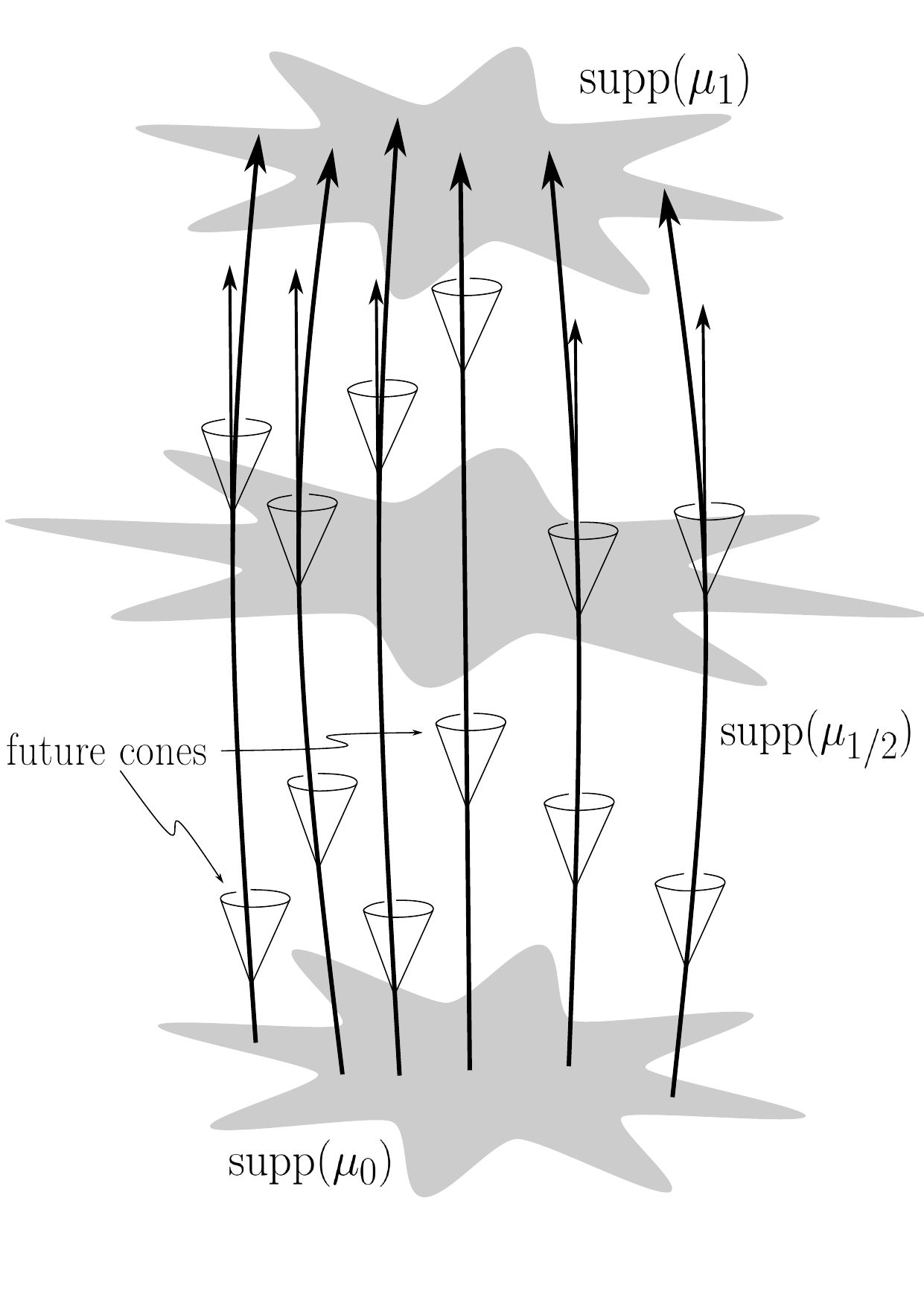}
	\caption{The transport in Theorem \ref{thm:RiccieqTIntro}}
	\label{fig1}
\end{figure}

\begin{remark}[On the regularity of the space-time]
For simplicity, in the paper we work with a smooth space-time. However, all the statements and proofs would be valid assuming that $M$ is a differentiable manifold endowed with a $C^{3}$-atlas and that  $g$ is a $C^2$-Lorentzian metric on $M$.
\end{remark}

\begin{remark}[A heuristic thermodynamic interpretation of Theorem \ref{thm:RiccieqTIntro}]
A curve $(\mu_{t})_{t\in [0,1]}\subset \cP(M)$ associated to a  $c_{p}$-optimal  dynamical plan  can be interpreted as  the evolution \footnote{strictly speaking $t$ is not the proper time, but only a variable parametrizing the evolution} of a distribution of gas passing through a given gas distribution  $\mu_{1/2}$ (that in Theorem \ref{thm:RiccieqTIntro} is assumed to be concentrated in the space-time near $x$).
 Theorem  \ref{thm:RiccieqTIntro} says that the Einstein equations can be equivalently formulated in terms of the convexity properties of the Bolzmann-Shannon entropy along such evolutions $(\mu_{t})_{t\in [0,1]}\subset \cP(M)$. 
Extrapolating a bit more,  we can say that the second law of thermodynamics (i.e. in a natural thermodynamic process, the sum of the entropies of the interacting thermodynamic systems decreases, due to our sign convention) concerns the \emph{first} derivative of the Bolzmann-Shannon entropy; gravitation (under the form of Ricci curvature) is instead related to the \emph{second} order derivative of the Bolzmann-Shannon entropy along a natural thermodynamic process.
\end{remark}

\begin{remark}[Disclaimer]
In Theorem \ref{thm:RiccieqTIntro} we are \emph{not claiming to solve} the general Einstein Equations via optimal transport; we are  instead \emph{proposing a novel formulation/characterization  of the solutions of the Einstein Equations based on optimal transport}, assuming the cosmological constant $\Lambda$ and the energy-momentum tensor $T$ being already given (this can be a bit controversial for a general $T$; however, the characterization is already new and interesting in the vacuum  case $T\equiv 0$ where there is no controversy). The aim is indeed to  bridge optimal transport and general relativity, with the goal of  stimulating fruitful connections between these two fascinating fields. In particular, optimal transport tools have been very successful to study Ricci curvature bounds in a  (low regularity) \emph{Riemannian and metric-measure} framework (see later in the introduction for the related literature) and it is thus natural to expect that optimal transport can be useful also in a low-regularity Lorentzian framework, where singularities play an important part in the theory; for example it is expected that, at least generically, singularities occur in black-hole interiors.
\end{remark}

For equivalent formulations of Theorem \ref{thm:RiccieqTIntro},  see Remark \ref{rem:Thmmu12} and Remark  \ref{rem:ThmRiemMeth}.

In the vacuum case $T\equiv 0$ with zero cosmological constant $\Lambda=0$, the Einstein equations read as
\begin{equation}\label{eq:EELambda0Intro}
\Ric \equiv 0,
\end{equation}
 for an $n$-dimensional space-time $(M,g,\cC)$. Specializing Theorem \ref{thm:RiccieqTIntro} with the choice $\tilde{T}=0$ (plus a small extra observation to sharpen the lower bound in \eqref{eq:Ric0OTIntro} from $-\epsilon(r)$ to $0$; moreover the same proof extends to $n=2$) gives the following optimal transport formulation of Einstein vacuum equations with zero cosmological constant.
 
\begin{corollary}\label{cor:Ricciflatintro}
Let $(M,g,\cC)$ be a space-time of dimension $n \geq 2$. Then the following assertions are equivalent:
\begin{enumerate}
\item[(1)]  $(M,g,\cC)$ satisfies the Einstein vacuum equations with zero cosmological constant, i.e.  $\Ric\equiv 0$.  
\item[(2)]  For every $p\in (0,1)$ and for every relatively  compact  open subset $E\subset\subset  \Int(\cC)$ there exist $R=R(E)\in (0,1)$ and a function $$\epsilon=\epsilon_{E}:(0,\infty)\to (0,\infty)\textrm{ with }\lim_{r\downarrow 0}\epsilon(r)/r^2=0 \textrm{ such that }
$$ 
$$
\forall x\in p_{TM\to M}(E) \textrm{ and }v\in T_{x}M\cap E \textrm{ with }g(v,v)=-R^{2}$$  the next assertion holds.
For every $r\in (0,R)$, setting $y=\exp^{g}_{x}(rv)$, there exists a regular $c_{p}$-optimal  dynamical plan $\Pi=\Pi(x,v,r)$ with associated curve of  probability measures 
$$\quad (\mu_{t}:=(\ee_{t})_{\sharp}\Pi)_{t\in [0,1]}\subset \cP(M)
$$ such that 
\begin{itemize}
\item $\mu_{1/2}=\vol_{g}(B^{g,E}_{r^{4}}(x))^{-1}\,  \vol_{g}\llcorner B^{g,E}_{r^{4}}(x)$,   
\item $\supp(\mu_{1})\subset \{\exp_{y}^{g}(r^{2}w): w\in T_{y}M\cap \cC, \, g(w,w)=-1 \}$
\end{itemize}
\noindent and which has almost affine entropy in the sense that
\begin{equation}\label{eq:Ric0OTIntro}
 0\leq \Ent(\mu_{1}|\vol_{g})-2 \Ent(\mu_{1/2}|\vol_{g})+\Ent(\mu_{0}|\vol_{g}) \leq  \epsilon(r).
\end{equation}
\item [(3)] There exists  $p\in (0,1)$ such that the  assertion as in  {\rm (2)}  holds true.
\end{enumerate}
\end{corollary}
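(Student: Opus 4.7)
The plan is to derive the corollary from Theorem \ref{thm:RiccieqTIntro} applied with the vacuum choice $T\equiv 0$, $\Lambda=0$ (so that the auxiliary tensor $\tilde T$ of \eqref{eq:deftildeTIntro} vanishes identically), and to supply two additional improvements: (a) sharpen the lower bound in \eqref{eq:Ric0OTIntro} from $-\epsilon(r)$ to exactly $0$; (b) remove the dimensional restriction $n\ge 3$.

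\emph{The easy implications.} The equivalence $(2)\Leftrightarrow(3)$ is immediate. For $(3)\Rightarrow(1)$ (in dimension $n\ge 3$) the two-sided bound $0\le \Ent(\mu_1|\vol_g)-2\Ent(\mu_{1/2}|\vol_g)+\Ent(\mu_0|\vol_g)\le \epsilon(r)$ implies in particular the absolute value bound of \eqref{eq:RiceqTIntro} with $\tilde T\equiv 0$, and Theorem \ref{thm:RiccieqTIntro} then yields $\Ric=\tilde T\equiv 0$.

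\emph{The implication $(1)\Rightarrow(2)$.} Given $\Ric\equiv 0$, one first uses Theorem \ref{thm:RiccieqTIntro} (in dimension $n\ge 3$) with $\tilde T\equiv 0$ to extract the existence of a regular $c_p$-optimal dynamical plan $\Pi=\Pi(x,v,r)$ with the prescribed $\mu_{1/2}$ and support of $\mu_1$, and with
\[
\Bigl|\Ent(\mu_1|\vol_g)-2\Ent(\mu_{1/2}|\vol_g)+\Ent(\mu_0|\vol_g)\Bigr|\le \tfrac{r^2}{4}\epsilon_E(r).
\]
This already provides the upper bound in \eqref{eq:Ric0OTIntro} after renaming $\tfrac{r^2}{4}\epsilon_E(r)$ as the new $\epsilon(r)$. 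The genuinely new content is the exact lower bound $\ge 0$. For this I would exploit that $\Ric\equiv 0$ is a \emph{nonnegative} lower bound on the timelike Ricci curvature, and invoke the infinitesimal Bochner/Jacobi identity underlying the proof of the main theorem: along a regular $c_p$-optimal plan with smooth Kantorovich potential $\phi$, the second time derivative of $t\mapsto \Ent(\mu_t|\vol_g)$ equals, up to kinematic factors involving $|\nabla_g\phi|_g$, the integral against $\Pi$ of $\Ric(\dot\gamma_t,\dot\gamma_t)$ plus a nonnegative Hessian-squared term of the form $\|\Hess_g\phi_t\|^2$ (suitably normalized). When $\Ric\equiv 0$ only the nonnegative term survives, so $t\mapsto \Ent(\mu_t|\vol_g)$ is genuinely convex, and its discrete second difference at $t=0,1/2,1$ is $\ge 0$.

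\emph{The case $n=2$.} The restriction $n\ge 3$ in Theorem \ref{thm:RiccieqTIntro} is imposed only because the algebraic reshuffling \eqref{eq:deftildeTIntro} divides by $n-2$. When $T\equiv 0$ and $\Lambda=0$ this reshuffling is vacuous, and the remaining ingredients used in the proof of the main theorem — convexity of $\cL_p$, construction of a smooth Kantorovich potential on a sufficiently small neighbourhood of $x$, the Jacobian/volume computation for $\exp^g$, and the second-order Taylor expansion of $\Ent(\mu_t|\vol_g)$ in $r$ — are pointwise differential-geometric identities valid in every dimension $n\ge 2$. One thus only has to revisit those steps and observe their dimensional independence.

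\emph{Main obstacle.} The delicate point is the sharpening of the lower bound to exactly $0$: the error terms in the Taylor expansion leading to \eqref{eq:RiceqTIntro} must be shown to be \emph{one-sided} under $\Ric\equiv 0$, rather than merely $O(r^{2}\epsilon(r))$ of uncontrolled sign. I expect this to require either a Lorentzian Bishop--Gromov-type volume comparison for the cones $B^{g,E}_{r}(x)$ under the pointwise Ricci-nonnegativity assumption, or a clean $K$-convexity statement (with $K=0$) for $\Ent(\cdot|\vol_g)$ along regular $c_p$-optimal plans with the prescribed endpoints, in the spirit of the displacement convexity of entropy à la Otto--Villani and Cordero-Erausquin--McCann--Schmuckenschl\"ager, transferred to the present Lorentzian/$p\in(0,1)$ setting.
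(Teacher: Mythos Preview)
Your proposal is correct and follows the paper's route exactly: specialize Theorem~\ref{thm:RiccieqTIntro} with $\tilde T\equiv 0$, then sharpen the lower bound to $0$ via displacement convexity, and note that the $n\ge 3$ restriction is only an artifact of the trace manipulation in~\eqref{eq:deftildeTIntro}.

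The one place you hedge unnecessarily is your ``Main obstacle'' paragraph. The sharpening to $\ge 0$ does not require any new Bishop--Gromov-type comparison or a control of the sign of error terms in a Taylor expansion. The ``clean $K$-convexity statement with $K=0$'' you ask for at the end is precisely what you already sketched two paragraphs earlier, and it is proved in the paper as Theorem~\ref{thm:RiccigeqT} (specialized to $\tilde T=0$, i.e.\ Corollary~\ref{thm:RicciLowerBound} with $K=0$): if $\Ric\ge 0$ on causal vectors, then $t\mapsto\Ent(\mu_t|\vol_g)$ is convex along \emph{every} regular $c_p$-optimal plan, in particular along the specific plan $\Pi(x,v,r)$ produced by the upper-bound argument. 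Evaluating convexity at $t=0,\tfrac12,1$ gives the exact lower bound $0$ with no error. So your own first argument was already complete; the obstacle you flagged does not arise.
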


\subsection{Outline of the argument}
As already mentioned, the Einstein Equations can be written as $\Ric=\tilde{T}$ where $\tilde{T}$ was defined in \eqref{eq:deftildeTIntro}, see Lemma~\ref{Lem-T}.
The optimal transport formulation of the Einstein equations will consist separately  of an optimal transport characterization of the two inequalities  
\begin{equation}\label{eq:RicgeqTintro}
\Ric\geq \tilde{T} 
\end{equation}
and 
\begin{equation}\label{eq:RicleqTintro}
\Ric\leq \tilde{T}, 
\end{equation}
respectively. 
The optimal transport characterization of the lower bound \eqref{eq:RicgeqTintro} will be achieved in Theorem  \ref{thm:RiccigeqT} and consists in showing that \eqref{eq:RicgeqTintro}  is equivalent to a \emph{convexity} property of the Bolzmann-Shannon entropy along  \emph{every} regular $c_{p}$-optimal dynamical plan.  The optimal transport characterization of the upper bound \eqref{eq:RicleqTintro}  will be achieved in Theorem  \ref{thm:RiccileqT} and consists in  showing that \eqref{eq:RicleqTintro}  is equivalent to the \emph{existence} of a large family of regular $c_{p}$-optimal dynamical plans (roughly the ones given by exponentiating the gradient of a smooth Kantorovich potential with Hessian vanishing at a given point)
along which the Bolzmann-Shannon entropy satisfies the corresponding \emph{concavity} condition.

Important ingredients in the proofs will be the following. In Theorem  \ref{thm:RiccigeqT}, for proving that Ricci lower bounds imply convexity properties of the entropy, we will perform Jacobi fields computations relating the Ricci curvature with the Jacobian of the change of coordinates of the optimal transport map (see Proposition \ref{prop:VolDist}); in order to establish the converse implication we will argue by contradiction via constructing  $c_{p}$-optimal dynamical plans very localized in the space-time  (Lemma  \ref{lem:SmoothKP}).
\\In Theorem  \ref{thm:RiccigeqT} we will consider the special class of regular $c_{p}$-optimal dynamical plans constructed in Lemma  \ref{lem:SmoothKP}, roughly the ones given by exponentiating the gradient of a smooth Kantorovich potential with Hessian vanishing at a given point $x\in M$.  For proving that Ricci upper bounds imply concavity properties of the entropy, we will need to establish the Hamilton-Jacobi equation satisfied by the evolved Kantorovich potentials (Proposition \ref{thm:HopfLax}) and a non-linear Bochner formula involving the $p$-Box operator (Proposition \ref{prop:qBochner}), the Lorentzian counterpart of the $p$-Laplacian. In order to show the converse implication we will argue by contradiction using Theorem  \ref{thm:RiccigeqT}.

\subsection{An Example. FLRW Spacetimes}\label{Ex-FLRW}
We illustrate Theorem \ref{thm:RiccieqTIntro} for the class of Friedmann-Lema\^itre-Robertson-Walker spacetimes 
(short FLRW spacetimes), a group of cosmological models well known in general relativity. See \cite[Chapter 12]{oneill} for a discussion of the geometry 
in the case $n=4$. 

FLRW spacetimes are of the form 
$$(M,g)=(I\times \Sigma, -ds^2+a^2(s)\sigma),$$
where $I\subset \R$ is an interval, $a\colon I\to (0,\infty)$ is smooth, and $(\Sigma,\sigma)$ is a Riemannian manifold with constant sectional curvature 
$k\in\{-1,0,1\}$. The Ricci and scalar curvature are given by 
$$\Ric=-(n-1)\frac{\ddot a}{a}ds^2+\left[\frac{\ddot a}{a}+(n-2)\left(\frac{\dot{a}^2+k}{a^2}\right)\right]a^2\sigma$$
and 
$${\rm Scal}=2(n-1)\frac{\ddot a}{a}+(n-1)(n-2)\frac{\dot{a}^2+k}{a^2},$$
respectively, where $\dot{a}:=\frac{da}{ds}$. The stress-energy tensor is thus (assuming $\Lambda =0$)
\begin{align*}
8\pi T&=\Ric -\frac{1}{2}{\rm Scal}\, g\\
&=\frac{(n-1)(n-2)}{2}\frac{\dot{a}^2+k}{a^2}ds^2-\left[(n-2)\frac{\ddot a}{a}+\frac{(n-2)(n-3)}{2}\frac{\dot{a}^2+k}{a^2}\right]a^2\sigma.
\end{align*}

The foliation 
$$\mathfrak{O}:=\{s\mapsto (s,\underline{x})\}_{\underline{x}\in\Sigma}$$ 
is a geodesic foliation by $c_p$-minimal geodesics. The orthogonal complement $\partial_s^\perp=T\Sigma$ with respect to $g$  is integrable. 
Consider the projection 
$$S\colon M=I\times \Sigma \to I$$ 
and for $r>0$ the function 
$$\phi\colon M\to \R,\quad x\mapsto r^{p-1}S(x).$$
It is easy to see that 
$$\nabla^q_g\phi(x):=-|\nabla_{g}\phi|_{g}^{q-2} \nabla_{g}\phi(x)=r\partial_s.$$
 For the $c_p$-transform we have (see Section \ref{SS:BasicsOT})
\begin{align*}
\phi^{c_{p}}(s,\underline{y})&=\inf_{x\in M} c_{p}(x,(s,\underline{y}))-\phi(x)\\
&=\inf_{s'<s} c_{p}((s',\underline{y}),(s,\underline{y}))-\phi(s',\underline{y})\\
&=\inf_{s'<s} -\frac{1}{p}(s-s')^p-r^{p-1}s' =\frac{p-1}{p}r^p-r^{p-1}s,
\end{align*}
where the second equality follows from the fact that the geodesics in $\mathfrak{O}$ minimize $c_p$ to the level sets of $\phi$. It follows that 
$$\phi^{c_p}(s,\underline{y})+\phi(s-r,\underline{y})=-\frac{1}{p}r^p=c_p((s-r,\underline{y}),(s,\underline{y})),$$
i.e.  $\partial^{c_p} \phi(x)=\{\exp_x(\nabla^q_g\phi(x))\}$ for all $x\in M$ whenever the right hand side is well defined (see Section \ref{SS:BasicsOT} for the 
definition).

It now follows by standard transportation theory (see for instance \cite[Theorem 1.13]{AGUser}) that for a Borel probability measure $\mu_{1/2}$ on 
$I\times \Sigma$ the family $\left(\mu_t:=(\Psi^t_{1/2})_\sharp \mu_{1/2} \right)_{t\in [0,1]}$, where 
$$\Psi^t_{1/2}\colon I\times \Sigma\to I\times \Sigma,\quad (s,\underline{x})\mapsto \left(s+r\left(t-\frac{1}{2}\right),\underline{x}\right)$$
defines a $c_p$-optimal dynamical plan as long as it is defined in accordance with the notation in \eqref{def_kant_pot} and $\phi$ is a smooth 
Kantorovich potential for $(\mu_t)_{t\in [0,1]}$.

If $\mu_{1/2}\ll \vol_{g}$ with density $\rho_{1/2}\in C_{c}(M)$, we get (compare with the proof of Theorem \ref{thm:RiccigeqT})
\begin{align*}
\Ent(\mu_t|\vol_{g})&=\int_{M} \log\rho_{t}\left(y\right)\,d\mu_{t}\left(y\right)= \int_{M} \log\rho_{t}(\Psi_{1/2}^t(x))\,d\mu_{1/2}(x)\\
&= \int_{M} \log[\rho_{1/2}(x)(\Det_g (D\Psi_{1/2}^t)(x))^{-1}]\,d\mu_{1/2}(x)\\
&=\Ent(\mu_{1/2}|\vol_{g})-\int_M \log[\Det_g (D\Psi_{1/2}^t)(x))]\,d\mu_{1/2}(x).
\end{align*}
We have $\Det_g (D\Psi_{1/2}^t)((s,\underline{x})))=\frac{a^{n-1}(s+r(t-1/2))}{a^{n-1}(s)}$ and thus obtain 
\begin{equation}\label{ERic}
-\frac{d^2}{dt^2}\log[\Det_g (D\Psi_{1/2}^t)(x))]=-(n-1)\frac{\ddot{a}a-\dot a^2}{a^2}r^2= \Ric(r\partial_s,r\partial_s)+(n-1) \frac{\dot a^2}{a^2}r^2.
\end{equation}
Neglecting the term $\frac{\dot a^2}{a^2}\geq 0$ we conclude (compare with the proof of Proposition \ref{prop:VolDist})
\begin{equation}\label{FLRW:Ricbelow}
\begin{split}
\frac{d^2}{dt^2}\Ent(\mu_t|\vol_{g})&=-\frac{d^2}{dt^2}\int_M \log[\Det_g (D\Psi_{1/2}^t)(x))]\,d\mu_{1/2}(x)\\
&\ge \int_M \Ric(r\partial_s,r\partial_s)_{\Psi_{1/2}^t(x)}\,d\mu_{1/2}(x),
\end{split}
\end{equation}
which implies
$$\tfrac{4}{r^{2}}\left[\Ent(\mu_{1}|\vol_{g})-2 \Ent(\mu_{1/2}|\vol_{g})+\Ent(\mu_{0}|\vol_{g}) \right] \ge \int_M\Ric(\partial_s,\partial_s)d\mu_{1/2}$$
 for $r\to 0$, i.e. one side of \eqref{eq:RiceqTIntro}.

Note that the gap in \eqref{FLRW:Ricbelow} is
$$(n-1)r^2 \int_M \frac{\dot a^2}{a^2} \,d\mu_{1/2}.$$
From this we see that the bound from above 
\begin{equation}\label{FLRW:Ricabove}
\begin{split}
\tfrac{4}{r^{2}}[\Ent(\mu_{1}|\vol_{g})-2 \Ent(\mu_{1/2}|\vol_{g})+&\Ent(\mu_{0}|\vol_{g})] \\
&\le \int_M\Ric(\partial_s,\partial_s)d\mu_{1/2}+ \epsilon(r),
\end{split}
\end{equation}
in \eqref{eq:RiceqTIntro} for the aforementioned transports 
holds for $r\to0$ if  $\mu_{1/2}$ is concentrated on $\{(s,\underline{x})|\;\dot{a}(s)=0\}$ or, more generally, if $\mu_{1/2}=\mu_{1/2}^{r}$ satisfies  $\lim_{r\to 0} \int_M \frac{\dot a^2}{a^2} \,d\mu^{r}_{1/2} = 0$.

The Levi-Civita connection $\nabla$ of $g$ satisfies 
$$\nabla_X \partial_s=\nabla_{\partial_s} X=\frac{\dot a}{a}X$$
for all vector fields $X$ tangent to $\Sigma$. Thus the Hessian of $\phi$ is given by
\begin{equation}\label{eq:Hessphia2}
\Hess_\phi=r^{p-1}\Hess_S=r^{p-1}\nabla_. \nabla_g S=-r^{p-1}\nabla_. \partial_s=-r^{p-1}\frac{\dot{a}}{a}(\id- \partial_s\otimes ds).
\end{equation}
It vanishes at $(s,\underline{x})$ if and only if $\dot{a}(s)=0$. Thus the inequality \eqref{FLRW:Ricabove} follows for these transports if the Hessian 
of $\phi$ vanishes  on $\supp (\mu_{1/2})$ or, more generally, if   $ \lim_{r\to 0} \int_{M}  \|  r^{1-p} \Hess_{\phi} \|^{2} d\mu^{r}_{1/2}=0$.

Finally we illustrate the theory laid out in Appendix \ref{AppB} via warping functions $a\colon I\to (0,\infty)$ of regularity below $C^2$. Compare with
\cite{Graf20} for related results on metrics of low regularity. 

Consider a probability measure $\nu_{1/2}\ll \vol^\sigma$ on $\Sigma$ and set
\[
\mu_{1/2}:=\frac{1}{s_1-s_0}\mathcal{L}^1|_{[s_0,s_1]}\otimes \nu_{1/2}
\]
for $s_0,s_1\in I$ with $s_0<s_1$ and $\mathcal{L}^1$ the $1$-dimensional Lebesgue measure. For $\mu_t:=(\Psi_{1/2}^t)_\sharp \mu_{1/2}$ we have
\begin{align*}
\frac{4}{r^2}[\Ent(\mu_{1}|\vol_{g})-2 \Ent(\mu_{1/2}|\vol_{g})+&\Ent(\mu_{0}|\vol_{g})]\\
=-\frac{4}{r^2(s_1-s_0)}&\int_{s_0}^{s_1}\log\left(\frac{a^{n-1}(s+\frac{r}{2})a^{n-1}(s-\frac{r}{2})}{ a^{2n-2}(s)}\right)ds
\end{align*}
Assuming $s_1-s_0\ll r$ and $r>0$ sufficiently small (i.e. the setting of Theorem \ref{thm:RiccieqTIntro}),  since $a$ is continuous,  we  obtain:
\begin{align*}
\frac{4}{r^2}[\Ent(\mu_{1}|\vol_{g})-2 \Ent(\mu_{1/2}|\vol_{g})+&\Ent(\mu_{0}|\vol_{g})]\\
=-\frac{4}{r^2}&\log\left(\frac{a^{n-1}(s+\frac{r}{2})a^{n-1}(s-\frac{r}{2})}{a^{2n-2}(s)}\right)+ \frac{\ve(r)}{r^2},
\end{align*}
 for some function $\ve(r)\to 0$ as $r\to 0$.
As an example, we discuss the case of the following $C^{1,1}$-warping function
\[
a\colon I\to \R,\quad a(s):=\begin{cases} \lambda_+ s^2+1,&\; s\ge 0\\ \lambda_- s^2+1,&\; s<0
\end{cases}
\]
for $\lambda_-, \lambda_+\in \R$.

Ignoring terms of higher order near $s=0$ we get
\begin{align*}
\frac{4}{r^2}[\Ent(\mu_{1}|\vol_{g})&-2 \Ent(\mu_{1/2}|\vol_{g})+\Ent(\mu_{0}|\vol_{g})]\\
=&\begin{cases}
-(n-1)[\lambda_+(\frac{2s}{r}+1)^2+\lambda_-(\frac{2s}{r}-1)^2-2\lambda_+(\frac{2s}{r})^2],& \; s\ge 0\\
-(n-1)[\lambda_+(\frac{2s}{r}+1)^2+\lambda_-(\frac{2s}{r}-1)^2-2\lambda_-(\frac{2s}{r})^2],& \; s\le 0
\end{cases}
\end{align*}
for $r\to 0$. It is now easy to see that the  possible accumulation points of the right hand side for $r\to 0$ and $s\in [-\frac{r}{2},\frac{r}{2}]$ lie between $-2(n-1)\lambda_+=\lim_{s\downarrow0}
\Ric(\partial_s,\partial_s)$ and $-2(n-1)\lambda_-=\lim_{s\uparrow0}\Ric(\partial_s,\partial_s)$. Furthermore, every value in that interval is  an accumulation point for the right hand side, as $r\to 0$.
In case the warping function lies in $C^2$, we get from \eqref{ERic} the asymptotic formula:
\begin{equation}\label{SynthRic}
\frac{4}{r^2}[\Ent(\mu_{1}|\vol_{g})-2 \Ent(\mu_{1/2}|\vol_{g})+\Ent(\mu_{0}|\vol_{g})]- (n-1) \int_M \frac{\dot a^2}{a^2} \,d\mu_{1/2}
\end{equation}
for $\Ric(\partial_s,\partial_s)$ when $r\to 0$. The second integral is well defined for $a\in W^{1,2}(I)$.  Therefore we can use \eqref{SynthRic} as a definition for $\Ric(\partial_s,\partial_s)$ in this case.
 This is indeed the spirit of the synthetic
bounds on the timelike Ricci curvature given in Definition \ref{def:RUB} and Definition \ref{def:TCD(KN)}.

\subsection{Related literature} 

\subsubsection{Ricci curvature via optimal transport in  Riemannian setting}
In the Riemannian framework, a  line of
research pioneered by McCann \cite{68MC}, Cordero-Erausquin-McCann-Schmuckenschl\"ager \cite{CMS,CMS2}, Otto-Villani \cite{OV} and von Renesse-Sturm, has culminated in a characterization
of Ricci-curvature lower bounds (by a constant $K\in \R$) involving only the displacement convexity of
certain information-theoretic entropies \cite{SVR}. This in turn led Sturm \cite{sturm:I, sturm:II} and
independently Lott-Villani \cite{lottvillani:metric} to develop a theory for
lower Ricci curvature bounds in a non-smooth metric-measure space setting. The theory of such spaces has seen a very fast development in the last years, see e.g.  \cite{AGMR12,  AGS, AGS11a, AMS, BS10, BS18, CaMi, CaMo, EKS, gigli:laplacian,GMS2013,MN}. An approach to the complementary upper
bounds on the Ricci tensor (again by a constant $K'\in \R$) has been recently proposed by Naber \cite{69NabUBR} (see also Haslhofer-Naber \cite{48HN}) in  terms of functional inequalities on  path spaces and martingales, and by Sturm \cite{StUB} (see also Erbar-Sturm \cite{ErSt18}) in terms of contraction/expansion rate estimates of the heat flow and in terms of displacement \emph{concavity} of the Shannon-Bolzmann entropy.  The Lorentzian time-like Ricci upper bounds of this paper have been inspired in particular by the work of Sturm  \cite{StUB}.

\subsubsection{Optimal transport in  Lorentzian setting}
The optimal transport problem in Lorentzian geometry was first proposed by Brenier \cite{Br03} and further investigated  in \cite{BP13, suhr, kellsuhr}.  An intriguing physical motivation for studying the optimal transport problem in Lorentzian setting called the ``early universe reconstruction problem'' \cite{BFHLMMS03,FMMS02}.  The Lorentzian  cost $C_{p}$, for $p\in (0,1)$, was  proposed by Eckstein-Miller \cite{EM17} and thoroughly studied by Mc Cann \cite{McCann18} very recently. In the same paper  \cite{McCann18}, Mc Cann gave an optimal transport formulation of the strong energy condition $\Ric\geq 0$ of Penrose-Hawking \cite{Pen,Haw66, HawPen70} in terms of displacement convexity of the Shannon-Bolzmann entropy under the assumption that the space time is globally hyperbolic. 

We learned of  the work of Mc Cann \cite{McCann18} when we were already in the final stages of writing the present paper. Though both papers (inspired by the aforementioned Riemannian setting) are based on the idea of analyzing convexity properties of entropy  functionals on the space of probability measures endowed with the cost $C_{p}$, $p\in (0,1)$, the two approaches are largely independent: while Mc Cann develops a general theory of optimal transportation in globally hyperbolic space times focusing on the strong energy condition $\Ric\geq 0$,  in this paper we decided to take the quickest path in order to reach our goal of giving an optimal transport formulation of the full Einstein's equations. Compared to \cite{McCann18},  in the present paper we remove the assumption of global hyperbolicity on the space-time, we extend the optimal transport formulation to any lower bound of the type $\Ric\geq \tilde{T}$ for any symmetric bilinear form $\tilde{T}$, and we also characterize general upper bounds $\Ric\leq \tilde{T}$.

\subsubsection{Physics literature}
The existence of strong connections between thermodynamics and general relativity is not new in the physics literature; it has its origins at least in the work Bekenstein \cite{Bek} and Hawking with collaborators \cite{Haw} in the mid-1970s  about the black hole thermodynamics.  These works inspired a new research field in theoretical physics, called entropic gravity (also known as emergent gravity), asserting  that gravity is an entropic force rather than a fundamental interaction. 
Let us give a  brief account.  In 1995  Jacobson \cite{Jac} derived the Einstein equations from the proportionality of entropy and horizon area of a black hole, exploiting the fundamental relation $\delta Q=T \, \delta S$  linking heat $Q$, temperature $T$ and entropy $S$.  Subsequently, other physicists, most notably Padmanabhan (see for instance the recent survey \cite{Pad}), have been exploring links between gravity and entropy.

More recently, in 2011 Verlinde \cite{Ver} proposed a heuristic argument suggesting that (Newtonian) gravity can be identified with an entropic force caused by changes in the information associated with the positions of material bodies. A
relativistic generalization of those arguments  leads to the Einstein equations.

The optimal transport formulation of Einstein equations obtained in the present paper involving the Shannon-Bolzmann entropy can be seen as an additional strong connection between  general relativity and thermodynamics/information theory. It would be interesting to explore this relationship further.

\subsection*{Acknowledgement}
The authors wish to thank Christina Sormani and the anonymous referee for several comments that  improved the exposition of the paper.

\section{Preliminaries}
\subsection{Some basics of Lorentzian geometry}
Let $M$ be a smooth manifold of dimension $n\geq 2$. It is convenient to fix a complete Riemannian metric $h$ on $M$. The norm $|\cdot|$ on $T_{x}M$ and the distance $\dist(\cdot,\cdot):M\times M\to \R^{+}$ are understood to be induced by $h$, unless otherwise specified. Recall that $h$ induces a Riemannian metric on $TM$. Distances on $TM$ are understood to the induced by such a metric. The metric ball around $x\in M$ with radius $r$, with respect to $h$, is denoted by $B_{r}^{h}(x)$ or simply by $B_{r}(x)$.

A \emph{Lorentzian metric} $g$ on $M$ is a smooth  $(0,2)$-tensor field such that
$$g|_{x}:T_{x}M\times T_{x}M \to \R$$
is symmetric and non-degenerate with signature $(-,+,\dots,+)$ for all $x\in M$.
It is well known that, if $M$ is compact,  the vanishing of the Euler characteristic of $M$ is equivalent to the existence of a Lorentzian metric;
on the other hand, any \emph{non-compact} manifold admits a Lorentzian metric.
 A non-zero tangent vector $v\in T_{x}M$ is called 
 \begin{itemize}
\item \emph{Time-like}: if $g(v,v)<0$,
\item \emph{Light-like} (or \emph{null}): if $g(v,v)=0$ as well as $v\neq 0$,
\item \emph{Spacelike}: if $g(v,v)>0$ or $v=0$. 
\end{itemize}
A non-zero tangent vector $v\in T_{x}M$ which is either time-like or light-like, i.e.  $g(v,v)\leq 0$ and $v\neq 0$, is called \emph{causal} (or \emph{non-spacelike}).
A Lorentzian manifold $(M,g)$ is said to be \emph{time-oriented} if $M$ admits a continuous no-where vanishing time-like vector field $X$. The vector field $X$ induces a partition on the set of causal vectors, into two equivalence classes: 
\begin{itemize}
\item The \emph{future pointing} tangent vectors  $v$ for which $g(X,v)< 0$,
\item The  \emph{past pointing}  tangent vectors  $v$ for which $g(X,v)> 0$. 
\end{itemize}
The closure of the set of future pointing time-like vectors is denoted 
$$
\cC={\rm Cl}(\{ v \in TM:\,\, g(v,v)<0 \textrm{ and } g(X,v)<0\})\subset TM.
$$
 Note that the fiber  $\cC_{x}:=\cC \cap  T_{x}M$ is a closed convex cone and the open interior $\Int(\cC)$ is a connected component of $\{v:g(v,v)<0\}$.
A time-oriented Lorentzian manifold $(M,g,\cC)$ is called a \emph{space-time}.

 An absolutely continuous curve $\gamma:I\to M$ is called ($\cC$)-\emph{causal} if  $\dot{\gamma}_{t}\in \cC$ for every differentiability point $t\in I$. A causal curve $\gamma:I\to M$ is called  \emph{time-like} if  for every $s\in I$ there exist $\ve, \delta>0$ such that $\dist(\dot\gamma_{t},\partial \cC)\geq \ve |\dot\gamma_{t}|$ for every $t\in I$ for which $\dot\gamma_{t}$ exists and $|s-t|<\delta$.  In \cite[Section 2.2]{BS1} time-like curves are defined in terms of the Clarke differential of a 
 Lipschitz curve. Whereas the definition via the Clarke differential is probably more satisfying from a conceptual point of view, the definition given here is easier to state. 
 All relevant sets and curves used below are independent of the definition, see \cite[Lemma 2.11]{BS1} and Proposition \ref{Propminimizer}, though.

We denote by $J^{+}(x)$ (resp. $J^{-}(x)$) the set of points $y\in M$ such that there exists a causal curve with initial point $x$ (resp. $y$) and final point  $y$ (resp. $x$), i.e. the causal future (resp. past) of $x$. 
The sets $I^{\pm}(x)$ are defined  analogously by replacing causal curves by time-like ones. The sets $I^{\pm}(p)$ are always open in any space-time, on the other hand the sets $J^{\pm}(p)$ are in general neither closed nor open.

For a subset $A\subset M$, define $J^{\pm}(A):=\cup_{x\in A} J^{\pm}(x)$, moreover set
\begin{equation}
J^{+}:=\{(x,y)\in M\times M\,:\, y\in J^{+}(x)\}.
\end{equation}

\subsection{The Lagrangian $\cL_{p}$, the action   $\cA_{p}$ and the cost $c_{p}$}
On a space-time  $(M,g,\cC)$ consider, for any $p\in (0,1)$, the following Lagrangian on $TM$:
\begin{equation}\label{def:pLangrangianLp}
\cL_{p}(v):=
\begin{cases}
-\frac{1}{p}(-g(v,v))^{\frac{p}{2}} & \text{ if $v\in \cC$}\\
+\infty &  \text{otherwise.}
\end{cases}
\end{equation}

The following fact appears in \cite[Lemma 3.1]{McCann18}. We provide a proof for the readers convenience. 
\begin{lemma}\label{lem:Lpconvex}
The function $\cL_{p}$ is fiberwise convex, finite (and non-positive)
on its domain and positive homogenous of degree $p$. Moreover $\cL_{p}$ is smooth and fiberwise strictly convex on $\Int(\cC)$.
\end{lemma}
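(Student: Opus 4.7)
The plan is to verify the three elementary properties by direct computation, and then handle convexity on the whole cone $\cC_{x}$ and strict convexity on $\Int(\cC_{x})$ by separate arguments, since the former concerns the null boundary where smoothness breaks down while the latter reduces to a Hessian computation.

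For the elementary properties: positive homogeneity of degree $p$ is immediate from the quadratic homogeneity of $g$, since for $\lambda>0$ and $v\in\cC$, $\cL_{p}(\lambda v)=-\tfrac{1}{p}(\lambda^{2}(-g(v,v)))^{p/2}=\lambda^{p}\cL_{p}(v)$. Non-positivity and finiteness on $\cC$ follow from $-g(v,v)\ge 0$ there, so $\cL_{p}(v)\in(-\infty,0]$. Smoothness on $\Int(\cC)$ is clear because $-g(v,v)>0$ on the open cone and $t\mapsto t^{p/2}$ is smooth on $(0,\infty)$.

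For fiberwise convexity on the closed cone $\cC_{x}:=\cC\cap T_{x}M$, the key ingredient is the reverse triangle inequality of Lorentzian geometry: for every $v_{0},v_{1}\in\cC_{x}$,
\[
\sqrt{-g(v_{0}+v_{1},\,v_{0}+v_{1})}\;\ge\;\sqrt{-g(v_{0},v_{0})}+\sqrt{-g(v_{1},v_{1})},
\]
with equality iff $v_{0}$ and $v_{1}$ are positively collinear. This says $F(v):=\sqrt{-g(v,v)}$ is positively $1$-homogeneous and superadditive on the convex cone $\cC_{x}$, hence concave there. Writing $\cL_{p}=\phi\circ F$ with $\phi(t):=-t^{p}/p$, for $p\in(0,1)$ one checks $\phi$ is convex and non-increasing on $[0,\infty)$. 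Composing a convex non-increasing function with a concave function yields a convex function, which gives convexity of $\cL_{p}$ on $\cC_{x}$.

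For strict convexity on $\Int(\cC_{x})$ the cleanest route is a direct Hessian calculation. Setting $F(v):=-g(v,v)>0$, two differentiations of $\cL_{p}=-F^{p/2}/p$ yield
\[
D^{2}\cL_{p}(v)[w,w]=(2-p)\,F^{p/2-2}\,g(v,w)^{2}+F^{p/2-1}\,g(w,w),
\]
and I would then decompose $w=\alpha v+w^{\perp}$ with $g(v,w^{\perp})=0$. Since $v$ is time-like, $\{v\}^{\perp}$ is positive definite for $g$, so $g(w^{\perp},w^{\perp})\ge 0$ with equality iff $w^{\perp}=0$. A short simplification produces
\[
D^{2}\cL_{p}(v)[w,w]=(1-p)\,\alpha^{2}\,F^{p/2}+F^{p/2-1}\,g(w^{\perp},w^{\perp}),
\]
a sum of non-negative terms vanishing only for $\alpha=0$ and $w^{\perp}=0$, i.e.\ $w=0$. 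The main subtlety is that this Hessian argument degenerates on the null boundary where $F=0$, so it cannot by itself deliver convexity on the closed cone; the reverse triangle inequality is exactly what bridges the interior computation to the boundary.
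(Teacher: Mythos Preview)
Your proof is correct. The Hessian computation and the orthogonal decomposition $w=\alpha v+w^{\perp}$ for strict convexity on $\Int(\cC_{x})$ are essentially identical to the paper's argument (which writes the same formula in the equivalent form $(-g(v,v))^{(p-4)/2}\big(-g(w^{\perp},w^{\perp})g(v,v)+(1-p)g(v,w^{\parallel})^{2}\big)$). You go a step further than the paper by explicitly treating convexity on the full closed cone $\cC_{x}$ via the reverse triangle inequality---concavity of $F(v)=\sqrt{-g(v,v)}$ on $\cC_{x}$ composed with the convex non-increasing $\phi(t)=-t^{p}/p$; the paper's proof only checks $D^{2}\cL_{p}>0$ on the open cone and leaves the extension to the light-like boundary implicit (it follows, of course, from continuity of $\cL_{p}$ on $\cC_{x}$, but this is not said). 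So your argument is not a different route so much as a more complete version of the same one.
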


\begin{proof}
It is clear from its very definition that the restriction of $\cL_{p}$ to $\Int(\cC)$ is smooth. A direct computation gives
\begin{align}
\frac{\partial \cL_{p}}{\partial v^{i}} &= (-g(v,v))^{\frac{p-2}{2}} g_{ik} v^{k}, \quad i=1,\dots,n \label{eq:dldvi}\\
\frac{\partial^{2} \cL_{p}}{\partial v^{i} \partial v^{j}} &= (-g(v,v))^{\frac{p-4}{2}} \left(-g(v,v) g_{ij} + (2-p) g_{ik} v^{k} g_{jl} v^{l} \right), \quad i,j=1,\dots,n  \label{eq:d2ldvidvj}.
\end{align}
Fix $v\in \Int(\cC)$. Decompose $w\in T_xM$ into $w^\parallel$ the part parallel to $v$ and $w^\perp$ the part orthogonal to $v$, all with respect to $g$. 
Then we have
\begin{align}
D^{2}_{vv} \cL_{p}(w,w)&=  (-g(v,v))^{\frac{p-4}{2}}  \Big( -g(w^\perp,w^\perp) g(v,v) -g(w^\parallel,w^\parallel) g(v,v) \\
& \qquad \qquad \qquad \qquad \quad  + (2-p) g(v,w^\parallel)^{2} \Big)\\
&=  (-g(v,v))^{\frac{p-4}{2}}  \left( -g(w^\perp,w^\perp) g(v,v) + (1-p) g(v,w^\parallel)^{2} \right).\label{eq:d2cLpww}
\end{align}
Since $g(w^\perp,w^\perp)\ge 0$ and $p<1$ we have 
\begin{align*}
D^{2}_{vv} \cL_{p}(w,w)>0
\end{align*} 
for $w\neq 0$.
\end{proof}

We define the Lagrangian action $\cA_{p}$ associated to $\cL_{p}$ as follows:
\begin{equation}\label{def:pActionAp}
\cA_{p}(\gamma):=\int_{0}^{1} \cL_{p}(\dot{\gamma}_{t}) dt \in (-\infty, 0]\cup\{+\infty\}.
\end{equation}
Note that if $\cA_p(\gamma)\in \R$, then $\gamma$ is causal.
A causal curve $\gamma:[0,1]\to M$ is an \emph{$\cA_{p}$-minimizer} between its endpoints $x,y\in M$ if
$$
\cA_{p}(\gamma)=\inf\{\cA_{p}(\eta)\,:\, \eta\in {\rm AC}([0,1],M), \eta_{0}=x, \eta_{1}=y\}.
$$ 

\begin{lemma}\label{lemma_minimizer}
Any $\cA_p$-minimizer with finite action is either a future pointing time-like geodesic of $(M,g)$ or a future pointing light-like pregeodesic of $(M,g)$, i.e. an orientation preserving 
reparameterization is a future pointing light-like geodesic of $(M,g)$.
\end{lemma}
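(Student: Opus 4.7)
The plan is to reduce the variational problem for $\cA_p$ to the classical problem of Lorentz-length maximization via Jensen's inequality, and then to invoke standard Lorentzian geometry in the two possible sub-cases. The key analytic ingredient is that $s\mapsto s^p$ is strictly concave on $[0,\infty)$ for $p\in(0,1)$, which is exactly the regime of interest.

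First I write the Lorentzian speed of a causal curve $\gamma$ as $f(t):=\sqrt{-g(\dot\gamma_t,\dot\gamma_t)}\ge 0$, so that $\cA_p(\gamma)=-\tfrac{1}{p}\int_0^1 f^p\,dt$ and $L_g(\gamma)=\int_0^1 f\,dt$. Jensen's inequality applied to the probability measure $dt$ on $[0,1]$ then gives
$$
\cA_p(\gamma)\;\ge\;-\tfrac{1}{p}\,L_g(\gamma)^p,
$$
with equality if and only if $f$ is constant a.e. The right-hand side is reparametrization-invariant, and for any causal curve $\eta$ from $x$ to $y$ the value $-\tfrac{1}{p}L_g(\eta)^p$ is attained (or at least approached, in the presence of null subsegments) by constant-speed reparametrizations of $\eta$. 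Comparing $\gamma$ with its own (almost) constant-speed reparametrizations and using minimality of $\gamma$ forces equality in Jensen above, so $f\equiv c$ is constant a.e.; comparing $\gamma$ with reparametrizations of a general competitor $\eta$ then gives $L_g(\gamma)\ge L_g(\eta)$, so that $\gamma$ maximizes Lorentz length among causal curves joining $x$ to $y$.

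I then split according to the value of $c$. If $c>0$ then $g(\dot\gamma_t,\dot\gamma_t)=-c^2<0$ a.e., so $\gamma$ is future-pointing time-like; since it maximizes $L_g$ and has positive length, the standard Lorentzian maximal-geodesic principle (see e.g.\ \cite{oneill}) gives that $\gamma$ is a future-pointing time-like geodesic, and $f\equiv c$ is exactly affine parametrization in that case. If $c=0$ then $g(\dot\gamma_t,\dot\gamma_t)=0$ a.e., so $\gamma$ is light-like; since $L_g(\gamma)=0$ is the supremum, no time-like curve connects $x$ to $y$, i.e.\ $y\notin I^+(x)$, and the classical Lorentzian fact that in this situation any causal curve from $x$ to $y$ must be a light-like pregeodesic (proved by showing any non-pregeodesic null causal curve admits a local time-like deformation, cf.\ \cite{oneill}) yields the second alternative.

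The main obstacle is the null sub-case: the claim that a length-maximizing null causal curve is a pregeodesic is a standard but nontrivial fact of Lorentzian causal theory, and it cannot be read off directly from the Euler--Lagrange equation because $\cL_p$ is only smooth on $\Int(\cC)$ and not on $\partial\cC$. The remainder is essentially Jensen plus strict concavity; indeed an alternative route in the time-like case is to apply Euler--Lagrange directly to $\cL_p$, which is smooth and fiberwise strictly convex on $\Int(\cC)$ by Lemma~\ref{lem:Lpconvex}, and to read off the geodesic equation together with affine parametrization from the $p$-homogeneity of $\cL_p$.
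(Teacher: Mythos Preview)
Your proof is correct and follows essentially the same approach as the paper's: both use Jensen's inequality for the concave function $s\mapsto s^p$ to show that an $\cA_p$-minimizer must be a constant-speed Lorentz-length maximizer, and then invoke standard Lorentzian geometry to conclude it is a causal (pre)geodesic. The only cosmetic difference is that the paper handles the final step uniformly by localizing to globally hyperbolic neighborhoods and citing the Avez--Seifert theorem, whereas you split into the time-like and null cases and cite the corresponding results from \cite{oneill} separately; both routes rest on the same underlying fact that length-maximizing causal curves are pregeodesics.
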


\begin{proof}
Let $\gamma\colon [0,1]\to M$ be a $\cA_p$-minimizer with finite action. Then $\dot\gamma(t)\in \cC$ for a.e. $t$. By Jensen's inequality we have
$$\int_0^1 -\frac{1}{p}(-g(\dot\eta,\dot\eta))^{\frac{p}{2}}dt\ge -\frac{1}{p}\left(\int_0^1 \sqrt{-g(\dot\eta,\dot\eta)} dt\right)^p,$$
for any causal curve $\eta\colon [0,1]\to M$ with equality if and only if $\eta$ is parametrized proportionally to arclength.

Recall that the restriction of a minimizer to any subinterval of $[0,1]$ is a minimizer of the restricted action. 
Since any point in a spacetime admits a globally hyperbolic neighborhood, see \cite[Theorem 2.14]{minsan}, the Avez-Seifert Theorem \cite[Proposition 14.19]{oneill} implies that 
every minimizer of $\cA_1$ with finite action is a causal pregeodesic. 

Combining both points we see that if the action of $\gamma$ is negative, the curve is a time-like pregeodesic parameterized with respect to constant arclength, i.e. a time-like geodesic. If the action of $\gamma$ vanishes, the curve is a light-like pregeodesic.
\end{proof}

Consider the cost function
relative to the $p$-action $\cA_{p}$:
\begin{align*}
c_{p}:& \;M\times M\to \R\cup\{+\infty\} \\
          & \;(x,y)\mapsto \inf \{\cA_{p}(\eta)\,:\, \eta\in {\rm AC}([0,1], M), \eta_{0}=x, \eta_{1}=y\}.
\end{align*}

\begin{remark}
We will always assume that:
\begin{itemize}
\item[(i)] The cost function is bounded from below on bounded subsets of $M\times M$. By transitivity of the causal relation this follows from the assumption that
$c_p(x,y)>-\infty$ for all $x,y\in M$. 
\item[(ii)] 
The cost function is localizable, i.e. every point $x\in M$ has a neighborhood $U\subset M$ such that the cost function of the space-time $(U,g|_U,\cC|_U)$ 
coincides with the global cost function.
\end{itemize}
Note since the main results of this paper are local in nature,  the assumptions can always be satisfied by restricting the space-time to a suitable open subset. 
\end{remark}

\begin{proposition}\label{Propminimizer} Fix $p\in (0,1)$ and let $(M,g,\cC)$ be a space-time. Then every point has a neighborhood $U$ such that the following holds for the space-time $(U,g|_U,\cC|_U)$. 
For every pair of points $x,y\in U$ with $(x,y)\in J^{+}_U$, the causal relation of $(U,g|_U,\cC|_U)$,  there exists a curve $\gamma:[0,1]\to U$ with $\gamma_{0}=x$, 
$\gamma_{1}=y$, and minimizing $\cA_p$ among all curves $\eta\in {\rm AC}([0,1], M)$ with $\eta_{0}=x$ and $\eta_{1}=y$. Moreover $\gamma$ is a constant speed 
geodesic for the metric $g$, $\dot{\gamma}\in \cC$ whenever the tangent vector exists, and $\cA_p(\gamma)\in\R$. 
\end{proposition}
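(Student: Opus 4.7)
The plan is to combine the Avez--Seifert maximization of Lorentz length in a globally hyperbolic neighborhood with Jensen's inequality (already exploited in the proof of Lemma \ref{lemma_minimizer}) in order to turn the $\cA_1$-maximization problem into the desired $\cA_p$-minimization problem.

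First I would invoke \cite[Theorem 2.14]{minsan} to choose a neighborhood $U$ of the given point such that $(U,g|_U,\cC|_U)$ is a globally hyperbolic space-time, shrinking it further if necessary so that it is also chronological and satisfies the localizability assumption (ii) of the preceding remark. With such a $U$, the infimum of $\cA_p$ over curves in ${\rm AC}([0,1],M)$ joining two points of $U$ coincides with the infimum over causal curves contained in $U$: indeed $\cL_p(v)=+\infty$ for $v\notin\cC$, so only causal competitors contribute. Given $(x,y)\in J^{+}_U$ with $y\neq x$, I would then apply the Avez--Seifert theorem \cite[Proposition 14.19]{oneill} inside $(U,g|_U,\cC|_U)$ to produce a causal geodesic $\tilde\gamma$ from $x$ to $y$ whose Lorentz length $L_g(\tilde\gamma)$ realizes the supremum among causal curves in $U$ with these endpoints. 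Reparameterizing $\tilde\gamma$ affinely so that $\sqrt{-g(\dot\gamma,\dot\gamma)}$ is constant on $[0,1]$ yields a curve $\gamma:[0,1]\to U$ which is still a $g$-geodesic, satisfies $\dot\gamma\in\cC$ wherever defined, and has $\cA_p(\gamma)=-\tfrac{1}{p}L_g(\tilde\gamma)^p\in(-\infty,0]$. The degenerate case $y=x$ is handled by the constant curve, whose action is $0$; chronology of $U$ rules out nontrivial causal loops.

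To verify minimality, for any causal $\eta\in {\rm AC}([0,1],M)$ from $x$ to $y$ Jensen's inequality (exactly as in the proof of Lemma \ref{lemma_minimizer}) gives $\cA_p(\eta)\ge -\tfrac{1}{p}L_g(\eta)^p$, and since $s\mapsto -\tfrac{1}{p}s^p$ is non-increasing on $[0,\infty)$ while Avez--Seifert yields $L_g(\eta)\le L_g(\tilde\gamma)$, this chains to $\cA_p(\eta)\ge \cA_p(\gamma)$. Non-causal competitors have action $+\infty$ and are irrelevant. The most delicate point is ensuring the simultaneous choice of $U$ with all the structural properties employed (global hyperbolicity, chronology, localizability of the cost); once $U$ is fixed, the remainder reduces to the two classical inputs above together with the elementary concavity of $s\mapsto s^p$ for $p\in(0,1)$ that underlies Jensen's inequality.
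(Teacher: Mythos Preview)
Your proof is correct and takes essentially the same approach as the paper's: pick a globally hyperbolic neighborhood via \cite[Theorem 2.14]{minsan}, invoke Avez--Seifert \cite[Proposition 14.19]{oneill} to produce a length-maximizing causal geodesic, and use Jensen's inequality (concavity of $s\mapsto s^{p}$) to convert length-maximization into $\cA_p$-minimization. The paper's version is slightly terser and phrases things as ``any minimizer has finite action and must be a causal geodesic'' rather than explicitly constructing the candidate first, but the two arguments are the same in substance; if anything, your constructive presentation is a touch cleaner about existence.
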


\begin{proof}
It is well known that in a space-time every point has a globally hyperbolic neighborhood. Let $U$ be such a neighborhood. 
If $(x,y)\in J^+_U$ there exists a curve with finite action $\mathcal{A}_p$ between $x$ and $y$. At the same time the action is bounded from below, e.g. by a 
steep Lyapunov function, see \cite{BS1}.
Therefore any minimizer $\gamma\colon [0,1]\to U$ has finite action, i.e. $\dot\gamma(t)\in \cC$ for almost all $t$. 
By Jensen's inequality we have
$$\int_0^1 -\frac{1}{p}(-g(\dot\eta,\dot\eta))^{\frac{p}{2}}dt\ge -\frac{1}{p}\left(\int_0^1 \sqrt{-g(\dot\eta,\dot\eta)} dt\right)^p,$$
for any causal curve $\eta\colon [0,1]\to U$ with equality if and only if $\eta$ is parametrized proportionally to arclength. By the Avez-Seifert Theorem \cite[Proposition 14.19]{oneill} every minimizer of the right hand side is a causal 
pregeodesic. Combining both it follows that every $\mathcal{A}_p$-minimizer is a causal geodesic. 
\end{proof}

\subsection{Ricci curvature and Jacobi equation}

We now fix the notation regarding  curvature for a Lorentzian manifold $(M,g)$ of dimension $n\geq 2$.
Called  $\nabla$ the Levi-Civita connection of  $(M,g)$, the Riemann curvature tensor is defined by
\begin{equation}\label{def:R}
R(X,Y)Z=\nabla_{X} \nabla_{Y} Z- \nabla_{Y} \nabla_{X} X-\nabla_{[X,Y]}Z,
\end{equation}
where $X,Y,Z$ are smooth vector fields on $M$ and $[X,Y]$ is the Lie bracket of $X$ and $Y$.
\\For each $x\in M$, the Ricci curvature is a symmetric bilinear form $\Ric_{x}: T_{x}M\times T_{x}M\to \R$ defined by
\begin{equation}\label{def:Ricci}
\Ric_{x}(v,w):= \sum_{i=1}^{n} g(e_{i}, e_{i}) g(R(e_{i}, w) v, e_{i}),
\end{equation}
where $\{e_{i}\}_{i=1,\dots,n}$ is an orthonormal basis of $T_{x}M$, i.e. $|g(e_{i}, e_{j})|=\delta_{ij}$ for all  $i,j=1\ldots, n$.

Given a endomorphism $\cU:T_{x}M\to T_{x}M$ and a $g$-orthonormal basis $\{e_{i}\}_{i=1,\dots,n}$ of $T_{x}M$, we associate to $\cU$ the matrix 
\begin{equation}\label{eq:defmatrixU}
(\cU_{ij})_{i,j=1,\dots,n}, \quad \cU_{ij}:=g(e_{i}, e_{j}) \, g(\cU e_{i}, e_{j}).
\end{equation}
The trace  $\Tr_{g} (\cU)$ and the determinant $\Det_{g} (\cU)$ 
of the endomorphism $\cU$ with respect to the Lorentzian metric $g$ are  
by definition the trace $\tr(\cU_{ij})$ and the determinant $\det(\cU_{ij}))$
of the matrix $(\cU_{ij})_{i,j=1,\dots,n}$, respectively. It is standard to check that such a definition is independent of the chosen orthonormal basis of $T_{x}M$.
Note that $\Ric_{x}(v,w)$ is the trace of curvature endomorphism $R(\cdot, w) v:T_{x}M\to T_{x}M$.

A smooth curve $\gamma:I\to M$ is called a \emph{geodesic} if $\nabla_{\dot{\gamma}} \dot{\gamma}=0$.  A vector field $J$ along a geodesic $\gamma$ is said to be a \emph{Jacobi field} if it satisfies the \emph{Jacobi equation}:
\begin{equation}\label{eq:JacobiEq}
\nabla_{\dot{\gamma}}( \nabla_{\dot{\gamma}} J)+R(J, \dot{\gamma})\dot{\gamma}=0.
\end{equation}

\subsection{The $q$-gradient of a function}

Finally let us recall the definition of gradient and hessian. Given a smooth function $f:M\to\R$, the \emph{gradient of $f$} denoted by $\nabla_{g} f$ is defined by the identity
$$
g( \nabla_{g} f, Y)=df(Y), \quad \forall Y\in TM,
$$
where $df$ is the differential of $f$. The \emph{Hessian of $f$}, denoted by $\Hess_{f}$ is defined to be the covariant derivative of $df$:
$$
\Hess_f:=\nabla(df).
$$  
It is related to the gradient through the formula
$$
\Hess_f(X,Y)=g(\nabla_{X} \nabla_{g}f, Y),   \quad \forall X,Y\in TM,
$$
and satisfies the symmetry 
\begin{equation}\label{eq:HessSym}
\Hess_f(X,Y)=\Hess_f(Y,X) \quad \forall X,Y\in TM.
\end{equation}

Next we recall some notions for the causal character of functions. 
\begin{itemize}
\item A function $f\colon M\to \R\cup \{\pm\infty\}$ is a \emph{causal function} if $f(x)\le f(y)$ for all $(x,y)\in J^+$; 
\item it is a \emph{time function} if $f(x)<f(y)$ for all
$(x,y)\in J^+\setminus \Delta$, where $\Delta$ denotes the diagonal in $M\times M$. 
\item Following \cite{BS1} we call a differentiable ($C^k$-) function $f\colon M\to \R$ $(k\in \N\cup\{\infty\})$ a \emph{($C^k$-) Lyapunov} or 
\emph{($C^k$-) temporal} function if $df_x|_{\cC_x\setminus \{0\}}>0$ for all $x\in M$.
\end{itemize}
Let $q$ be the conjugate exponent  to $p$, i.e. 
$$\frac{1}{p}+\frac{1}{q}=1, \; \text{ or equivalently } (p-1)(q-1)=1.$$ 
Notice that, since  $p$ ranges in $(0,1)$ then $q$ ranges in $(-\infty, 0)$. In order to describe the optimal transport maps later in the paper, it is  useful to introduce the $q$-gradient (cf. \cite{Kell}) 
\begin{equation}\label{def:nablaqphi}
\nabla^{q}_{g} \phi:= -|g(\nabla_{g} \phi, \nabla_{g} \phi)|^{\frac{q-2}{2}} \nabla_{g} \phi
\end{equation}
for differentiable Lyapunov functions $\phi\colon M\to\R$; in particular,  $\nabla^{q}_{g} \phi(x)\in \cC_{x}\setminus\{0\}$. 
Notice that, 
$$\text{For } v\in \cC_{x}\setminus\{0\}, \; \nabla_{g} \phi(x)= -|g(v,v)|^{\frac{p-2}{2}} v \text{ if and only if } \nabla^{q}_{g} \phi(x)=v.$$ 
Moreover 
$$x\mapsto \nabla^{q}_{g}\phi(x)\text{ is continuous (resp.  $C^{k}$, $k\geq 1$) on }  U\subset \{|g(\nabla^{q}_{g} \phi, \nabla^{q}_{g} \phi) |>0\}$$
 if and only if
 $$x\mapsto \nabla _{g}\phi(x)  \text{  is continuous (resp.  $C^{k}$, $k\geq 1$) on  } U\subset \{|g(\nabla_{g} \phi, \nabla_{g} \phi)| >0\}. $$

The motivation for the use of the $q$-gradient comes from the Hamiltonian formulation of the dynamics; let us briefly mention  a few key facts that will play a role later in the paper.
For $\alpha\in T_{x}^{*}M$, let 
\begin{equation}\label{eq:defHp}
\cH_{p}(\alpha)=\sup_{v\in T_{x}M}\left[ \alpha(v) - \cL_{p}(v) \right]
\end{equation}
be the Legendre transform of $\cL_p$. Denote with $g^*$ the dual Lorentzian metric on $T^*M$ and $\cC^*\subset T^*M$ the dual cone field to $\cC$. Then $\cH_p$ satisfies 
\begin{equation}\label{eq:ReprcHp}
\cH_{p}(\alpha):=
\begin{cases}
-\frac{1}{q}(-g^*(\alpha,\alpha))^{\frac{q}{2}} & \text{ if $\alpha\in \cC^*\setminus T^{*,0}M$}\\
+\infty &  \text{otherwise}
\end{cases},
\end{equation}
for $(p-1)(q-1)=1$. By analogous computations as performed in the proof of Lemma \ref{lem:Lpconvex}, one can check that
\begin{equation}\label{eq:nablagqphiDHpdphi}
\nabla_{g}^{q}\phi(x)=D\cH_{p}(-d\phi(x)).
\end{equation}
By well known properties of the Legendre transform (see for instance \cite[Theorem A.2.5]{CS}) it follows that $D\cH_{p}$ is invertible  on $\Int (\cC^{*})$  with inverse given by $D\cL_{p}$.
Thus \eqref{eq:nablagqphiDHpdphi} is equivalent to
\begin{equation}\label{eq:DLpnablaphidphi}
D\cL_{p}(\nabla_{g}^{q}\phi(x))=-d\phi(x).
\end{equation}

\subsection{$c_{p}$-concave functions and regular  $c_{p}$-optimal  dynamical plans}\label{SS:BasicsOT}
We denote by ${\cP}(M)$ the set of Borel probability measures on $M$. For any $\mu_1,\mu_{2}\in \cP(M)$, we say that  a Borel probability measure 
$$\pi\in \cP(M\times M) \textrm{ is a coupling of }\mu_{1} \textrm{ and }\mu_{2}
$$
 if $(p_{i})_{\sharp}\pi=\mu_{i}, i=1,2$, where $p_{1}, p_{2}:M\times M\to M$ are the projections onto the first and second coordinate. Recall that the push-forward $(p_{1})_{\sharp} \pi$ is defined by  
$$(p_{1})_{\sharp} \pi (A):=\pi \big(p_{1}^{-1}(A)\big)$$ for any Borel subset $A\subset M$. The set of couplings of $\mu_{1},\mu_{2}$ is denoted by $\Cpl(\mu_{1},\mu_{2})$. The \emph{$c_{p}$-cost of a coupling} $\pi$ is  given by 
$$
\int_{M\times M} c_{p}(x,y) d\pi(x,y) \in  [-\infty, 0]\cup \{+\infty\}.
$$
Denote by $C_{p}(\mu_{1}, \mu_{2})$ the \emph{minimal cost relative to $c_{p}$} among all couplings from $\mu_{1}$ to $\mu_{2}$, i.e.
$$
C_{p}(\mu_{1}, \mu_{2}):=\inf \left\{\int c_{p} d\pi \,:\, \pi \in \Cpl(\mu_{1},\mu_{2}) \right\}\in  [-\infty, 0]\cup \{+\infty\}.
$$
If $C_{p}(\mu_{1}, \mu_{2})\in\R$, a coupling achieving the infimum  is said to be $c_{p}$-\emph{optimal}. 

We next define the notion of \emph{$c_{p}$-optimal dynamical plan}. To this aim, it is convenient to consider the set of $\cA_{p}$-minimizing curves,  denoted by  $\Gamma_{p}$. The set $\Gamma_{p}$ is endowed with the $\sup$ metric induced by the auxiliary Riemannian metric $h$. It will be useful to consider the maps for $t\in [0,1]$:
\begin{align*}
\ee_{t}: \Gamma_{p}\to M,& \quad \ee_{t}(\gamma):=\gamma_{t} \\
\partial \ee_{t}: \Gamma_{p}\to TM,& \quad \partial \ee_{t}(\gamma):=\dot\gamma_{t}\in T_{\gamma_{t}}M.
\end{align*}

A \emph{ $c_{p}$-optimal dynamical plan} is a probability measure $\Pi$ on $\Gamma_{p}$ such that $(\ee_{0}, \ee_{1})_{\sharp} \Pi$ is a  $c_{p}$-optimal   coupling from  $\mu_{0}:=(\ee_{0})_{\sharp}\Pi$ to  $\mu_{1}:=(\ee_{1})_{\sharp}\Pi$.
\\ We will mostly be interested in $c_{p}$-optimal dynamical plans obtained by ``exponentiating the $q$-gradient of a $c_{p}$-concave function'', what we will call \emph{regular $c_{p}$-optimal dynamical plans}. In order to define them precisely, let 
 us first recall some basics of Kantorovich duality (we adopt the convention of \cite{AGUser}). 
 
 Fix two subsets $X,Y\subset M$. A function $\phi:X\to \R\cup\{-\infty\}$ is said \emph{$c_{p}$-concave} (with respect to $(X,Y)$) if it is not identically $-\infty$ and there exists $u:Y\to \R\cup \{- \infty\}$ such that
$$
\phi(x)=\inf_{y\in Y}c_{p}(x,y)-u(y), \quad \text{ for every } x\in X.
$$
Then, its \emph{$c_{p}$-transform} is the function $\phi^{c_{p}}:Y\to \R\cup\{-\infty\}$ defined by
\begin{equation}\label{def:phicp}
\phi^{c_{p}}(y):=\inf_{x\in X} c_{p}(x,y)-\phi(x),
\end{equation}
and its \emph{$c_{p}$-superdifferential} $\partial^{c_{p}} \phi (x)$ at a point $x\in X$ is defined by 
\begin{equation}\label{def:cpSuperDiff}
\partial^{c_{p}} \phi(x):=\{y\in Y\,:\, \phi(x)+\phi^{c_{p}}(y)=c_{p}(x,y)\}.
\end{equation}
Note that
\begin{equation}\label{eq:casesphidcp}
\begin{cases}
\phi(x)=c_{p}(x,y)- \phi^{c_{p}}(y), \quad \text{for all } x\in X,  y\in \partial^{c_{p}}\phi(x)\\
\phi(x)\leq  c_{p}(x,y)-\phi^{c_{p}}(y), \quad \text{for all } x\in X, y\in Y.
\end{cases}
\end{equation}
From the definition it follows readily that if $\phi$ is $c_p$-concave, then for $(x,z)\in J^+\cap (X\times X)$ we have
$$\phi(z)=\inf_{y\in Y}c_{p}(z,y)-u(y) \geq \inf_{y\in Y}c_{p}(x,y)-u(y)=\phi(x),$$
i.e. $\phi$ is a causal function. The same argument gives that $-\phi^{c_p}$ is a causal function as well.

\begin{definition}[Regular $c_{p}$-optimal  dynamical  plan]\label{def:RgcpOptDynPlan}
A $c_{p}$-optimal  dynamical  plan $\Pi\in \cP(\Gamma_{p})$ is  \emph{regular} if the following holds.
\\There exists  $U,V\subset M$ relatively compact open subsets and  a smooth $c_{p}$-concave (with respect to $(U,V)$)  function $\phi_{1/2}:U\to \R$ such that
\begin{enumerate}
\item $\nabla^{q}_{g}\phi_{1/2}(x)\in \cK \subset \subset \Int(\cC)$ for every  $x\in U$ and $$\big(-g(\nabla^{q}_{g}\phi_{1/2}, \nabla^{q}_{g}\phi_{1/2}) \big)^{1/2}< {\rm{Inj}}_{g}(U), $$ where ${\rm{Inj}}_{g}(U)$ is  the injectivity radius of $g$ on $U$;
\item Setting $$\Psi_{1/2}^{t}(x)={\rm exp}^{g}_{x}((t-1/2)\nabla^{q}_{g}\phi_{1/2}(x))$$ and $\mu_{t}:=(\ee_{t})_{\sharp}\Pi$ for every $t\in [0,1]$, it holds that 
$$\supp(\mu_{1/2})\subset U, \; \mu_{t}=(\Psi_{1/2}^{t})_{\sharp}\mu_{1/2} \text{ and }\mu_{t}\ll \vol_{g} \, \forall t\in [0,1].$$
\end{enumerate}
\end{definition}

Roughly, the above notion of regularity asks that the $\cA_{p}$-minimizing curves performing the optimal transport from  $\mu_{0}:=(\ee_{0})_{\sharp}\Pi$ to  $\mu_{1}:=(\ee_{1})_{\sharp}\Pi$ have velocities contained in $\cK$, i.e. they are all ``uniformly''  time-like future pointing. Moreover it also implies that $\cup_{t\in [0,1]}\supp(\mu_{t})\subset M$ is compact; in addition the optimal transport is assumed to be driven by a smooth potential $\phi_{1/2}$. Even if these conditions may appear a bit strong, we will prove in Lemma \ref{lem:SmoothKP} that there are a lot of such regular plans; moreover in the paper we will show that it is enough to consider such particular optimal transports in order to characterize upper and lower bounds on the (causal-)Ricci curvature and thus characterize the solutions of Einstein equations.

\section{Existence, regularity and evolution of Kantorovich potentials}

In order to characterize Lorentzian Ricci curvature upper bounds, it will be useful the next proposition concerning the  evolution of Kantorovich potentials along a regular $\cA_{p}$-minimizing curve of probability measures $(\mu_{t})_{t\in [0,1]}$ given by exponentiating the $q$-gradient of a smooth $c_{p}$-concave function with  time-like gradient. To this aim it is convenient to consider,  for $0\le s<t\le 1$,  the restricted minimal action
$$c^{s,t}_p(x,y):=\inf\left\{\left.\int_s^t \cL_p(\dot\gamma(\tau))d\tau\right|\;   \gamma\in {\rm AC}([s,t], M),\,\gamma(s)=x,\,\gamma(t)=y\right\}.$$

\begin{proposition}\label{thm:HopfLax}
Let $(M,g,\cC)$ be a space-time, fix $p\in (0,1)$ and let $q\in (-\infty, 0)$ be the H\"older conjugate exponent,  i.e. $\frac{1}{p}+\frac{1}{q}=1$ or equivalently $(p-1)(q-1)=1$. Let $U,V\subset M$ be relatively compact open subsets and  $\phi_{1/2}$ be a smooth $c_{p}$-concave function relative to $(U,V)$ such that
\begin{itemize}
\item $\phi_{1/2}$ is a smooth Lyapunov function on $U$,
\item $\big(-g(\nabla^{q}_{g}\phi_{1/2}, \nabla^{q}_{g}\phi_{1/2}) \big)^{1/2}< {\rm{Inj}}_{g}(U)$.
\end{itemize}
For $t\in [0,1]$,  let 
$$\Psi_{1/2}^{t}:U\to M,\;  \Psi_{1/2}^{t}(x):=\exp_{x}((t-1/2) \nabla^{q}_{g}\phi_{1/2}(x)).$$
 For every $x\in U$, define 
\begin{equation}\label{def:phit}
\phi_{t}(\Psi_{1/2}^{t}(x))=\phi(t,\Psi_{1/2}^{t}(x)):=
\begin{cases}
\phi_{1/2}(x)-c_{p}^{1/2,t}(x, \Psi_{1/2}^{t}(x)) & \text{ for $t\in [1/2, 1]$}\\
\phi_{1/2}(x)+c_{p}^{t, 1/2}(\Psi_{1/2}^{t}(x), x) & \text{ for $t\in [0, 1/2)$}
\end{cases}.
\end{equation}
Then the map $(t,y)\mapsto \phi(t,y)$ defined on $\bigcup_{t\in [0,1]}{\{t\}}\times \Psi_{1/2}^{t}(U)$ is $C^{\infty}$ and satisfies  the Hamilton-Jacobi equation 
\begin{equation}\label{eq:qHopfLax}
\partial_{t} \phi_{t} (t,y)+\frac{1}{q} (-g(\nabla_{g} \phi_{t}(y), \nabla_{g} \phi_{t} (y)))^{q/2}=0, \; \forall (t,y)\in \bigcup_{t\in [0,1]}{\{t\}}\times \Psi_{1/2}^{t}(U)
\end{equation}
with 
\begin{equation}\label{eq:velocity}
\frac{d}{dt}{\Psi}_{1/2}^{t}(x)=\nabla^{q}_{g} \phi_{t} (\Psi_{1/2}^{t}(x)) , \quad \forall (t,x) \in [0,1]\times U.
\end{equation}
\end{proposition}

\begin{proof}

\textbf{Step 1}: smoothness of $\phi$.\\
The fact that $t\mapsto \Psi_{1/2}^{t}$ is a smooth 1-parameter family of maps performing   $c_{p}$-optimal transport gives that $\phi$ defined in \eqref{def:phit} satisfies  (cf. \cite[Theorem 6.4.6]{CS}) 
\begin{equation}\label{eq:phitphis}
\phi_{t}(\Psi_{1/2}^{t}(x))=\phi_{s}(\Psi_{1/2}^{s}(x))-c_{p}^{s,t}(\Psi_{1/2}^{s}(x), \Psi_{1/2}^{t}(x)), \quad  \forall x\in U, \;0\leq s<t\leq 1.
\end{equation}
In particular it holds  
\begin{align}
\phi_{t}(\Psi_{1/2}^{t}(x))&=\phi_{0}(\Psi_{1/2}^{0}(x))-c_{p}^{0,t}(\Psi_{1/2}^{0}(x), \Psi_{1/2}^{t}(x))\quad \forall x\in U, t\in [0,1], \label{eq:phitphi0} \\
\phi_{t}(\Psi_{1/2}^{t}(x))&=\phi_{1}(\Psi_{1/2}^{1}(x))+c_{p}^{t,1}(\Psi_{1/2}^{t}(x), \Psi_{1/2}^{1}(x))\quad \forall x\in U, t\in [0,1].  \label{eq:phitphi1}
\end{align}
Since by construction everything is defined inside the injectivity radius and all  the transport rays are non-constant, from \eqref{eq:phitphi0} (respectively \eqref{eq:phitphi1}) it is manifest that the map $(t,y)\mapsto \phi(t,y)$ is $C^{\infty}$ on  $\bigcup_{t\in (0,1]} \{t\}\times \Psi_{1/2}^{t}(U)$ (resp. $ \bigcup_{t\in [0,1)} \{t\}\times \Psi_{1/2}^{t}(U)$). The smoothness of $\phi$ on  $\bigcup_{t\in [0,1]} \{t\}\times \Psi_{1/2}^{t}(U)$ follows.

\textbf{Step 2}: validity of the Hamilton-Jacobi equation \eqref{eq:qHopfLax}.\\
We  consider  $t\in (1/2,1]$, the case $t\in [0,1/2]$ being analogous.
Fix $y=\Psi_{1/2}^{t}(x)$ for some arbitrary $x\in U$ and $t\in (1/2,1]$, and let $\gamma\colon [0,s]\to M$ be a smooth curve with 
$\dot \gamma(0)=v\in T_{y}M$. From \eqref{def:phit} we have 
$$\phi(t+s,\gamma_s)\geq -\int_{0}^{s}  \cL_{p} (\dot{\gamma}_{\tau}) \, d\tau+\phi(t,\gamma_{0}),$$
with equality for $\gamma(\tau)=\Psi_{t+\tau}(x)$ for all $\tau\in [0,s]$. Dividing by $s$ and taking the limit for $s\to 0$, we obtain 
$$
\lim_{s\to 0} \frac{\phi(t+s, \gamma_{s})- \phi(t, \gamma_{0}) }{s} \geq -\cL_{p}(v),
$$
which in turn implies 
$$
\partial_{t} \phi_{t}(y)\geq -d\phi_t(v)- \cL_{p}(v), \quad \text{for every } v\in T_{y}M.
$$
Note that equality holds for $v=\nabla^{q}_{g}\phi_{1/2}(x)$. For $\alpha\in T_{y}^{*}M$, let 
$$
\cH_{p}(\alpha)=\sup_{v\in T_{y}M}\left[ \alpha(v) - \cL_{p}(v) \right]
$$
denote the Legendre transform of $\cL_p$. Thus we get 
\begin{equation}\label{HJ_1}
\partial_{t} \phi_{t}(y)=\cH_p(-d(\phi_t)_y).
\end{equation}
Recalling that $\cH_p$ has the representation \eqref{eq:ReprcHp}, we have
$$\cH_p(-d(\phi_t)_{y})=-\frac{1}{q} \left(-g(\nabla_{g} \phi_{t}(y), \nabla_{g} \phi_{t} (y))\right)^{q/2},$$
which, together with \eqref{HJ_1}, implies \eqref{eq:qHopfLax}.

\textbf{Step 3}: validity of \eqref{eq:velocity}. \\
Since $\Psi_{1/2}^{t}$ is a smooth 1-parameter family of maps performing   $c_{p}$-optimal transport and the function $\phi$ defined in \eqref{def:phit} is smooth, it  coincides with the viscosity solution (resp. backward solution) 
\begin{equation}\label{def:phits}
\phi_{t}(y)=
\begin{cases}
\sup_{z\in \Psi_{1/2}^{s}(U)} \phi_{s}(z)-c_{p}^{s,t}(z, y) & \text{ for $t\in [s, 1]$}\\
\inf_{z\in \Psi_{1/2}^{s}(U)} \phi_{s}(z)+c_{p}^{t, s}(y, z) & \text{ for $t\in [0, s)$}
\end{cases},
\end{equation}
for every $s\in (0,1), \, y\in \Psi_{1/2}^{t}(U)$.

Let us discuss the case $t\in (s,1]$, the other is analogous. From  \eqref{eq:phitphis} it follows that $\Psi_{1/2}^{s}(x)$ is a maximum point in  the right hand side of \eqref{def:phits} corresponding to $y=\Psi_{1/2}^{t}(x)$. Thus
$$
d\phi_{s}(\Psi_{1/2}^{s}(x))=d\left[c_{p}^{s,t}(\cdot, \Psi_{1/2}^{t}(x))\right] (\Psi_{1/2}^{s}(x))= -D\cL_{p} \left(\frac{d}{ds} \Psi_{1/2}^{s}(x) \right).
$$
By construction $\frac{d}{ds} \Psi_{1/2}^{s}(x)\in \Int (\cC)$ and, as already observed,  $D\cL_{p}$ is invertible on $\Int (\cC)$ with inverse given by $D\cH_{p}$. We conclude that
$$
\frac{d}{ds} \Psi_{1/2}^{s}(x)= D \cH_{p} \left(- d\phi_{s}(\Psi_{1/2}^{s}(x))\right)=\nabla^{q}_{g}  \phi_{s}(\Psi_{1/2}^{s}(x)).
$$
\end{proof}

We next show that for every point $\bar{x}\in M$ and every $v\in \cC_{\bar x}$ ``small enough'' we can find a smooth  $c_{p}$-concave function $\phi$ defined on a neighbourhood of $\bar{x}$, such that $\nabla_{g}^{q}\phi=v$ and the hessian of $\phi$ vanishes at $\bar x$. 
This is well known in the Riemannian setting (e.g. \cite[Theorem 13.5]{Vil}) and should be compared with the recent paper by Mc Cann \cite{McCann18} in the Lorentzian framework.
The second part of the next lemma shows that the class of regular  $c_{p}$-optimal dynamical plans is non-empty, and actually rather rich.

\begin{lemma}\label{lem:SmoothKP}
Let $(M,g,\cC)$ be a space-time,  fix $\bar{x}\in M$ and $v\in \cC_{\bar{x}}$ with $g(v,v)<0$. Then there  exists $\varepsilon=\varepsilon(\bar{x},v)>0$ with the following property: 
\begin{enumerate}
\item For every $s\in (0,\varepsilon)$, for every $C^{2}$ function $\phi:M\to \R$ satisfying
\begin{equation}\label{eq:ConstrKPv}
\nabla^{q}_{g}\phi(\bar{x})=sv, \quad \Hess_\phi(\bar{x})=0,
\end{equation}
 there exists a  neighbourhood $U_{\bar{x}}$ of $\bar{x}$ and a neighbourhood $U_{\bar{y}}$ of $\bar{y}:=\exp^{g}_{\bar{x}} (sv)$ such that $\phi$ is $c_{p}$-concave relatively to $(U_{\bar{x}}, U_{\bar{y}})$.
\item Let  
\begin{align*}
\Psi_{1/2}^{t}(x)&:=\exp_{x}((t-1/2) \nabla^{q}_{g}\phi(x)), \quad \forall x\in U_{\bar{x}} \\
\tilde{\Psi}&:U_{\bar{x}}\to {\rm AC}([0,1],M), \quad x\mapsto \Psi_{1/2}^{(\cdot)}(x).
\end{align*}
 Then, for every $\mu_{1/2}\in \cP(M)$ with $\supp(\mu_{1/2})\subset U_{\bar{x}}$, the measure $\Pi:=(\tilde{\Psi})_{\sharp} \mu_{1/2}$ is a  $c_{p}$-optimal dynamical plan. 
\end{enumerate}
\end{lemma}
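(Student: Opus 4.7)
For part (1), I would first show that $F(x):=\exp^g_x(\nabla^q_g\phi(x))$ is a local diffeomorphism from a neighborhood $U_{\bar x}$ of $\bar x$ onto a neighborhood $U_{\bar y}$ of $\bar y$, and then construct the dual potential explicitly as $u(y):=c_p(F^{-1}(y),y)-\phi(F^{-1}(y))$. Since $\Hess_\phi(\bar x)=0$ and $\nabla^q_g\phi$ is a smooth function of $\nabla_g\phi$ where the latter is nonzero, the covariant derivative of $\nabla^q_g\phi$ also vanishes at $\bar x$; the Jacobi field along $s\mapsto\exp^g_{\bar x}(stv)$ with $J(0)=w$ and $DJ/ds(0)=0$ satisfies $J(1)=w+O(t^2)$, so $dF_{\bar x}$ is invertible for $\varepsilon=\varepsilon(\bar x,v)$ small. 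Taking $\varepsilon$ smaller if necessary, Proposition \ref{Propminimizer} guarantees that $s\mapsto\exp^g_x(s\nabla^q_g\phi(x))$ is an $\cA_p$-minimizer between its endpoints.

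The equality $\phi(x)+u(y)=c_p(x,y)$ at $y=F(x)$ is built into the definition of $u$. For the opposite inequality on $U_{\bar x}\times U_{\bar y}$ I would fix $y$ and study $f_y(x):=c_p(x,y)-\phi(x)$. The first-order condition $df_y(F^{-1}(y))=0$ is a direct consequence of the first variation of $\cA_p$ combined with \eqref{eq:DLpnablaphidphi}, which yields $d_xc_p(x,F(x))=-D\cL_p(\nabla^q_g\phi(x))=d\phi(x)$. The second-order condition is the heart of the argument: a Jacobi-field expansion of $d^2_{xx}c_p(x,y)$ exhibits it as $D^2\cL_p$ evaluated at the geodesic velocity (positive by Lemma \ref{lem:Lpconvex}, and of order one in $t$) plus curvature/tidal corrections of order $O(t^2)$. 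Combined with $\Hess_\phi(\bar x)=0$ and continuity in $x$, this gives $d^2f_y>0$ uniformly on $U_{\bar x}$ for $\varepsilon$ small, so that $F^{-1}(y)$ is a strict global minimizer of $f_y$ after shrinking $U_{\bar x}$ and $U_{\bar y}$ if needed.

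For part (2), each curve $\gamma_x(s):=\Psi^s_{1/2}(x)$ is a time-like geodesic and, for $\varepsilon$ small, an $\cA_p$-minimizer on $[0,1]$, so $\Pi\in\cP(\Gamma_p)$. To conclude $c_p$-optimality of $(\ee_0,\ee_1)_\sharp\Pi$ I would exhibit a smooth $c_p$-concave potential $\tilde\phi$ on a neighborhood of $\Psi^0_{1/2}(\bar x)$ with $\Psi^1_{1/2}(x)\in\partial^{c_p}\tilde\phi(\Psi^0_{1/2}(x))$ for every $x\in U_{\bar x}$; then the standard cyclical-monotonicity argument (cf.\ \cite[Theorem 1.13]{AGUser}) gives optimality. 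The potential $\tilde\phi$ is obtained via Hamilton-Jacobi evolution: the Hamilton flow of $\cH_p$ on $T^\ast M$ lifts the geodesic flow on $M$ via \eqref{eq:nablagqphiDHpdphi}, so evolving the exact Lagrangian submanifold $L_\phi=\{(x,-d\phi(x)):x\in U_{\bar x}\}$ by time $-1/2$ produces a Lagrangian submanifold that projects diffeomorphically to a neighborhood of $\Psi^0_{1/2}(\bar x)$ (for $\varepsilon$ small) and admits a primitive $\tilde\phi$ satisfying $\nabla^q_g\tilde\phi(\Psi^0_{1/2}(x))=\dot\gamma_x(0)$. The Hessian $\Hess_{\tilde\phi}(\Psi^0_{1/2}(\bar x))$ is $O(\varepsilon^2)$, since it is produced from $\Hess_\phi(\bar x)=0$ by a symplectic flow whose deviation from the identity in suitable coordinates is $O(\varepsilon)$; this is small enough that the argument of part (1) (rather than its statement) applies and yields $c_p$-concavity of $\tilde\phi$.

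The main obstacle I anticipate is the second-order condition in part (1): making rigorous the competition between the $O(1)$ contribution of $\cL_p$'s fiberwise convexity and the $O(t^2)$ curvature corrections in $d^2_{xx}c_p$, uniformly over the neighborhoods. A secondary issue is the Hamilton-Jacobi construction in part (2), where one must carefully track the second-order terms through the symplectic flow to ensure that $\Hess_{\tilde\phi}$ at the reference point is small enough for the argument of (1) to carry over.
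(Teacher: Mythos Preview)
For part (1) your outline matches the paper's proof closely: both hinge on showing that $x\mapsto c_p(x,y)-\phi(x)$ has a strict minimum at the point mapped to $y$, via the positivity of $\Hess_{x,c_p}(\bar x,\bar y)-\Hess_\phi(\bar x)$. The paper obtains the diffeomorphism in the reverse direction (a map $F:U_{\bar y}\to U_{\bar x}$) via the Implicit Function Theorem applied to the first-order condition $D_xc_p-d\phi=0$, rather than your direct Jacobi-field computation of $dF_{\bar x}$, but these are interchangeable. One imprecision: $D^2_{vv}\cL_p(tv)$ is not ``of order one in $t$'' but rather of order $t^{p-2}\to\infty$ as $t\to 0$ (see \eqref{eq:d2cLpww}); this only strengthens the comparison with the curvature corrections, so your conclusion stands.

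For part (2) your route is correct but considerably more elaborate than the paper's. The paper avoids any Hamilton--Jacobi evolution by simply reusing the $c_p$-concave potential $\phi$ from part (1). Since the $c_p$-superdifferential $\partial^{c_p}\phi\subset U_{\bar x}\times U_{\bar y}$ is automatically $c_p$-cyclically monotone, it suffices to identify $\partial^{c_p}\phi(x)=\{\exp^g_x(\nabla^q_g\phi(x))\}$ for each $x\in U_{\bar x}$. Non-emptiness comes for free from the construction in (1) (namely $F^{-1}(x)\in\partial^{c_p}\phi(x)$); the identification then follows from \eqref{eq:casesphidcp} together with the first-order Taylor expansion of $z\mapsto c_p(z,y)$ at $x$, which forces $d\phi(x)=D_xc_p(x,y)=-D\cL_p(w)$ and hence $w=\nabla^q_g\phi(x)$ via \eqref{eq:nablagqphiDHpdphi}. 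This gives optimality of $(\mathrm{Id},\exp(\nabla^q_g\phi))_\sharp\mu_{1/2}$ directly, and the paper then reparametrizes and restricts. Your symplectic-flow construction of a potential $\tilde\phi$ at time $0$ does give a cleaner treatment of the full interval $[0,1]$ in one stroke, but at the price of precisely the second-order bookkeeping you flag as the delicate point; the paper sidesteps that entirely.
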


\begin{proof}
\textbf{(1)} Calling $\bar{y}=\bar{y}(sv):=\exp^{g}_{\bar{x}} (sv)$,  notice that $\nabla^{q}_{g}\phi(\bar{x})=sv$ is equivalent to
\begin{equation}
 d\phi(\bar{x})=D_{x} c_{p}(\bar{x}, \bar{y}),
\end{equation}
where $D_{x} c_{p}(\bar{x}, \bar{y})$ denotes the differential at $\bar{x}$ of the function $x\mapsto  c_{p}(x, \bar{y})$.  Indeed, a computation shows that $D_{x}c_{p}(\bar{x}, \bar{y})=-D\cL_{p}(sv)$ and thus the claim follows from \eqref{eq:DLpnablaphidphi}.

Let ${\phi}:M\to \R$ be any smooth function satisfying
\begin{equation}\label{eq:Constrphitilde}
\nabla^{q}_{g} {\phi}(\bar{x})=sv, \quad \Hess_\phi(\bar{x})=0.
\end{equation}
In what follows we denote with $\Hess_{x, c_{p}}(\bar{x}, \bar{y})$(resp. $\Hess_{v, \cL_{p}}(sv)$ the Hessian of the function $x\mapsto c_{p}(x, \bar{y})$ evaluated at $x=\bar{x}$ (resp. the Hessian of the function $T_{\bar{x}}M\ni w\mapsto \cL_{p}(sv+w)$).
By taking normal coordinates centred at $\bar{x}$ one can check that the operator norm 
$$\|{\Hess_{x, c_{p}}}(\bar{x}, \bar{y})-{\Hess_{v,\cL_{p}}}(sv) \|\to 0 \quad \text{ as } t\to 0.$$
Recalling that from \eqref{eq:d2cLpww} there exists $C_{p,v}>0$ such that  ${\Hess_{v,\cL_{p}}}(sv)\geq C_{p,v} s^{-2+p}$ as quadratic forms, we infer
\begin{equation}\label{eq:Hesscptphi>0}
{\Hess_{x,c_{p}}}(\bar{x}, \bar{y})-{\Hess_\phi} (\bar{x}) >0\quad  \text{as quadratic forms, for every $s\in (0,\varepsilon)$,}
\end{equation}
for some $\varepsilon=\varepsilon(\bar{x},v)>0$ small enough.
Since by construction we have $D_{x} c_{p}(\bar{x}, \bar{y})-d{\phi}(\bar{x})=0$, by the Implicit Function Theorem there exists a neighbourhood $U_{\bar x}\times U_{\bar y}$ of $(\bar x, \bar y)\in M\times M$ and a smooth function $F: U_{\bar{y}}\to U_{\bar x}$ such that  $F(\bar{y})=\bar{x}$ and
$$
D_{x} c_{p}(F(y),y)-d{\phi}(F(y))=0, \quad  \text{for every }y\in U_{\bar y}. 
$$
 Differentiating the last equation in $y$ at $\bar{y}$ and using that ${\Hess_\phi}(\bar{x})=0$, we obtain
 \begin{equation}\label{eq:D2xycpDpsi}
 D^{2}_{yx} c_{p}(\bar{x}, \bar{y}) + {\Hess_{x,c_{p}}}(\bar{x}, \bar{y}) D F (\bar{y})=0.
 \end{equation}
 Using  normal coordinates centred at $\bar{x}$ and  \eqref{eq:d2ldvidvj}  one can check that the operator norm 
 $$
 \left\|D^{2}_{yx} c_{p}(\bar{x}, \bar{y})-  (-g(sv,sv))^{\frac{p-4}{2}} \left(-g(sv,sv) g + (2-p) (sv)^{*}\otimes (sv)^{*}  \right) \right \|\to 0
 $$
 as $s\to 0$,  where $(sv)^{*}=g(sv,\cdot)$ is the covector associated to $sv$. 

 Since by assumption  $g(v,v)<0$ and $p\in (0,1)$, it follows from the reverse Cauchy-Schwartz inequality that $\det[D^{2}_{yx} c_{p}(\bar{x}, \bar{y})]> 0$ for $s\in (0,\varepsilon)$.
Recalling that $\det[{\Hess_{x,c_{p}}}(\bar{x}, \bar{y})]\neq 0 $,  from \eqref{eq:D2xycpDpsi} we infer that $\det(DF(\bar{y}))\neq 0$. By the Inverse Function Theorem, up to reducing the neighbourhoods, we get that  $F: U_{\bar{y}}\to U_{\bar x}$  is a smooth diffeomorphism. 
Define now 
$$u:U_{\bar{y}}\to\R, \quad u(y):=c_{p}(F(y),y)- {\phi} (F(y)).$$
For every fixed $y\in U_{\bar{y}}$, the function $U_{\bar{x}}\ni x\mapsto c_{p}(x, y)- {\phi} (x)-u(y)$ vanishes at $x=F(y)$; moreover, from \eqref{eq:Hesscptphi>0}, it follows that $x=F(y)$ is the strict global minimum of such a function  on  $U_{\bar{x}}$, up to further reducing $U_{\bar{x}}$ and $U_{\bar{y}}$, possibly. In other words, the function $U_{\bar{x}}\times U_{\bar{y}}\ni (x,y)\mapsto  c_{p}(x, y)- {\phi} (x)-u(y)$ is always non-negative and vanishes exactly on the graph of $F$. It follows that 
\begin{equation}\label{eq:tildephiinf}
 {\phi} (x) = \inf_{y\in U_{\bar{y}}}   c_{p}(x,y)-u(y), \quad \text{for every } x\in U_{\bar{x}},
\end{equation}
i.e. $\phi:U_{\bar{x}}\to \R$ is a smooth $c_{p}$-concave function relative to $(U_{\bar{x}}, U_{\bar{y}})$  satisfying  \eqref{eq:ConstrKPv}.
\\

\textbf{Proof of (2).}
\\ \textbf{Step 1}. We first show  that $(\ee_{1/2}, \ee_{t})_{\sharp}\Pi$  is a  $c_{p}$-optimal coupling for $(\mu_{1/2}, \mu_{t})$ and every $t\in [1/2,1]$. To keep notation short, it is convenient to define  
\begin{align*}
\Psi'_{t}(x):=\exp_{x}(t \nabla^{q}_{g}\phi(x)), &\quad \forall x\in U_{\bar{x}} \\
\tilde{\Psi}':U_{\bar{x}}\to {\rm AC}([0,1],M), & \quad  x\mapsto \Psi'_{(\cdot)}(x). 
\end{align*}
Setting $\Pi':=(\tilde{\Psi}')_{\sharp} \mu_{1/2}$, we have $(\ee_{t})_{\sharp} \Pi'=(\ee_{t+1/2})_{\sharp} \Pi$ for every $t\in [0,1/2]$. 
If we show that 
\begin{equation}\label{eq:e0e1Pi'}
(\ee_{0},\ee_{1})_{\sharp}\Pi' \text{ is a  $c_{p}$-optimal coupling for  } \big((\ee_{0})_{\sharp}\Pi', (\ee_{1})_{\sharp}\Pi'\big), 
\end{equation}
then, by { the} triangle inequality, it will follow that $(\ee_{0},\ee_{t})_{\sharp}\Pi'$ is a  $c_{p}$-optimal coupling for $\big((\ee_{0})_{\sharp}\Pi', (\ee_{t})_{\sharp}\Pi' \big) $ for every $t\in [0,1]$; in particular our claim that $(\ee_{1/2}, \ee_{t})_{\sharp}\Pi$  is a  $c_{p}$-optimal coupling for $(\mu_{1/2}, \mu_{t})$, $t\in [1/2,1]$, will be proved. Thus, the rest of step 1 will be devoted to establish \eqref{eq:e0e1Pi'}.

 Since by construction $c_{p}:U_{\bar{x}}\times U_{\bar{y}}\to \R$ is smooth, by classical optimal transport theory it is well know that  the $c_{p}$-superdifferential $\partial^{c_{p}} \phi \subset U_{\bar{y}}$  is $c_{p}$-cyclically monotone (see for instance \cite[Theorem 1.13]{AGUser}). Therefore, in order to have \eqref{eq:e0e1Pi'}, it is enough to prove that
 \begin{equation}\label{eq:cpsubdifphi}
\partial^{c_{p}} \phi(x)=\{\exp_{x}(\nabla^{q}_{g}\phi(x))\}\, \text{ for every  }  x\in U_{\bar{x}}.
\end{equation}
Let us first show that $\partial^{c_{p}} \phi(x)\neq \emptyset$, for every $x\in U_{\bar{x}}$.
From the proof of part (1), there exists a smooth diffeomorphism $F:U_{\bar{y}}\to U_{\bar{x}}$ such that
\begin{equation}\label{eq:phiFcpu}
\begin{cases}
{\phi} (F(y)) =   c_{p}(F(y),y)-u(y), \quad \text{for all } y\in U_{\bar{y}}\\
{\phi} (x) \leq    c_{p}(x,y)-u(y), \quad \text{for all } x\in U_{\bar{x}},  y\in U_{\bar{y}}.
\end{cases}
\end{equation}
From the definition of ${\phi}^{c_{p}}$ in \eqref{def:phicp}, it is readily seen that  ${\phi}^{c_{p}}=u$ on $U_{\bar{y}}$. Thus \eqref{eq:phiFcpu} combined with \eqref{eq:casesphidcp} gives that $y\in \partial^{c_{p}} \phi(F(y))$ for every $y \in U_{\bar{y}}$ or, equivalently,  $F^{-1}(x)\in \partial^{c_{p}} \phi(x)$ for every $x\in U_{\bar{x}}$. In particular, $\partial^{c_{p}} \phi(x)\neq \emptyset$, for every $x\in U_{\bar{x}}$.

Now fix $x\in U_{\bar{x}}$ and   pick $y\in \partial^{c_{p}} \phi(x)\subset U_{\bar{y}}$.  
Since $z\mapsto c_{p}(z, y)$ is differentiable on $U_{\bar{x}}$, we get 
\begin{equation}\label{eq:cpzyxy}
c_{p} (z,y)= c_{p}(x,y)+(D_{x}c_{p}(x,y))[(\exp_{x}^{h})^{-1}(z)]+o({\rm d}_{h}(z,x)), \quad \text{for every $z\in U_{\bar{x}}$}.
\end{equation}
From $y\in \partial^{c_{p}} \phi(x)$, we have
\begin{align*}
\phi(z)-\phi(x) &\overset{\eqref{eq:casesphidcp}}{\leq}   c_{p}(z,y)-c_{p}(x,y) \\
& \overset{\eqref{eq:cpzyxy}}{=} (D_{x}c_{p}(x,y))[(\exp_{x}^{h})^{-1}(z)]+o({\rm d}_{h}(z,x)), \; \text{for every $z\in U_{\bar{x}}$}.
\end{align*}
Since $\phi$ is differentiable at  $x\in U_{\bar{x}}$, it follows that 
$$d\phi(x)=D_{x}c_{p}(x,y)= -D \cL_{p}(w),$$
where $w\in \Int(\cC_{x})$ is such that $y=\exp^{g}_{x}(w)$, { which by \eqref{eq:nablagqphiDHpdphi} is equivalent to 
}
\begin{equation*}
w= D \cH_{p} \left( -d\phi(x)\right)=\nabla^{q}_{g}  \phi(x),
\end{equation*}
which yields $y=\exp^{g}_{x}(w)= \exp^{g}_{x}(\nabla^{q}_{g}  \phi(x))$,  concluding the proof of \eqref{eq:cpsubdifphi}.
\\

\textbf{Step 2}.
Up to further reducing the open set $U_{\bar x}$ and the scale parameter $s>0$ in the definition of $\phi$, we can assume that $\phi$ satisfies the assumptions of Proposition \ref{thm:HopfLax} and that $\bar{\phi}:\Psi'_{1/2}(U_{\bar{x}})\to \R$ defined by
\begin{equation}\label{def:phit}
\bar{\phi}(\Psi'_{1/2}(x)):= \phi(x)-c_{p}^{0,1/2}(x, \Psi'_{1/2}(x)), \quad \forall x\in \Psi'_{1/2}(U_{\bar{x}}),
\end{equation}
is still a Lyapunov function satisfying  
\begin{equation}\label{eq:barphiInj}
\big(-g(\nabla^{q}_{g}\bar{\phi}, \nabla^{q}_{g} \bar{\phi}) \big)^{1/2}< {\rm{Inj}}_{g}(\Psi'_{1/2}(U_{\bar{x}})).
\end{equation}
\\It is easily seen that 
$$x=\exp_{z}\left( - \frac{1}{2} \nabla^{q}_{g}\bar{\phi}(z)\right), \quad \forall x\in U_{\bar x}, \, z:= \Psi'_{1/2}(x).$$
Moreover, thanks to \eqref{eq:barphiInj}, the curve 
$$
[0,1]\ni t\mapsto \exp_{z}\left( (t-1)  \nabla^{q}_{g}\bar{\phi}(z)\right), \quad \forall x\in U_{\bar x}, \, z:= \Psi'_{1/2}(x),
$$
is still a $g$-geodesic, solving the method of characteristics associated to the optimal transport problem (see for instance \cite[Ch. 5.1]{CS} and \cite[Ch. 7]{Vil}; this is actually a variation of step 1 and of the proof of Proposition \ref{thm:HopfLax}).
It follows that the map 
$$\Xi:\Psi'_{1/2}(U_{\bar{x}})\to M, \quad \Xi(z):=  \exp_{z}\left( - \nabla^{q}_{g}\bar{\phi}(z)\right),  \quad \forall x\in U_{\bar x}, \, z:= \Psi'_{1/2}(x)$$
is a $c_{p}$-optimal transport map. In other words, for every  $\bar{\mu}\in {\mathcal P}(M)$ with $\supp \, \bar{\mu}\subset \Psi'_{1/2}(U_{\bar{x}})$, 
$$
\pi:=(\Xi, \id)_{\sharp} \, \bar \mu \in  {\mathcal P}(M \times M)
$$
is a $c_{p}$-optimal coupling for its marginals. We conclude that 
$$
(\ee_{0}, \ee_{1})_{\sharp}\Pi= (\Xi, \id)_{\sharp} \big( (\ee_{1})_{\sharp}\Pi \big)
$$
is a $c_{p}$-optimal coupling for its marginals and thus $\Pi$ is a $c_{p}$-optimal dynamical plan.
\end{proof}

We next establish some basic properties of $c_{p}$-optimal  dynamical  plans which will turn out to be useful for the OT-characterization of Lorentzian Ricci curvature upper and lower bounds.
\begin{lemma}\label{prop:proRegPOpt}
Let $(M,g,\cC)$ be a space-time and let $\Pi$ be a regular  $c_{p}$-optimal dynamical plan with 
\begin{align*}
(\ee_{t})_{\sharp}\Pi&=\mu_{t}=(\Psi_{1/2}^{t})_{\sharp}\mu_{1/2}\ll \vol_{g},\\
 \Psi_{1/2}^{t}(x)&={\rm exp}^{g}_{x}((t-1/2)\nabla^{q}_{g}\phi(x)).
 \end{align*}
  Then 
\begin{enumerate}
\item $\nabla \nabla^{q}_{g}\phi(x):T_{x}M\to T_{x}M$ is a symmetric endomorphism, i.e. 
$$g(\nabla_{X} \nabla^{q}_{g}\phi, Y)= g(\nabla_{Y} \nabla^{q}_{g}\phi, X), \quad   \forall X,Y\in T_{x}M,\; \forall x\in \supp(\mu_{s}).$$
\item  $\Psi_{1/2}^{t}:\supp \, \mu_{1/2}\to M$ is a diffeomorphism on its image for all $t\in [0,1]$.
\item  Calling $\mu_{t}=\rho_{t} \vol_{g}$, the  following Monge-Amp\`ere  equation holds true:
\begin{equation}\label{eq:MongeAmpere}
\rho_{1/2}(x)= \Det_{g}[ D \Psi_{1/2}^{t} (x)] \rho_{t}(\Psi_{1/2}^{t}(x)), \quad \mu_{1/2}\text{-a.e. } x,  \;\forall t\in[0,1].
\end{equation}
\end{enumerate}
\end{lemma}

\begin{proof}
(1) By construction, $\phi$ is smooth on $U$ and $g(\nabla_{g} \phi, \nabla_{g} \phi)<0$. Thus also  $\nabla^{q}_{g}\phi:M\to TM$ is a smooth section of the tangent bundle and the symmetry of the endomorphism $\nabla \nabla^{q}_{g}\phi(x):T_{x}M\to T_{x}M$ follows by Schwartz's Lemma.

(2) is a straightforward consequence of assumption (1) in Definition \ref{def:RgcpOptDynPlan}.

(3) is a straightforward consequence of the change of variable formula.

\end{proof}

\noindent
It will be convenient to consider the matrix of Jacobi fields
\begin{equation}\label{def:Bt}
\cB_{t}(x):=D\Psi_{1/2}^{t}(x):T_{x}M\to T_{\Psi_{1/2}^{t}(x)} M, \quad \text{for all }x\in \supp \, \mu_{1/2},
\end{equation}
along the geodesic $t\mapsto \gamma_{t}:=\Psi_{1/2}^{t}(x)$; recalling  \eqref{eq:JacobiEq}, $\cB_{t}(x)$  satisfies the Jacobi equation
 \begin{equation}\label{eq:JacobieqB}
\nabla_{t} \nabla_{t} \cB_{t}(x)+R(\cB_{t}(x), \dot{\gamma}_{t})\dot{\gamma_{t}}=0,
\end{equation}
where we denoted $\nabla_{t}:=\nabla_{\dot{\gamma}_{t}}$ for short.

Since by  Lemma \ref{prop:proRegPOpt}  we know that $\cB_{t}$ is non-singular for all $x \in \supp \, \mu_{1/2}$, we can define
\begin{equation}\label{def:Ut}
\cU_{t}(x):=\nabla_{t} \cB_{t} \circ \cB_{t}^{-1}: T_{\gamma_{t}} M \to T_{\gamma_{t}} M, \quad \text{for all }x\in \supp \, \mu_{1/2}.
\end{equation}
 The next proposition will be key in the proof of the lower bounds on causal Ricci curvature. It is well known in Riemannian and Lorentzian geometry, 
see for instance \cite[Lemma 3.1]{CMS2} and \cite{EhJuKi}; in any case we report a proof for the reader's convenience.

\begin{proposition}\label{prop:VolDist}
Let $\cU_{t}$ be defined in \eqref{def:Ut}.   Then $\cU_{t}$ is a symmetric endomorphism of $T_{\gamma_{t}}M$ (i.e. the matrix $(\cU_{t})_{ij}$ with respect to an orthonormal basis is symmetric) and it holds
\begin{align*}
\nabla_t \cU_{t} + \cU_{t}^{2} + R(\cdot, \dot{\gamma}_t)\dot{\gamma}_t=0.
\end{align*}
Taking the trace with respect to $g$ yields
\begin{equation}\label{eq:truT'}
\Tr_{g}(\nabla_t \cU_{t}) + \Tr_{g}(\cU_{t}^{2}) +\Ric(\dot{\gamma}_{t}, \dot{\gamma}_{t})=0.
\end{equation}
Setting $y(t):=\log\Det_{g}\cB_{t}$, it holds
\begin{equation}\label{eq:eqdiffyx}
y''(t)+ \frac{1}{n}(y'(t))^2  +  \Ric(\dot\gamma_{t}, \dot\gamma_{t}) \leq 0.
\end{equation}
\end{proposition}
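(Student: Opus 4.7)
The plan is to (i) convert the Jacobi equation \eqref{eq:JacobieqB} into a Riccati equation for $\cU_t$, (ii) propagate $g$-symmetry from $t=1/2$, (iii) take $g$-trace to deduce \eqref{eq:truT'}, and (iv) apply a Cauchy--Schwarz inequality exploiting the eigenvalue structure of $\cU_t$ to obtain \eqref{eq:eqdiffyx}. For (i), I differentiate the defining identity $\nabla_t \cB_t = \cU_t \cB_t$ once more in $t$ and use \eqref{eq:JacobieqB} together with the $\R$-linearity of $R(\cdot,\dot\gamma_t)\dot\gamma_t$ in its first slot to obtain
\begin{equation*}
(\nabla_t\cU_t + \cU_t^2)\cB_t = \nabla_t\nabla_t \cB_t = -R(\cdot,\dot\gamma_t)\dot\gamma_t\circ \cB_t;
\end{equation*}
right-composition with $\cB_t^{-1}$, which exists $\mu_{1/2}$-a.e.\ by Proposition \ref{prop:proRegPOpt}(2), yields the Riccati identity $\nabla_t\cU_t + \cU_t^2 + R(\cdot,\dot\gamma_t)\dot\gamma_t = 0$.

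For (ii), at $t=1/2$ one has $\cB_{1/2} = \id$ and $\nabla_t\cB_t|_{t=1/2} = \nabla(\nabla^q_g \phi)$, so $\cU_{1/2}$ coincides with the $g$-symmetric endomorphism from Proposition \ref{prop:proRegPOpt}(1). Since $R(\cdot,\dot\gamma_t)\dot\gamma_t$ is $g$-symmetric (by the curvature identity $g(R(X,V)V,Y) = g(R(Y,V)V,X)$), splitting $\cU_t$ into its $g$-symmetric and $g$-antisymmetric parts and isolating the $g$-antisymmetric component of the Riccati equation gives a homogeneous linear ODE for the antisymmetric part with vanishing initial datum; uniqueness then forces $\cU_t$ to remain $g$-symmetric for all $t$.

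Taking $g$-trace of the Riccati identity and using $\Tr_g R(\cdot,\dot\gamma_t)\dot\gamma_t = \Ric(\dot\gamma_t,\dot\gamma_t)$ from definition \eqref{def:Ricci} yields \eqref{eq:truT'}. The Jacobi formula then gives $y'(t) = \tfrac{d}{dt}\log\Det_g \cB_t = \Tr_g(\cB_t^{-1}\nabla_t \cB_t) = \Tr_g\cU_t$, hence $y''(t) = \Tr_g(\nabla_t \cU_t) = -\Tr_g(\cU_t^2) - \Ric(\dot\gamma_t,\dot\gamma_t)$ by \eqref{eq:truT'}. Consequently
\begin{equation*}
y''(t) + \tfrac{1}{n}(y'(t))^2 + \Ric(\dot\gamma_t,\dot\gamma_t) = \tfrac{1}{n}(\Tr_g\cU_t)^2 - \Tr_g(\cU_t^2),
\end{equation*}
so \eqref{eq:eqdiffyx} reduces to the Cauchy--Schwarz-type bound $(\Tr_g\cU_t)^2 \le n\,\Tr_g(\cU_t^2)$.

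The main obstacle is this final inequality: in Lorentzian signature, $g$-symmetry does not automatically force the eigenvalues of $\cU_t$ to be real, and a naive Cauchy--Schwarz applied to ``eigenvalues'' may fail. The crucial additional structure is that $\dot\gamma_t$ is an eigenvector of $\cU_t$ with eigenvalue $0$: indeed, the trajectory $t\mapsto \gamma_t = \Psi^t_{1/2}(x)$ is a $g$-geodesic whose tangent vector at $t=1/2$ equals $v(x):=\nabla^q_g\phi(x)$, and a direct Jacobi-field calculation of $\cB_t(v(x))$ from the variation $s\mapsto \Psi^t_{1/2}(\exp^g_x(sv(x)))$, together with the fact that along the transport $v$ is tangent to its own integral curves, gives $\cB_t(v(x)) = \dot\gamma_t$ and therefore $\cU_t(\dot\gamma_t) = \nabla_t\cB_t(v(x)) = \nabla_t\dot\gamma_t = 0$. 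Combined with $g$-symmetry, $\cU_t$ preserves the spacelike orthogonal complement $\dot\gamma_t^\perp$, on which it restricts to a genuine Euclidean symmetric endomorphism with real eigenvalues $\mu_1,\ldots,\mu_{n-1}$; hence $\Tr_g\cU_t = \sum_i\mu_i$, $\Tr_g\cU_t^2 = \sum_i\mu_i^2$, and ordinary Cauchy--Schwarz yields $(\Tr_g\cU_t)^2 \le (n-1)\Tr_g(\cU_t^2) \le n\,\Tr_g(\cU_t^2)$, establishing \eqref{eq:eqdiffyx}.
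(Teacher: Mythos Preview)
Your steps (i)--(iii) and the identity $y'=\Tr_g\cU_t$ match the paper's proof; your symmetry propagation in (ii) via the linear ODE $\nabla_tA_t+S_tA_t+A_tS_t=0$ for the $g$-antisymmetric part $A_t$ is an equivalent variant of the paper's Wronskian argument (which shows $(\nabla_t\cB_t^*)\cB_t-\cB_t^*(\nabla_t\cB_t)$ is constant in $t$ and vanishes at $t=1/2$).

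The gap is in (iv). Your claim $\cU_t(\dot\gamma_t)=0$ rests on the assertion that the transport geodesic $\gamma$ is an integral curve of the \emph{time-independent} field $v=\nabla^q_g\phi_{1/2}$, i.e.\ that $\nabla^q_g\phi_{1/2}(\gamma_{1/2+s})=\dot\gamma_{1/2+s}$ for all $s$. But \eqref{eq:velocity} only gives $\dot\gamma_t=\nabla^q_g\phi_t(\gamma_t)$ with the \emph{evolved} potential $\phi_t$, which generically differs from $\phi_{1/2}$ along $\gamma$. Equivalently, the Jacobi field $t\mapsto\cB_t(v(x))$ has initial data $J(1/2)=\dot\gamma_{1/2}$ and $\nabla_tJ|_{1/2}=\nabla_{v(x)}(\nabla^q_g\phi_{1/2})(x)$, whereas the Jacobi field $t\mapsto\dot\gamma_t$ has vanishing initial covariant derivative; these coincide only when $\nabla_v(\nabla^q_g\phi_{1/2})(x)=0$, which is the single-point situation of Lemma~\ref{lem:SmoothKP} (where $\Hess_{\phi_{1/2}}=0$), not a property that holds across $\supp\mu_{1/2}$. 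Already for $\phi=t+\tfrac{\epsilon}{2}x^2$ on Minkowski $\R^{1,1}$ one computes $\nabla_v(\nabla^q_g\phi)\neq 0$ at any $x\neq 0$, and the result is not even proportional to $v$; hence $\dot\gamma_t^\perp$ is not $\cU_t$-invariant and your reduction to a Euclidean Cauchy--Schwarz on the spacelike orthogonal complement does not go through. The paper's argument at this step is different: it does not isolate the $\dot\gamma_t$-direction at all, but simply asserts symmetry of the matrix $(\cU_t)_{ij}$ in an orthonormal frame and invokes the scalar Cauchy--Schwarz inequality $\tr\big[(\cU_t)_{ij}^{2}\big]\ge\tfrac{1}{n}\big(\tr(\cU_t)_{ij}\big)^2$ directly.
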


\begin{proof}
Using \eqref{eq:JacobieqB} we get
\begin{align*}
\nabla_t \cU_{t}&= \left(\nabla_t\nabla_t \cB_{t}\right) \cB_{t}^{-1}  +  \nabla_t \cB_{t}  \nabla_{t}(\cB_{t}^{-1}) = - R(\cdot, \dot{\gamma}_{t}) \dot{\gamma}_{t}   - (\nabla_t \cB_{t}) \cB_{t}^{-1} (\nabla_t \cB_{t}) \cB_{t}^{-1} \\
&= - R(\cdot, \dot{\gamma}_{t}) \dot{\gamma}_{t}    - \cU_{t}^{2}. 
\end{align*}
Taking the trace with respect to $g$ yields the second identity. 
\\The rest of the proof is devoted to show \eqref{eq:eqdiffyx}. Let $(e_{i}(t))_{i=1,\dots,n}$  be an orthonormal basis of $T_{\gamma_{t}}M$ parallel along $\gamma$.  Setting $y(t)=\log\det \cB_{t}$, we have that
\begin{align}
y'(t_{0}) &= \left. \frac{d}{dt} \right|_{ t=t_{0}} \log\Det_{g} \big( \cB_{t}\cB_{t_{0}}^{-1} \big)  \nonumber \\
&= \left. \frac{d}{dt} \right|_{ t=t_{0}} \log\det \left[ \big( g(e_{i}(t), e_{j}(t))  g (\cB_{t}\cB_{t_{0}}^{-1} e_{i}(t), e_{j}(t)) \big)_{i,j} \right]  \nonumber \\
&= \Tr_{g} \left[(\nabla_{t} \cB_{t}) \cB_{t_{0}}^{-1} \right]|_{t=t_{0}} =   \Tr_{g}(\cU_{t_{0}}). \quad  \label{eq:y'}
\end{align}
We next show that $\cU_{t}$ is a symmetric endomorphism of $T_{\gamma_{t}}M$, i.e. the matrix $(\cU_{t})_{ij}$ is symmetric. 
To this aim, calling $\cU_{t}^{*}$ the adjoint, we observe that 
\begin{equation}\label{eq:U-U*}
\cU_{t}^{*}-\cU_{t} =(\cB_{t}^{*})^{-1} \left[ (\nabla_{t}\cB_{t}^{*}) \cB_{t}   - \cB_{t}^{*} (\nabla_{t}\cB_{t}) \right]   \cB_{t}^{-1},
\end{equation}
and that 
\begin{equation}\label{eq:DtUU*}
\nabla_{t} \left[ (\nabla_{t}\cB_{t}^{*}) \cB_{t}   - \cB_{t}^{*} (\nabla_{t}\cB_{t}) \right] = (\nabla_{t}\nabla_{t}\cB_{t}^{*}) \cB_{t}   - \cB_{t}^{*} (\nabla_{t}\nabla_{t}\cB_{t}).
\end{equation}
Now the Jacobi equation \eqref{eq:JacobieqB} reads
\begin{equation}\label{eq:D2B}
\nabla_{t} \nabla_{t} \cB_{t}= -  R(\cB_{t}, \dot{\gamma}_{t})\dot{\gamma}_{t}=- {\mathcal R}(t) \cB_{t},  
\end{equation}
where 
\begin{equation*}
{\mathcal R}(t):T_{\gamma_{t}} M \to T_{\gamma_{t}} M, \quad  {\mathcal R}(t)[v]:=R(v,\dot{\gamma}_{t}) \dot{\gamma}_{t} 
\end{equation*}
is symmetric; indeed, in the orthonormal basis $(e_{i}(t))_{i=1,\ldots,n}$, it  is represented by the symmetric matrix 
$$\Big(g(e_{i}(t), e_{j}(t))\, g(R(e_{i}(t),\dot{\gamma}_{t}) \dot{\gamma}_{t}, e_{j}(t)) \Big)_{i,j=1,\dots, n}.$$ 
Plugging \eqref{eq:D2B} into \eqref{eq:DtUU*}, we obtain that 
$$(\nabla_{t}\cB_{t}^{*}) \cB_{t}   - \cB_{t}^{*} (\nabla_{t}\cB_{t})$$
 is constant in $t$.
But $\cB_{1/2}={\rm Id}_{T_{\gamma_{1/2}}M}$ and $\nabla_{t}\cB_{t}|_{t=1/2}= -\nabla \nabla_{g}^{q}\phi$ is symmetric by assertion (1) in   Lemma \ref{prop:proRegPOpt}.  
Taking into account \eqref{eq:U-U*}, we conclude that $\cU_{t}$ is symmetric for every $t\in [0,1]$.
\\Using that $\cU_{t}$ is symmetric, by Cauchy-Schwartz inequality,  we have that  
\begin{equation}\label{eq:CSUT}
\Tr_{g}\big[\cU_{t}^{2} \big]= \tr\big[ (\cU^{2})_{ij} \big]\geq \frac{1}{n} \left(\tr\big[ \cU_{ij} \big] \right)^{2}= \frac{1}{n} \left(\Tr_{g}\big[ \cU_{ij} \big] \right)^{2}.
\end{equation}
 The desired estimate \eqref{eq:eqdiffyx} then follows  from the combination of \eqref{eq:truT'}, \eqref{eq:y'} and \eqref{eq:CSUT}.
\end{proof}

\section{Optimal transport formulation of the Einstein equations}

The Einstein equations of General Relativity for an $n$-dimensional space-time $(M^{n},g,\cC)$,  $n\geq 3$, read as
\begin{equation}\label{eq:EFE}
\Ric-\frac{1}{2} {\rm Scal} \, g +\Lambda g=8\pi T,
\end{equation}
where ${\rm Scal}$ is the scalar curvature,  $\Lambda \in \R$ is the cosmological constant, and $T$ is the energy-momentum tensor.

\begin{lemma}\label{Lem-T}
The space-time  $(M^{n},g,\cC)$,    $n\geq 3$, satisfies the Einstein Equation \eqref{eq:EFE} if and only if 
\begin{equation}\label{eq:EFERicT}
\Ric=\frac{2\Lambda}{n-2} g + 8 \pi T - \frac{8\pi} {n-2} \Tr_{g}(T) \, g.    
\end{equation}
\end{lemma}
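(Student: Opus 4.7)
The plan is straightforward linear algebra on symmetric $(0,2)$-tensors: the two equations are related by the standard trick of tracing the Einstein equation to solve for the scalar curvature, then substituting back. The equivalence amounts to checking that this algebraic manipulation is reversible, which it is as long as $n\neq 2$ (which is why the hypothesis $n\geq 3$ enters).

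First, I would prove the forward implication. Assume \eqref{eq:EFE}. Take the $g$-trace of both sides, using $\Tr_g(\Ric)={\rm Scal}$ and $\Tr_g(g)=n$, to obtain
\begin{equation*}
{\rm Scal}-\tfrac{n}{2}{\rm Scal}+n\Lambda=8\pi\,\Tr_g(T),
\end{equation*}
so that
\begin{equation*}
{\rm Scal}=\frac{2(n\Lambda-8\pi\,\Tr_g(T))}{n-2}.
\end{equation*}
Here one uses $n\geq 3$ to divide by $n-2$. Substituting this expression for ${\rm Scal}$ into \eqref{eq:EFE} and solving for $\Ric$ yields
\begin{equation*}
\Ric=\tfrac{1}{2}{\rm Scal}\,g-\Lambda g+8\pi T=\frac{n\Lambda-8\pi\,\Tr_g(T)}{n-2}\,g-\Lambda g+8\pi T,
\end{equation*}
and combining the coefficients of $g$ via $\frac{n\Lambda}{n-2}-\Lambda=\frac{2\Lambda}{n-2}$ gives exactly \eqref{eq:EFERicT}.

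For the reverse implication, assume \eqref{eq:EFERicT} and again take the $g$-trace: since $\Tr_g(T)$ is a scalar and $\Tr_g(g)=n$, we get
\begin{equation*}
{\rm Scal}=\frac{2n\Lambda}{n-2}+8\pi\,\Tr_g(T)-\frac{8\pi n}{n-2}\Tr_g(T)=\frac{2n\Lambda}{n-2}-\frac{16\pi}{n-2}\Tr_g(T).
\end{equation*}
Plugging this together with \eqref{eq:EFERicT} into the combination $\Ric-\tfrac12{\rm Scal}\,g+\Lambda g$, the terms proportional to $\Tr_g(T)\,g$ cancel and the coefficient of $\Lambda g$ becomes $\tfrac{2\Lambda}{n-2}-\tfrac{n\Lambda}{n-2}+\Lambda=0$, leaving $8\pi T$, which is \eqref{eq:EFE}.

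There is no real obstacle here beyond bookkeeping of the $\frac{1}{n-2}$ coefficients; the only genuine hypothesis being used is $n\geq 3$, which guarantees that the trace equation for ${\rm Scal}$ can be inverted. In practice I would write the proof in the compact form ``take the trace, solve for ${\rm Scal}$, substitute back,'' observing that both the forward and the reverse substitution are inverses of each other.
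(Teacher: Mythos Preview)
Your proof is correct and follows exactly the same approach as the paper: trace the Einstein equation to solve for ${\rm Scal}$, then substitute back. The paper only writes out the forward direction explicitly and asserts the equivalence, while you carry out both directions; this is a harmless expansion of the same argument.
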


\begin{proof}
Taking the trace of \eqref{eq:EFE},  one can express the scalar curvature as 
\begin{equation}\label{eq:Scal}
 {\rm Scal} = \frac{2n \Lambda}{n-2} - \frac{16 \pi}{n-2} \Tr_{g}(T).
\end{equation}
Plugging  \eqref{eq:Scal} into  \ref{eq:EFE} gives the equivalent formulation \eqref{eq:EFERicT} of Einstein equations just in  terms of the metric, the  Ricci and the energy-momentum tensors.
\end{proof}

The optimal transport formulation of the Einstein equations will consist separately  of an optimal transport characterization of the two inequalities  
$$\Ric\geq \tilde{T} \quad\text{and}\quad \Ric\leq \tilde{T},$$
where
\begin{equation}\label{eq:tildeTgT}
\tilde{T}:= \frac{2\Lambda}{n-2} g + 8 \pi T - \frac{8\pi} {n-2} \Tr_{g}(T) \, g.
\end{equation}

Subsection  \ref{SS:Ricgeq} will be devoted to the lower bound and Subsection \ref{SS:Ricleq} to the upper bound on the Ricci tensor.

A key role in such an optimal transport formulation will be played by the (relative) Boltzmann-Shannon entropy defined below. Denote by $\vol_{g}$ the standard volume measure on $(M,g)$. Given an absolutely continuous probability measure $\mu=\varrho\, \vol_{g}$ with density $\varrho\in C_{c}(M)$, define its  Boltzmann-Shannon entropy (relative to $\vol_{g}$) as
\begin{equation}\label{eq:defEnt}
 \Ent(\mu|\vol_{g}):=\int_{M} \varrho \log \varrho \, d\vol_{g}.
\end{equation}

\subsection{OT-characterization of $\Ric\geq \tilde{T}$}\label{SS:Ricgeq}

In establishing the  Ricci curvature lower bounds, the next elementary lemma will be key (for the proof see for instance \cite[Chapter 16]{Vil}  or \cite[Lemma 9.1]{AMS})

\begin{lemma}\label{lem:comparisonODE}
Define the function $\rG:[0,1]\times [0,1]\to [0,1]$ by
\begin{equation}
  \label{eq:defgreen}
  \rG(s,t):=
  \begin{cases}
    (1-t)s&\text{if }s\in[0,t],\\
    t(1-s)&\text{if }s\in [t,1],
  \end{cases}
\end{equation}
so that for all $t\in (0,1)$ one has
\begin{equation}
  \label{eq:greenindentity}
  -\frac{\partial^2}{\partial s^2}  \rG(s,t) =\delta_{t}\quad\text{in $\mathscr D'(0,1)$},\qquad 
   \rG (0,t)= \rG (1,t)=0.
\end{equation}
If  $u\in C([0,1],\R)$ satisfies $u''\geq f$ in $\mathscr D'(0,1)$ for some $f\in L^{1}(0,1)$ then 
\begin{align}\label{equ:SLv}
u(t)\leq (1-t) u(0) + t u(1) - \int_{0}^{1} \rG (s,t)\,   f(s) \, d s, \quad \forall t\in [0,1].
\end{align}
In particular, if $f\equiv c\in \R$ then 
\begin{align}\label{equ:SLvconst}
u(t)\leq (1-t) u(0) + t u(1) -c \frac{t(1-t)}{2}, \quad \forall t\in [0,1].
\end{align}
\end{lemma}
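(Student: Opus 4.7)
\textbf{Proof proposal for Lemma \ref{lem:comparisonODE}.} The strategy is the classical Green's function argument: recognize $\rG$ as the fundamental solution of $-\partial^2/\partial s^2$ on $[0,1]$ with Dirichlet boundary conditions, and then use the distributional maximum principle for convex functions. The main work is bookkeeping of signs; there is no serious obstacle.

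First I would verify \eqref{eq:greenindentity} directly. The function $s\mapsto \rG(s,t)$ is piecewise affine, equals $0$ at $s=0$ and $s=1$, and has a corner of size $-1$ (a unit drop in slope) at $s=t$: indeed the slope jumps from $1-t$ to $-t$, a change of $-1$. Hence $\partial_s^2 \rG(\cdot,t)=-\delta_t$ in $\mathscr D'(0,1)$. A symmetric computation (or the manifest symmetry $\rG(s,t)=\rG(t,s)$) gives $\partial_t^2 \rG(s,\cdot)=-\delta_s$ in $\mathscr D'(0,1)$.

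Next, define
\begin{equation*}
h(t):=\int_0^1 \rG(s,t)\,f(s)\,ds,\qquad A(t):=(1-t)u(0)+tu(1),
\end{equation*}
and set $\phi(t):=u(t)-A(t)+h(t)$. By Fubini and the previous step, $h''=-f$ in $\mathscr D'(0,1)$; since $A$ is affine, $\phi''=u''-f\geq 0$ in $\mathscr D'(0,1)$. Moreover $\phi\in C([0,1])$ because $h$ is continuous (it is the integral of a bounded kernel against $f\in L^1$) and $\phi(0)=\phi(1)=0$ since $\rG(s,0)=\rG(s,1)=0$. A continuous function on $[0,1]$ whose distributional second derivative is nonnegative is convex, and a convex function vanishing at both endpoints of $[0,1]$ is nonpositive throughout $[0,1]$. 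Thus $\phi\leq 0$, which is exactly \eqref{equ:SLv}.

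Finally, for the special case $f\equiv c$, it suffices to compute
\begin{equation*}
\int_0^1 \rG(s,t)\,ds
=(1-t)\int_0^t s\,ds+t\int_t^1(1-s)\,ds
=(1-t)\tfrac{t^2}{2}+t\tfrac{(1-t)^2}{2}
=\tfrac{t(1-t)}{2},
\end{equation*}
and plug this into \eqref{equ:SLv} to recover \eqref{equ:SLvconst}. The only subtle point in the whole argument is the appeal to the fact that a distributionally convex continuous function is convex in the classical sense, which is standard (e.g.\ by mollification and passage to the limit).
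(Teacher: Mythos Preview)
Your proof is correct and follows the standard Green's function argument. The paper does not supply its own proof of this lemma; it simply refers the reader to \cite[Chapter 16]{Vil} and \cite[Lemma 9.1]{AMS}, where essentially the same reasoning (define the auxiliary function $u(t)-A(t)+h(t)$, observe it is distributionally convex and vanishes at the endpoints, conclude it is nonpositive) is carried out.
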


The characterization of Ricci curvature lower bounds (i.e. $\Ric\geq Kg$ for some constant $K\in \R$) via displacement convexity of the  entropy is by now classical in the Riemannian setting, let us briefly recall the key contributions. Otto \& Villani \cite{OV} gave a nice heuristic argument for the implication ``$\Ric\geq K g \Rightarrow$ $K$-convexity of the  entropy''; this implication was  proved  for $K=0$ by Cordero-Erausquin, McCann \& Schmuckenschl\"ager \cite{CMS}; the equivalence for every $K\in \R$ was then established by Sturm \& von Renesse \cite{SVR}.
Our optimal transport characterization of $\Ric\geq \frac{2\Lambda}{n-2} g + 8 \pi T - \frac{8\pi} {n-2} \Tr_{g}(T) \, g$ is inspired by such fundamental papers (compare also with \cite{KM} for the implication (3)$\Rightarrow$(1)). Let us also mention that the characterization of $\Ric\geq Kg$ for $K\geq 0$  via displacement convexity in the globally hyperbolic Lorentzian setting has recently been obtained independently by Mc Cann \cite{McCann18}. Note that Corollary \ref{thm:RicciLowerBound} extends such a result to any lower bounds $K\in \R$ and to the case of general (possibly non   globally hyperbolic) space  times.

The next general result  will be applied with $n\geq 3$ and $\tilde{T}$ be as in \eqref{eq:tildeTgT}.

\begin{theorem}[OT-characterization of $\Ric\geq \tilde{T}$]\label{thm:RiccigeqT}
Let $(M,g,\cC)$ be a space-time of dimension $n\geq 2$ and let $\tilde{T}$ be a quadratic form on $M$.  Then the following are equivalent:
\begin{enumerate}
\item  $\Ric(v,v)\geq \tilde{T}(v,v)$ for every causal vector $v\in \cC$.
\item  For every $p\in (0,1)$,  for every regular dynamical $c_{p}$-optimal  plan  $\Pi$ it holds
\begin{equation}\label{eq:EntConvtT}
\Ent(\mu_t|\vol_{g})\leq (1-t) \Ent(\mu_0|\vol_{g})+t \Ent(\mu_1|\vol_{g}) - \int  \int_{0}^{1} \rG(s,t) \tilde{T}(\dot{\gamma}_{s}, \dot{\gamma}_{s}) ds\, d\Pi(\gamma),
\end{equation}
where we denoted $\mu_{t}:=(\ee_{t})_{\sharp}\Pi$, $t\in [0,1]$, the curve of probability measures associated to $\Pi$.
\item  There exists $p\in (0,1)$ such that for every regular dynamical $c_{p}$-optimal  plan  $\Pi$ the convexity property  \eqref{eq:EntConvtT} holds.
\end{enumerate}
\end{theorem}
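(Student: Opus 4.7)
For (1)$\Rightarrow$(2), fix a regular $c_p$-optimal dynamical plan $\Pi$ with transport map $\Psi_{1/2}^t$ driven by the smooth potential $\phi$, and write $\mu_t:=(\ee_t)_\sharp\Pi=\rho_t\vol_g$. By Proposition~\ref{prop:proRegPOpt} the Monge--Amp\`ere identity $\rho_{1/2}(x)=\Det_g[D\Psi_{1/2}^t(x)]\,\rho_t(\Psi_{1/2}^t(x))$ holds with equality for $t\in(0,1)$ and with $\leq$ at $t\in\{0,1\}$. Setting $y(t,x):=\log\Det_g\cB_t(x)$, a change of variables gives
\[
\Ent(\mu_t|\vol_g)=\Ent(\mu_{1/2}|\vol_g)-\int y(t,x)\,d\mu_{1/2}(x)
\]
with equality for $t\in(0,1)$ and $\geq$ at $t\in\{0,1\}$; in either case this goes in the direction needed for (2). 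Proposition~\ref{prop:VolDist} (discarding the non-negative term $\tfrac{1}{n}(y')^2$) yields $y''(s,x)\leq -\Ric(\dot\gamma_s,\dot\gamma_s)$ along each trajectory $\gamma:=\Psi_{1/2}^{(\cdot)}(x)$, and hypothesis~(1) strengthens this to $y''(s,x)+\tilde T(\dot\gamma_s,\dot\gamma_s)\leq 0$. Combining these with the Green-function identity $(1-t)y(0,x)+ty(1,x)-y(t,x)=\int_0^1 \rG(s,t)y''(s,x)\,ds$, the target inequality \eqref{eq:EntConvtT} reduces to
\[
\int\int_0^1 \rG(s,t)\bigl[y''(s,x)+\tilde T(\dot\gamma_s,\dot\gamma_s)\bigr]\,ds\,d\mu_{1/2}(x)\leq 0,
\]
which is immediate since $\rG\geq 0$ and the bracket is non-positive.

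The implication (2)$\Rightarrow$(3) is trivial. For (3)$\Rightarrow$(1), argue by contradiction: suppose $\Ric_{\bar x}(v_0,v_0)<\tilde T_{\bar x}(v_0,v_0)$ for some $\bar x\in M$ and $v_0\in\Int(\cC_{\bar x})$, with the strict inequality persisting on a neighborhood in $TM$ by continuity. For each small $\tau>0$, Lemma~\ref{lem:SmoothKP} provides a smooth $c_p$-concave potential $\phi^\tau$ on a neighborhood $U_{\bar x}$ with $\nabla^q_g\phi^\tau(\bar x)=\tau v_0$ and $\Hess_{\phi^\tau}(\bar x)=0$; differentiating $\nabla^q_g\phi=-|\nabla_g\phi|^{q-2}_g\nabla_g\phi$ shows that vanishing of $\Hess_\phi$ at $\bar x$ forces $\nabla\nabla^q_g\phi^\tau(\bar x)=0$, hence $\cU_{1/2}(\bar x)=0$ for the associated Jacobi data. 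For $r\in(0,r_0)$ set $\mu_{1/2}^{\tau,r}:=\vol_g(B_r^h(\bar x))^{-1}\vol_g\llcorner B_r^h(\bar x)$; by Lemma~\ref{lem:SmoothKP}(2) the induced plan $\Pi^{\tau,r}$ is a regular $c_p$-optimal dynamical plan.

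The main obstacle is to extract pointwise curvature information at $\bar x$ from the integrated inequality \eqref{eq:EntConvtT}. This is overcome by a localization/rescaling trick: for $\ve\in(0,1/2)$, restricting the trajectories of $\Pi^{\tau,r}$ to $[\tfrac{1}{2}-\ve,\tfrac{1}{2}+\ve]$ and reparametrizing affinely onto $[0,1]$ produces a new regular $c_p$-optimal plan $\tilde\Pi^{\tau,r}_\ve$ whose potential at the new midpoint is $(2\ve)^{p-1}\phi^\tau$, by the $(p-1)$-homogeneity $\nabla^q_g(\lambda\phi)=\lambda^{q-1}\nabla^q_g\phi$ combined with the velocity rescaling $\dot{\tilde\gamma}_s=2\ve\,\dot\gamma_{1/2+\ve(2s-1)}$. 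Applying \eqref{eq:EntConvtT} to $\tilde\Pi^{\tau,r}_\ve$ at $t=1/2$, Taylor-expanding both sides in $\ve$ about $\ve=0$ and recalling $\int_0^1 \rG(s,1/2)\,ds=1/8$, the $O(\ve^2)$ terms yield, after dividing by $\ve^2/2$ and sending $\ve\to 0$,
\[
U^{\tau,r}{}''(1/2)\ \geq\ \int \tilde T(\dot\gamma_{1/2},\dot\gamma_{1/2})\, d\Pi^{\tau,r}(\gamma),\qquad U^{\tau,r}(t):=\Ent(\mu_t^{\tau,r}|\vol_g).
\]

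Since everything is smooth, differentiating the Monge--Amp\`ere change-of-variables formula twice and invoking \eqref{eq:truT'} from Proposition~\ref{prop:VolDist} gives $U^{\tau,r}{}''(1/2)=\int\bigl[\Tr_g(\cU_{1/2}^2(x))+\Ric(\dot\gamma_{1/2}(x),\dot\gamma_{1/2}(x))\bigr]\,d\mu_{1/2}^{\tau,r}(x)$. Sending $r\to 0$ so that $\mu_{1/2}^{\tau,r}$ concentrates at $\bar x$, where $\cU_{1/2}(\bar x)=0$ and $\dot\gamma_{1/2}(\bar x)=\tau v_0$, one reaches $\tau^2\Ric_{\bar x}(v_0,v_0)\geq \tau^2\tilde T_{\bar x}(v_0,v_0)$, contradicting the strict inequality chosen at the outset. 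Null causal directions are then handled by continuity of both sides in $v$.
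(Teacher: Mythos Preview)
Your (1)$\Rightarrow$(2) is essentially the paper's argument. For (3)$\Rightarrow$(1) you take a genuinely different route. The paper fixes a \emph{single} plan: having chosen $\eta$ (your $\tau$) small, it uses the Riccati identity \eqref{eq:truT'} together with $\cU_{1/2}^{x_0}=0$ to show that $\Tr_g[(\cU_t^x)^2]<-\epsilon\, g(\dot\gamma_t,\dot\gamma_t)$ on \emph{all} of $[0,1]\times B_\delta(x_0)$ (the curvature-driven drift of $\cU_t$ from $0$ is $O(\eta^2)$, hence $\Tr_g(\cU_t^2)=O(\eta^4)\ll \epsilon\eta^2$). This forces $y_x''+(\tilde T+\epsilon g)(\dot\gamma,\dot\gamma)>0$ on the whole interval, and integrating against $\rG$ yields the strict reverse of \eqref{eq:EntConvtT}, a contradiction. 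You instead exploit that (3) applies to a whole \emph{family} of plans: the $\ve$-rescalings $\tilde\Pi^{\tau,r}_\ve$ let you Taylor-expand and pass to the limit $\ve\to 0$ to extract the pointwise bound $U''(1/2)\ge\int\tilde T(\dot\gamma_{1/2},\dot\gamma_{1/2})\,d\Pi$, after which $r\to 0$ finishes. Both arguments are correct; yours cleanly decouples the second-derivative extraction from the spatial localization, at the price of one check you skip: regularity of $\tilde\Pi^{\tau,r}_\ve$ requires that $(2\ve)^{p-1}\phi^\tau$ be $c_p$-concave on a neighbourhood containing $B_r^h(\bar x)$ \emph{uniformly as $\ve\to 0$}. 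This does follow from Lemma~\ref{lem:SmoothKP}, since $\Hess_{x,c_p}\sim(2\ve\tau)^{p-2}$ dominates $\Hess_{(2\ve)^{p-1}\phi^\tau}\sim(2\ve)^{p-1}$ (note $p-2<p-1<0$), but it deserves a sentence. The paper's single-plan approach sidesteps this entirely.
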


\begin{proof}

\textbf{(1)$\Rightarrow$ (2)}\\
Fix $p\in (0,1)$.
Let $\Pi$ be a regular dynamical $c_{p}$-optimal  coupling and let $(\mu_{t}:=(\ee_{t})_{\sharp}\Pi)_{t\in [0,1]}$ be the corresponding curve of probability measures with $\mu_{t}=\rho_{t} \, \vol_{g} \ll \vol_{g}$ compactly supported. By definition of  regular dynamical $c_{p}$-optimal  coupling there exists a smooth function  $\phi_{1/2}$  such that, calling 
$$\Psi_{1/2}^{t}(x)={\rm exp}^{g}_{x}((t-1/2)\nabla^{q}_{g}\phi_{1/2}(x)),$$ it holds  $\mu_{t}=(\Psi_{1/2}^{t})_{\sharp} \mu_{1/2}$ for every $t\in [0,1]$. Moreover the Jacobian $D\Psi_{1/2}^{t}$ is non-singular for every $t\in [0,1]$ on $\supp(\mu_{1/2})$.   Recall the definition of  $\cB_{t}$ and $\cU_{t}$ along the geodesic $t\mapsto \gamma_{t}:=\Psi_{1/2}^{t}(x)$.
\begin{align*}
\cB_{t}(x)&:=D\Psi_{1/2}^{t}(x):T_{x}M\to T_{\Psi_{1/2}^{t}(x)} M, &  \text{for } \mu_{1/2}\text{-a.e. }x, \\
\cU_{t}(x)&:=\nabla_{t} \cB_{t} \circ \cB_{t}^{-1}: T_{\gamma_{t}} M \to T_{\gamma_{t}} M,  &\text{for } \mu_{1/2}\text{-a.e. }x.
\end{align*}
Calling  $y_{x}(t):=\log\Det_{g}\cB_{t}(x)$ and $\gamma^{x}_{t}:=\Psi_{1/2}^{t}(x)$, from Proposition \ref{prop:VolDist} we get
\begin{equation}\label{eq:eqdiffyxPftT} 
y_{x}''(t)  + \tilde{T}(\dot\gamma^{x}_{t}, \dot\gamma^{x}_{t})\leq   y_{x}''(t)  +  \Ric(\dot\gamma^{x}_{t}, \dot\gamma^{x}_{t}) \leq 0, \quad \mu_{1/2}\text{-a.e. }x.
\end{equation}
Now, for  $t \in [0,1]$ we have
\begin{align}
\Ent(\mu_t|\vol_{g}) &=\int \log\rho_{t}(y) \, d\mu_{t}(y) = \int_M  \log\rho_{t}(\Psi_{1/2}^{t}(x)) \, d\mu_{1/2}(x)   \nonumber \\
&= \int  \log[\rho_{1/2}(x) (\Det_{g}(D\Psi_{1/2}^{t})(x))^{-1}]\, d\mu_{1/2}(x) \nonumber \\
&= \Ent(\mu_{1/2}|\vol_{g})-   \int  y_{x}(t)\, d\mu_{1/2}(x),  \label{eq:EntHpytT}
\end{align}
where the second to last equality follows from Lemma \ref{prop:proRegPOpt}(3). 
Using  \eqref{eq:eqdiffyxPftT} we obtain
\begin{equation}
\frac{d^{2}}{dt^{2}}\Ent(\mu_t|\vol_{g})\geq  \int  \tilde{T}(\dot\gamma^{x}_{t}, \dot\gamma^{x}_{t}) d\mu_{1/2}(x)=   \int \tilde{T} (\dot{\gamma}_{t}, \dot{\gamma}_{t}) d\Pi(\gamma), \quad \forall t\in (0,1).
\end{equation}
Using Lemma \ref{lem:comparisonODE},    we get  \eqref{eq:EntConvtT}.

\textbf{(2)$\Rightarrow$ (3)}: trivial.

\textbf{(3)$\Rightarrow$ (1)}
\\ 
We argue by contradiction. Assume there exist $x_{0}\in M$ and $v\in T_{x_{0}}M\cap \cC$ with $g(v,v)<0$  such that 
the Ricci curvature at $x_{0}$ in the direction of $v\in \cC_x$ satisfies 
\begin{equation}\label{eq:contradEnt}
\Ric(v,v)\leq (\tilde{T}+3\epsilon g)(v,v), 
\end{equation} 
 for some $\epsilon>0$. Thanks to Lemma \ref{lem:SmoothKP}, for $\eta\in (0, \bar{\eta}(x_{0}, v)]$ small enough, there exists $\bar{\delta}>0$ and a  $c_{p}$-convex function $\phi_{1/2}$, smooth on $B_{\bar{\delta}}(x_{0})$   and satisfying
 \begin{equation}\label{eq:AssPhiNonDeg}
\nabla_{g}^{q} \phi_{1/2}(x_{0}) = \eta v \neq 0 \quad \text{ and }  \Hess_{\phi_{1/2}} (x_{0})=0. 
\end{equation}
From now  on we fix  $\eta\in \big(0, \min( \bar{\eta}(x_{0}, v), {\rm inj}_{g}(B_{\bar{\delta}}(x_{0})) ) \big]$, where ${\rm inj}_{g}(B_{\bar{\delta}}(x_{0}))$ is the  injectivity radius of $B_{\bar{\delta}}(x_{0})$ with respect to the metric $g$.  
It is easily checked that, for $\delta\in (0,\bar{\delta})$, small enough the map
$$x\mapsto \Psi_{1/2}^{t}(x)=\exp^{g}_x((t-1/2)\nabla^{q}_{g} \phi_{1/2}(x))$$ is 
a diffeomorphism from $B_{\delta}(x_{0})$ onto its image for any $t\in [0,1]$. Moreover, since  $\nabla_{g}^{q} \phi_{1/2}(x_{0})\in \Int(\cC)$ and arguing by continuity and by parallel transport along the geodesics $t\mapsto \Psi_{1/2}^{t}(x), x\in B_{\delta}(x_{0})$,  for $\delta>0$ small enough we have that 
\begin{equation}\label{eq:PiRegular}
\bigcup_{t\in [0,1]}\bigcup_{x\in B_{\delta}(x_{0})} \frac{d}{dt} \Psi_{1/2}^{t}(x) \subset \subset \Int( \cC).
\end{equation} 
Define $\mu_{\frac 1 2 }:= \vol_{g}(B_{\delta}(x_{0}))^{-1}  \,\vol_{g} \llcorner B_{\delta}(x_{0})$.
Let $\Pi$ be the $c_{p}$-optimal  dynamical plan representing  the curve of probability measures  $\left(\mu_{t}:=(\Psi_{1/2}^{t})_{\sharp}\mu_{\frac  1 2} \right)_{t\in [0,1]}$. Note that \eqref{eq:PiRegular} together with \eqref{eq:AssPhiNonDeg} ensures that $\Pi$ is regular, for $\delta>0$ small enough.
\\Calling $\gamma^{x}_{t}:=\Psi_{1/2}^{t}(x)=\exp^{g}_{x}\big( (t-1/2) \nabla^{q}_{g} \phi_{1/2}(x)\big)$ for $x \in B_{\delta}(x_{0})$ the geodesic performing the transport,  note  that by continuity  there exists  $\delta>0$ small enough such that  
\begin{equation}\label{eq:AssAbsRicp}
\Ric(\dot{\gamma}^{x}_{t}, \dot{\gamma}^{x}_{t})<   (\tilde{T}+2\epsilon g)(\dot{\gamma}^{x}_{t}, \dot{\gamma}^{x}_{t}), \quad \forall x \in  B_{\delta}(x_{0}), \quad \forall  t \in \left[0,1\right]. 
\end{equation}
The identity \eqref{eq:truT'} proved in Proposition \ref{prop:VolDist} reads as
\begin{equation}\label{ricattiContr}
[\Tr_{g}(\cU_{t}^{x})]' + \Tr_{g}[(\cU_{t}^{x})^{2} ]+\Ric(\dot{\gamma}^{x}_{t}, \dot{\gamma}^{x}_{t})=0,   \quad  \forall x \in  B_{\delta}(x_{0}), \quad \forall t \in [0,1].
\end{equation}
Since by construction $\cU_{1/2}^{x_{0}}:=\nabla_{t} \cB^{x_{0}}_{1/2}( \cB^{x_{0}}_{1/2})^{-1}=  \nabla  \nabla_{g}^{q}\phi_{1/2} (x_{0})=0$ and $g(\nabla_{g}^{q}\phi_{1/2} (x_{0}),\nabla_{g}^{q}\phi_{1/2} (x_{0}))<0$,  again by continuity we can choose $\delta>0$ even smaller so that
\begin{align}\label{eq:claim}
\Tr_{g}[(\cU_{t}^{x})^{2} ]< -\epsilon  g(\dot{\gamma}^{x}_{t}, \dot{\gamma}^{x}_{t}),   \quad  \forall x \in  B_{\delta}(x_{0}), \quad \forall t \in [0,1].
\end{align}
The combination of  \eqref{eq:AssAbsRicp}, \eqref{ricattiContr} and   \eqref{eq:claim} yields
\begin{align*}
[\Tr_{g}(\cU_{t}^{x})]'+  (\tilde{T}+\epsilon g)   (\dot{\gamma}^{x}_{t}, \dot{\gamma}^{x}_{t})>0 ,  \quad \forall x \in  B_{\delta}(x_{0}), \quad \forall  t \in \left[0,1 \right].
\end{align*}
Recalling \eqref{eq:y'}, the last inequality can be rewritten as
\begin{align*}
y_{x}(t)''+ (\tilde{T}+\epsilon g)   (\dot{\gamma}^{x}_{t}, \dot{\gamma}^{x}_{t})  > 0 ,  \quad \forall x \in  B_{\delta}(x_{0}), \quad \forall  t \in \left[0,1 \right].
\end{align*}
The combination of the last inequality with \eqref{eq:EntHpytT} gives
\begin{equation}\label{eq:Entmugs''}
\frac{d^{2}}{dt^{2}}\Ent(\mu_t|\vol_{g})< \int (  \tilde{T}+\epsilon g )   (\dot{\gamma}^{x}_{t}, \dot{\gamma}^{x}_{t}) \, d\mu_{1/2}(x), \quad \forall t\in (0,1).
\end{equation}
By applying Lemma \ref{lem:comparisonODE} we get that
\begin{align*}
&\Ent(\mu_t|\vol_{g})\geq  (1-t)\, \Ent(\mu_0|\vol_{g})+t \, \Ent(\mu_1|\vol_{g}) \\
& \qquad \qquad \qquad \qquad  - \int  \int_{0}^{1} \rG(s,t) (\tilde{T}+\epsilon g)(\dot{\gamma}_{s}, \dot{\gamma}_{s}) ds\, d\Pi(\gamma), \\
&\quad =  (1-t)\, \Ent(\mu_0|\vol_{g})+t \, \Ent(\mu_1|\vol_{g})  - \int  \int_{0}^{1} \rG(s,t) \tilde{T}(\dot{\gamma}_{s}, \dot{\gamma}_{s}) ds\, d\Pi(\gamma) \\
&\qquad \qquad - \epsilon \frac{t(1-t)}{2} \int g(\dot{\gamma}, \dot{\gamma}) d\Pi(\gamma),
\end{align*}
where, in the equality we used that for every fixed $x\in  B_{\delta}(x_{0})$ the function $t\mapsto g(\dot{\gamma}^{x}_{t}, \dot{\gamma}^{x}_{t})$ is constant (as $t\mapsto {\gamma}^{x}_{t}$ is by construction a $g$-geodesic).

This  clearly contradicts \eqref{eq:EntConvtT}, as $\int g(\dot{\gamma}, \dot{\gamma}) d\Pi(\gamma)<0$.
\end{proof}

In the vacuum case,  i.e.  $T\equiv 0$, the inequality $\Ric\geq \tilde{T}$   with $\tilde{T}$ as in \eqref{eq:tildeTgT}  reads as $\Ric\geq Kg$ with $K=\frac{2\Lambda}{n-2} \in \R$.  Note that for $v\in \cC$ it holds $g(v,v)\leq 0$ so, when comparing the next result with its Riemannian counterparts \cite{OV, CMS,SVR}, the sign of the lower bound $K$ is reversed. 

\begin{corollary}[The vacuum case $T\equiv 0$]\label{thm:RicciLowerBound}
Let $(M,g,\cC)$ be a space-time of dimension $n\geq 2$ and let $K\in \R$.  Then the following are equivalent:
\begin{enumerate}
\item  $\Ric(v,v)\geq K g(v,v)$ for every causal vector $v\in \cC$.
\item  For every $p\in (0,1)$,  for every regular dynamical $c_{p}$-optimal  plan  $\Pi$ it holds
\begin{align}
\Ent(\mu_t|\vol_{g})&\leq (1-t) \Ent(\mu_0|\vol_{g})+t \Ent(\mu_1|\vol_{g}) -K \frac{t(1-t)}{2} \int g(\dot{\gamma}, \dot{\gamma}) d\Pi(\gamma) \nonumber \\
                            &=  (1-t) \Ent(\mu_0|\vol_{g})+t \Ent(\mu_1|\vol_{g}) -K t (1-t) \int \cA_{2}(\gamma)  d\Pi(\gamma) \label{eq:EntConv},
\end{align}
where we denoted $\mu_{i}:=(\ee_{i})_{\sharp}\Pi$, $i=0,1$, the endpoints of the curve of probability measures associated to $\Pi$.

\item  There exists $p\in (0,1)$ such that for every regular dynamical $c_{p}$-optimal  plan  $\Pi$ the convexity property  \eqref{eq:EntConv} holds.
\end{enumerate}

\end{corollary}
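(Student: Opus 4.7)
The plan is to deduce this corollary as a direct specialization of Theorem \ref{thm:RiccigeqT} to the symmetric form $\tilde T = K g$. With this choice the pointwise hypothesis $\Ric(v,v) \geq \tilde T(v,v)$ for every causal $v \in \cC$ coincides literally with (1), so it suffices to show that the displacement convexity inequality \eqref{eq:EntConvtT} of Theorem \ref{thm:RiccigeqT}, specialized to $\tilde T = Kg$, reduces to \eqref{eq:EntConv}.

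The key observation is that every regular $c_p$-optimal dynamical plan $\Pi$ is concentrated on $\cA_p$-minimizing curves, which by Lemma \ref{lemma_minimizer} are future-pointing time-like geodesics of $(M,g)$; in particular, for $\Pi$-a.e.\ $\gamma$ the scalar $s \mapsto g(\dot\gamma_s,\dot\gamma_s)$ is constant on $[0,1]$. I would then pull this constant outside the $s$-integral and evaluate the elementary Green's function integral
\[
\int_0^1 \rG(s,t)\,ds = \int_0^t (1-t)s\,ds + \int_t^1 t(1-s)\,ds = \tfrac{t(1-t)}{2},
\]
so the double integral on the right-hand side of \eqref{eq:EntConvtT} becomes $K\tfrac{t(1-t)}{2}\int g(\dot\gamma,\dot\gamma)\,d\Pi(\gamma)$, yielding exactly the first line of \eqref{eq:EntConv}.

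For the equality with the second line of \eqref{eq:EntConv}, I would use the formula $\cL_2(v) = \tfrac12 g(v,v)$ (obtained by extending \eqref{def:pLangrangianLp} to $p=2$) together with the constant-speed property of $g$-geodesics to get $\cA_2(\gamma) = \tfrac12 g(\dot\gamma,\dot\gamma)$ for $\Pi$-a.e.\ $\gamma$, whence $-K t(1-t)\int \cA_2(\gamma)\,d\Pi(\gamma) = -K\tfrac{t(1-t)}{2}\int g(\dot\gamma,\dot\gamma)\,d\Pi(\gamma)$. The implications (1)$\Rightarrow$(2) and (3)$\Rightarrow$(1) are then inherited from the corresponding implications in Theorem \ref{thm:RiccigeqT}, while (2)$\Rightarrow$(3) is trivial.

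I do not expect any real obstacle here: the substantive content is already in Theorem \ref{thm:RiccigeqT}, and the entire argument amounts to the elementary computation $\int_0^1 \rG(s,t)ds = t(1-t)/2$ combined with the fact that $\cA_p$-minimizers are affinely parametrized $g$-geodesics with constant $g(\dot\gamma,\dot\gamma)$.
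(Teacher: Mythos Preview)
Your proposal is correct and matches the paper's approach: the corollary is stated without a separate proof precisely because it is the specialization $\tilde T = Kg$ of Theorem~\ref{thm:RiccigeqT}, together with the constant-speed property of $g$-geodesics (already invoked inside the proof of Theorem~\ref{thm:RiccigeqT}) and the identity $\int_0^1 \rG(s,t)\,ds = t(1-t)/2$ from Lemma~\ref{lem:comparisonODE}. Your derivation of the second line via $\cA_2(\gamma) = \tfrac12 g(\dot\gamma,\dot\gamma)$ is also the intended reading.
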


\begin{remark}[The strong energy condition]
The strong energy condition asserts that, called $T$ the energy-momentum tensor, it holds $T(v,v)\geq \frac{1}{2}\Tr_{g}(T)$ for every time-like vector $v\in TM$  satisfying $g(v,v)=-1$.  Assuming that the space-time $(M,g,\cC)$ satisfies the Einstein equations \eqref{eq:EFE} with zero cosmological constant $\Lambda=0$, the strong energy condition is equivalent to $\Ric(v,v)\geq 0$ for every time-like vector $v\in TM$. This corresponds to the case $K=0$ in Corollary \ref{thm:RicciLowerBound}.

The strong energy condition, proposed  by Hawking and Penrose \cite{Pen,Haw66, HawPen70},  plays a key role in general relativity. For instance, in the presence of trapped surfaces, it  implies that the space-time has singularities (e.g. black holes) \cite{EH, W}.
\end{remark}

\subsection{OT-characterization of $\Ric\leq  \tilde{T}$}\label{SS:Ricleq}

The goal of the present section is to provide an optimal transport formulation of upper bounds on \emph{time-like} Ricci curvature in the \emph{Lorentzian setting}. More precisely, given a quadratic form $\tilde{T}$ (which will later be chosen to be equal to the right hand side of Einstein equations, i.e. as in \eqref{eq:tildeTgT}), we aim to find an optimal transport formulation of the condition  
$$``\Ric(v,v)\leq \tilde{T}(v,v)\text{ for every time-like vector }v\in \cC\text{''}.$$
The Riemannian counterpart, in the special case of $\Ric\leq Kg$ for some constant $K\in \R$,  has been recently established by Sturm \cite{StUB}.

In order to state the result, let us fix some notation. Given a  relatively  compact  open subset $E\subset\subset  \Int(\cC)$ let $p_{TM\to M}:TM\to M$ be the canonical projection map and ${\rm inj}_{g}(E)>0$ be  the injectivity radius of the exponential map of $g$ restricted to $E$. For  $x\in p_{TM\to M}(E)$ and $r\in (0, {\rm inj}_{g}(E))$ we  denote 
$$B^{g,E}_{r}(x):= \{\exp_{x}^{g}(t w): w\in T_{x}M\cap E, \, g(w,w)=-1, t\in [0,r] \}.$$

\begin{definition}[$r$-concentrated regular $c_{p}$-optimal dynamical  plan]\label{def:r-concRODP}
Fix a  relatively  compact  open subset $E\subset\subset  \Int(\cC)$, $x\in p_{TM\to M}(E)$, $v\in T_{x}M \cap E$, and $r>0$. A regular $c_{p}$-optimal dynamical  plan $\Pi$ is called \emph{$r$-concentrated} in the direction of $v$ (with respect to $E$) if $\mu_{t}:=(\ee_{t})_{\sharp}\Pi$ for $t\in [0,1]$ satisfies:
\begin{enumerate}
\item $\mu_{1/2}=\vol_{g}(B^{g,E}_{r^{4}}(x))^{-1}\,  \vol_{g}\llcorner B^{g,E}_{r^{4}}(x)$;   
\item $\supp(\mu_{1})\subset \{\exp_{y}^{g}(r^{2}w): w\in T_{y}M\cap \cC, \, g(w,w)=-1 \}$, where $y:=\exp^{g}_{x}(rv)$.
\end{enumerate}
\end{definition}
 
The next general result  will be applied with  $n\geq 3$ and $\tilde{T}$ as in  \eqref{eq:tildeTgT}.

\begin{theorem}[OT-characterization of $\Ric\leq \tilde{T}$]\label{thm:RiccileqT}
Let $(M,g,\cC)$ be a space-time of dimension $n\geq 2$ and let $\tilde{T}$ be a quadratic form on $M$.  Then the following assertions are equivalent:
\begin{enumerate}
\item[(1)]  $\Ric(v,v)\leq \tilde{T}(v,v)$ for every causal vector $v\in \cC$.
\item[(2)] For every $p\in (0,1)$ and for every relatively  compact  open subset $E\subset\subset  \Int(\cC)$ there exist $R=R(E)\in (0,1)$ and a function 
\begin{align*}
&\epsilon=\epsilon_{E}:(0,\infty)\to (0,\infty)\textrm{ with }\lim_{r\downarrow 0}\epsilon(r)=0 \textrm{ such that }\\
&\forall x\in p_{TM\to M}(E) \textrm{ and }v\in T_{x}M\cap E \textrm{ with }g(v,v)=-R^{2}
\end{align*}
 the next assertion holds.
For every $r\in (0,R)$,  there exists an $r$-concentrated regular $c_{p}$-optimal  dynamical plan $\Pi=\Pi(x,v,r)$ in the direction of $v$ (with respect to $E$)
%
%
%
which has $ \tilde{T}(v,v)$-concave entropy in the sense that
\begin{equation}\label{eq:ChUBtT}
\frac{4}{r^{2}}\left[\Ent(\mu_{1}|\vol_{g})-2 \Ent(\mu_{1/2}|\vol_{g})+\Ent(\mu_{0}|\vol_{g}) \right] \leq  \tilde{T}(v,v)+\epsilon(r).
\end{equation}
\item [(3)] There exists  $p\in (0,1)$ such that the analogous  assertion as in  {\rm (2)} holds true.
\end{enumerate}
\end{theorem}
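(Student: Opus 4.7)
The implication (2)$\Rightarrow$(3) is immediate. The other two implications both rest on the Jacobi-field formula from Proposition \ref{prop:VolDist} combined with the freedom to prescribe Kantorovich potentials afforded by Lemma \ref{lem:SmoothKP}; my plan is to use the \emph{specific} class of plans with Hessian vanishing at $x$ as the test objects, as in the Riemannian approach of Sturm \cite{StUB}, and then bootstrap via Theorem \ref{thm:RiccigeqT}.

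For (1)$\Rightarrow$(2), the strategy is to tailor a regular $c_p$-optimal dynamical plan whose Kantorovich potential has vanishing Hessian at $x$, so that the Cauchy--Schwarz slack term $\Tr_g(\cU_s^2)$ in Proposition \ref{prop:VolDist} contributes only a higher-order error. Concretely, given $x,v,r$ as in the statement, apply Lemma \ref{lem:SmoothKP} to produce a smooth $c_p$-concave function $\phi_{1/2}$ with $\nabla^q_g\phi_{1/2}(x)$ equal to the appropriate multiple of $v$ (so that $\Psi^1_{1/2}(x)=y$) and $\Hess_{\phi_{1/2}}(x)=0$; take $\mu_{1/2}$ to be the normalized volume on $B^{g,E}_{r^4}(x)$ and let $\Pi$ be the associated regular dynamical plan. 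The support condition on $\mu_1$ follows because $D\Psi^1_{1/2}(x)$ reduces at leading order to the Jacobi propagator with vanishing initial Hessian contribution, so the image of an $r^4$-ball sits well inside an $r^2$-neighborhood of $y$.

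Using the change-of-variables identity of Proposition \ref{prop:proRegPOpt}(2) one writes
$$\Ent(\mu_t|\vol_g) - \Ent(\mu_{1/2}|\vol_g) = -\int y_x(t)\,d\mu_{1/2}(x), \qquad y_x(t):=\log\Det_g \cB_t(x),$$
with $y_x(1/2)=0$. The Green's function representation from Lemma \ref{lem:comparisonODE} then yields
$$y_x(0)+y_x(1) = \int_0^1 \min(s,1-s)\,y_x''(s)\,ds,$$
while Proposition \ref{prop:VolDist} gives $y_x''(s) = -\Tr_g(\cU_s^2)-\Ric(\dot\gamma_s,\dot\gamma_s)$. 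Since $\Hess_{\phi_{1/2}}$ is smooth and vanishes at $x$, we have $\cU_{1/2}(x')=O(r^4)$ uniformly on $\supp(\mu_{1/2})$; propagating along the Riccati equation $\nabla_t\cU_t+\cU_t^2+R(\cdot,\dot\gamma)\dot\gamma=0$ (whose curvature term is $O(r^2)$) gives $\cU_s=O(r^2)$ on $[0,1]$, hence $\Tr_g(\cU_s^2)=O(r^4)=o(r^2)$. Substituting $\Ric\le\tilde T$ from hypothesis (1), parallel-transporting $\dot\gamma_s$ along the $g$-geodesic, and collecting the leading $r^2$-terms produces \eqref{eq:ChUBtT}.

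For (3)$\Rightarrow$(1), I argue by contradiction: suppose that at some point $(x_0,v_0)$ with $g(v_0,v_0)<0$ one has $\Ric(v_0,v_0)>\tilde T(v_0,v_0)$. Applying (3) for this choice produces a regular $c_p$-optimal plan $\Pi$ obeying the prescribed support conditions and the concavity upper bound \eqref{eq:ChUBtT}. Since $\cU_s$ is a symmetric endomorphism (Proposition \ref{prop:VolDist}), one has the \emph{unconditional} bound $\Tr_g(\cU_s^2)\ge \tfrac{1}{n}(\Tr_g\cU_s)^2\ge 0$, so the entropy identity above yields
$$\Ent(\mu_1)-2\Ent(\mu_{1/2})+\Ent(\mu_0) \ge \int\!\!\int_0^1 \min(s,1-s)\,\Ric(\dot\gamma_s,\dot\gamma_s)\,ds\,d\mu_{1/2}(x).$$
By continuity of $\Ric$, concentration of $\mu_{1/2}$ near $x_0$, and the prescription of the velocity at $t=1/2$ imposed by the support conditions, the right-hand side expands as the appropriate positive constant times $r^2\Ric(v_0,v_0)$ plus $o(r^2)$. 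Combining with the upper bound \eqref{eq:ChUBtT} forces $\Ric(v_0,v_0)\le\tilde T(v_0,v_0)$ upon letting $r\downarrow 0$, the desired contradiction.

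The principal obstacle is the forward direction: one must simultaneously satisfy the stringent geometric support conditions (an $r^4$-ball mapped into an $r^2$-set) and arrange that the Ricatti slack $\Tr_g(\cU_s^2)$ be $o(r^2)$. Both features are delicate and can only be met by using the full strength of Lemma \ref{lem:SmoothKP}, which permits the independent prescription of $\nabla^q_g\phi_{1/2}(x)$ \emph{and} $\Hess_{\phi_{1/2}}(x)$ at the base point $x$.
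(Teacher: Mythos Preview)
Your outline is correct and complete; the two directions you sketch are both sound once the scaling is tracked (in particular your claim $\cU_{1/2}(x')=O(r^{4})$, hence $\cU_s=O(r^{2})$ via the Riccati equation with curvature source $O(r^{2})$, and therefore $\Tr_g(\cU_s^{2})=O(r^{4})=o(r^{2})$, is exactly what is needed for \eqref{eq:ChUBtT}).  The route, however, differs genuinely from the paper's.

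For (1)$\Rightarrow$(2) the paper does \emph{not} use the Jacobi--Riccati identity of Proposition \ref{prop:VolDist} directly; instead it evolves the Kantorovich potential via the Hamilton--Jacobi equation (Proposition \ref{thm:HopfLax}), computes $\frac{d}{dt}\Ent(\mu_t)=r\int\Box_g^{q}\phi_t\,d\mu_t$, and bounds the second derivative by appealing to the nonlinear $q$-Bochner identity (Proposition \ref{prop:qBochner}); the Hessian terms in that identity are then killed by the estimate $|\Hess_{\phi_t}|\le\epsilon(r)$ for the \emph{evolved} potentials.  Your approach bypasses the Hamilton--Jacobi/Bochner machinery entirely, working instead with the second-order ODE for $y_x(t)=\log\Det_g\cB_t$ and the explicit Green's kernel $\min(s,1-s)$; this is more elementary and closer in spirit to Sturm's Riemannian argument \cite{StUB}, at the cost of losing the potential-theoretic viewpoint that the paper develops.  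For (3)$\Rightarrow$(1) you apply the unconditional lower bound $\Tr_g(\cU_s^{2})\ge 0$ directly to the plan \emph{supplied} by hypothesis (3), whereas the paper phrases this step as an appeal to Theorem \ref{thm:RiccigeqT}; the content is the same, and your formulation is arguably cleaner because it makes explicit that the support conditions on $\mu_{1/2}$ and $\mu_1$ force the velocities of the hypothesized plan to be close to $rv$, which is what allows the local Ricci lower bound to be invoked along its geodesics.
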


\begin{remark}\label{rem:RiccileqT}
Given an auxiliary Riemannian metric $h$ on $M$,  with analogous arguments as in the proof below, the condition $(2)$ in Theorem \ref{thm:RiccileqT} can be replaced by 
\begin{enumerate}
\item[(2)']   For every $p\in (0,1)$ and for  every $x\in M$ there exist $R=R(x)>0$  and a function  $\epsilon=\epsilon_{x}:(0,\infty)\to (0,\infty)$  with $\lim_{r\downarrow 0}\epsilon(r)=0$  such that
for all  $v\in \Int(\cC_{x})$  with  $h(v,v)\leq R^{2}$ the next assertion holds.
For every $r\in (0,R)$,   there exists an $r$-concentrated regular $c_{p}$-optimal  dynamical plan $\Pi=\Pi(x,v,r)$ in the direction of $v$ (with respect to $E$) satisfying \eqref{eq:ChUBtT}.
\end{enumerate}
Moreover, both in (2) and (2') one can replace $B^{g,E}_{r^{4}}(x)$  (resp.  $\{\exp_{y}^{g}(r^{2}w)\,:\, w\in T_{y}M\cap \cC, \, g(w,w)=-1 \}$)  by $B^{h}_{r^{4}}(x)$ (resp.  $B^{h}_{r^{2}}(y)$).
\end{remark} 

\medskip

\begin{proof}

\textbf{(1)$\Rightarrow$ (2)}
\\Let $(M,g,\cC)$ be a space time and let $h$ be an auxiliary Riemannian metric on $M$ such that 
$$\frac{1}{4} h(w,w) \leq |g(w,w)|\leq 4 h(w,w), \quad \forall w\in E.$$  
 We denote with ${\rm d}^{TM}_{h}$ the distance on $TM$ induced by the auxiliary Riemannian metric $h$.
Once the compact subset $E \subset\subset  \Int(\cC)$ is fixed, thanks to Lemma \ref{lem:SmoothKP} there exist a constant
$$R=R(E)\in (0,\min(1, {\rm inj}_{g}(E)))$$ and a function
$$\epsilon=\epsilon_{E}:(0,\infty)\to (0,\infty) \text{ with } \lim_{r\downarrow 0}\epsilon(r)=0$$
 such that
 $$ \forall r\in (0,R/10), \,x\in p_{TM\to M}(E), \, v\in T_{x}M\cap E \text{ with } g(v,v)=-R^{2}$$  
 we can find a  $c_{p}$-convex function $\phi:M\to \R$ with the following properties:
\begin{enumerate}
\item $\phi$ is smooth on  $B^{h}_{100r}(x)$, $\nabla^{q}_{g}\phi(x)=v$, $\nabla^{q}_{g}\phi\in E$ on $B^{h}_{10r}(x)$, \\${\rm d}^{TM}_{h}(\nabla^{q}_{g}\phi,v)\leq   \epsilon(r)$ on $B^{h}_{10r}(x)$;
\item $|\Hess_\phi|_{h} \leq  \epsilon(r)$ on $B^{h}_{10r}(x)$. 
\end{enumerate}
For $t\in [0,1]$, consider the map $$\Psi_{1/2}^{t}:z\mapsto \exp_{z}(r(t-1/2) \nabla^{q}_{g}\phi(z)).$$
Notice that 
$$\Psi_{1/2}^{t} ( B^{g,E}_{r^{4}}(x))\subset B^{h}_{10r}(x), \quad \forall t\in [0,1].$$
Let 
$$\mu_{1/2}=\vol_{g}(B^{g,E}_{r^{4}}(x))^{-1}\,  \vol_{g}\llcorner B^{g,E}_{r^{4}}(x)$$  and define 
$$\mu_{t}:=(\Psi_{1/2}^{t})_{\sharp} (\mu_{1/2}) \quad \forall t\in [0,1].
$$ 
By the properties of $\phi$,  the plan $\Pi$ representing the curve of probability measures $(\mu_{t})_{t\in [0,1]}$ is a regular  $c_{p}$-optimal  dynamical plan and  
$$\supp(\mu_{1})\subset \{\exp_{y}^{g}(r^{2}w): w\in T_{y}M\cap \cC, \, g(w,w)=-1 \}.
$$
By  Proposition \ref{thm:HopfLax} we can find  a smooth family of functions $(\phi_{t})_{t\in [0,1]}$ defined on 
$\bigcup_{t\in [0,1]} \{t\}\times \supp(\mu_{t})$ with $\phi_{1/2}=\phi$  satisfying  
\begin{align}
(\partial_{t} \phi_{t})(\gamma_{t})+\frac{r}{q} \left(-g(\nabla_{g} \phi_{t} (\gamma_{t}), \nabla_{g} \phi_{t}(\gamma_{t})) \right)^{q/2}=0 & \quad  \text{for } \Pi\text{-a.e. }\gamma,  \text{ for all } t\in [0,1],   \label{eq:qHopfLaxProof1} \\
r\nabla_{g}^{q} \phi_{t} (\gamma_{t})-\dot{\gamma}_{t}=0 &\quad \text{for } \Pi\text{-a.e. }\gamma,  \text{ for all } t\in [0,1]. \label{eq:velocityProof1}
\end{align}
Moreover, using  the properties of $\phi_{1/2}=\phi$ and  the smoothness of the family $(\phi_{t})_{t\in [0,1]}$, we have  
\begin{equation}\label{eq:Estphit}
{\rm d}^{TM}_{h} (\nabla^{q}_{g}\phi_{t} (\gamma_{t}), v)\leq \epsilon(r), \; |\Hess_{\phi_{t}}(\gamma_{t})|_{h} \leq \epsilon(r), \; \Pi\text{-a.e. }\gamma,  \text{ for all } t\in [0,1],
\end{equation}
up to renaming $\epsilon(r)$ with a suitable function  
$$\epsilon=\epsilon_{E}:(0,\infty)\to (0,\infty)$$ with 
$$\lim_{r\downarrow 0}\epsilon(r)=0.$$
The curve $[0,1]\ni t\mapsto \Ent(\mu_{t}|\vol_{g}) \in \R$ is smooth and, in virtue of  \eqref{eq:velocityProof1}, it  satisfies
\begin{equation}\label{eq:dtEntmut1}
\frac{d}{dt} \Ent(\mu_{t}|\vol_{g})= r\int_{M} g(\nabla_{g}^{q} \phi_{t}, \nabla_{g} \rho_{t}) d\vol_{g}= r \int_{M} \Box_{g}^{q} \phi_{t} \,d\mu_{t}, \quad  \text{for all }t\in [0,1], 
\end{equation}
where $$\rho_{t}:=\frac{d\mu_{t}}{d\vol_{g}}$$ is the density of $\mu_{t}$,  
$$ \Box_{g}^{q} \phi_{t}:={\rm div}(-\nabla_{g}^{q} \phi_{t}) $$ is the $q$-Box of  $\phi_{t}$ (the Lorentzian analog of the $q$-Laplacian),
and where we used the continuity equation
$$
\frac{d}{dt} \rho_t + r\, {\rm div}(\rho_{t} \nabla_{g}^{q} \phi_{t})=0.
$$

For what follows it is useful to consider the linearization of the $q$-Box at a smooth function  $f$, denoted by $L^{q}_{f}$ and defined by the following relation:
\begin{equation}\label{eq:linDeltaq}
\left.\frac{d}{dt}\right|_{t=0} \Box^{q}_{g}(f+tu)= L^{q}_{f}u, \quad \forall u\in C^{\infty}_{c}(M).
\end{equation} 
\\The map $[0,1]\ni t\mapsto  \int_{M} \Box_{g}^{q} \phi_{t} \,d\mu_{t} \in \R$ is smooth and, in virtue of   \eqref{eq:qHopfLaxProof1} and  \eqref{eq:velocityProof1}, it   satisfies
\begin{equation*}\label{eq:dtDeltaqphitmut1}
\frac{d}{dt} r   \int_{M} \Box_{g}^{q} \phi_{t} \,d\mu_{t}= - r^{2} \int_{M} L^{q}_{\phi}  \left( \frac{1}{q} (-g(\nabla_{g} \phi_{t}, \nabla_{g} \phi_{t}))^{q/2} \right)+ g( \nabla_{g} \Box_{g}^{q} \phi_{t}, -\nabla_{g}^{q} \phi_{t}) \,d\mu_{t},
\end{equation*}
 for every $t\in [0,1]$. 
Using the $q$-Bochner identity \eqref{eq:q-Bochner} together with  the assumption $\Ric (w,w)\leq \tilde{T}(w, w)$ for  any $w\in \cC$  and the estimates  \eqref{eq:Estphit} on $\phi_{t}$, we can rewrite the last formula as  
\begin{align*}
\frac{d}{dt}& \frac{1}{r}  \int_{M} \Box_{g}^{q} \phi_{t} \,d\mu_{t}= \nonumber\\
&\int_{M}   |g(\nabla_{g} \phi_{t}, \nabla_{g} \phi_{t})|^{q-2} \
\Big[ \Ric (\nabla_{g} \phi_{t}, \nabla_{g} \phi_{t})  +  g(\Hess_{\phi_{t}}, \Hess_{\phi_{t}})    \nonumber\\
&\qquad \qquad \qquad \qquad \qquad \quad  + \left( (q-2) \frac{\Hess_{\phi_{t}}(\nabla_{g}\phi_{t}, \nabla_{g}\phi_{t})}{|g(\nabla_{g} \phi_{t}, \nabla_{g} \phi_{t})|} \right)^{2} \nonumber\\
&\qquad \qquad \qquad \qquad \qquad \quad \left.-  2 (q-2) \frac{ \Hess_{\phi_{t}}\big((\nabla_{g}\phi_{t}, \Hess_{\phi_{t}}(\nabla_{g}\phi_{t})\big)}{|g(\nabla_{g} \phi_{t}, \nabla_{g} \phi_{t})|} \right] d\mu_{t} \nonumber \\
&\leq  \int_{M} \big( \tilde{T}(v,v) + \epsilon(r) \big) d\mu_{t}=  \tilde{T}(v,v) + \epsilon(r)  \quad  \text{for all }t\in [0,1],   
\end{align*}
up to renaming $\epsilon(r)$ with a suitable function  $$\epsilon=\epsilon_{E}:(0,\infty)\to (0,\infty)
\textrm{ with }\lim_{r\downarrow 0}\epsilon(r)=0. $$
Thus
\begin{align}
&\Ent(\mu_{1}|\vol_{g})-2 \Ent(\mu_{1/2}|\vol_{g})+\Ent(\mu_{0}|\vol_{g}) \nonumber\\
&\qquad \qquad = \int_{0}^{1/2}\left( \frac{d}{dt} \Ent(\mu_{t}|\vol_{g})|_{t=s+1/2} -  \frac{d}{dt} \Ent(\mu_{t}|\vol_{g})|_{t=s} \right)  ds \nonumber\\
&\qquad \qquad =  \int_{0}^{1/2} \int_{s}^{s+1/2}  \frac{d^{2}}{dt^{2}} \Ent(\mu_{t}|\vol_{g}) dt \,  ds  \nonumber\\
&\qquad \qquad \leq   \frac{\tilde{T}(v,v)+\epsilon(r)}{4} \, r^{2} \nonumber.
\end{align}

\textbf{(2)$ \Rightarrow$ (3)}: trivial.
\\

\textbf{(3)$\Rightarrow$ (1)}
\\ Fix $p\in (0,1)$ given by $(3)$ and assume by contradiction that there exists $x\in M$, $\epsilon>0$ and $v\in T_{x}M\cap \cC$ with $-g(v,v)=1$ such that 
$$\Ric(v,v)\geq (\tilde{T}-2\epsilon g) (v,v).$$
Then, by continuity, we can find a relatively compact neighbourhood $E\subset\subset  \Int(\cC)$ of $v$ in $TM$ such that
\begin{equation}\label{eq:RicwwE}
\Ric(w,w)\geq (\tilde{T}-\epsilon g)(w,w), \quad \forall w\in E.
\end{equation}
By Lemma \ref{lem:SmoothKP}  we can construct a  $c_{p}$-convex function  $\phi:M\to \R$ such that $\phi$ is smooth on a neighbourhood of $x$ and  $$\nabla_{g}^{q}\phi(x)=Rv.$$ For $t\in [0,1]$, define 
$$\Psi_{1/2}^{t}(z):=\exp_{z}^{g}(2r(t-1/2) \nabla_{g}^{q}\phi(z)).$$
By continuity, for $r\in (0, R)$ small enough, we have that  
\begin{equation}\label{dtPsi12tr2}
\frac{1}{r}\frac{d}{dt} \Psi_{1/2}^{t}(z)\in E, \quad {\rm d}^{TM}_{h} \left(\frac{d}{dt} \Psi_{1/2}^{t}(z), r v\right)\leq \epsilon r,   \quad  \forall z\in B^{g,E}_{r^{4}}(x), \quad \forall t\in [0,1].
\end{equation}
  Moreover $\Psi_{1/2}^{1} (B^{g,E}_{r^{4}}(x))\subset B^{g,E}_{r^{2}}(y)$.
\\Set $\mu_{1/2}:= \vol_{g}(B^{g,E}_{r^{4}}(x))^{-1} \, \vol_{g}\llcorner B^{g,E}_{r^{4}}(x)$ and consider $\mu_{t}:= (\Psi_{1/2}^{t})_{\sharp} \mu_{1/2}$. Notice that   $$\supp(\mu_{1})\subset  B^{g,E}_{r^{2}}(y)\subset  \{\exp_{y}^{g}(r^{2}w): w\in T_{y}M\cap \cC, \, g(w,w)=-1 \}.$$
By the above construction, we get that $(\mu_{t})_{t\in [0,1]}$ can be represented by a regular $c_{p}$-optimal  dynamical plan $\Pi$ such that $$\supp((\partial \ee)_{\sharp}\Pi )\subset E.$$  Therefore \eqref{eq:RicwwE} together with Theorem \ref{thm:RiccigeqT} yields
\begin{align}
\Ent(\mu_{1/2}|\vol_{g})&\leq \frac{1}{2} \Ent(\mu_0|\vol_{g})+ \frac{1}{2} \Ent(\mu_1|\vol_{g}) \nonumber\\
&\qquad  - \int  \int_{0}^{1} \rG(s,1/2) (\tilde{T}-\epsilon g)(\dot{\gamma}_{s}, \dot{\gamma}_{s}) ds\, d\Pi(\gamma), \nonumber\\
& \leq  \frac{1}{2} \Ent(\mu_0|\vol_{g})+\frac{1}{2} \Ent(\mu_1|\vol_{g}) - \frac{r^{2} (\tilde{T}(v,v)-\epsilon g(v,v) +C\epsilon r)}{8}  \label{eq:EntConvContrK-eps},
\end{align}
where in the second inequality we used \eqref{dtPsi12tr2} and that $C>0$ is a constant independent of $r$ and $\epsilon$. Note that $\epsilon>0$ in \eqref{eq:EntConvContrK-eps} is fixed  independently of $r>0$. Clearly \eqref{eq:EntConvContrK-eps} contradicts the existence of $\epsilon_{E}(r)\to 0$ as $r\to 0$ so that \eqref{eq:ChUBtT} holds.  
\end{proof}

In the vacuum case when $T\equiv 0$, the inequality $\Ric\leq \tilde{T}$  with $\tilde{T}$ as in \eqref{eq:tildeTgT} reads as 
$$\Ric\leq Kg \textrm{ with }K=\frac{2\Lambda}{n-2} \in \R.
$$  Note that for $v\in \cC$ it holds $g(v,v)\leq 0$ so, when comparing the next result with its Riemannian counterpart \cite{StUB}, the sign of the lower bound $K$ is reversed.

\begin{corollary}\label{thm:RicciUpperBound}
Let $(M,g,\cC)$ be a space-time of dimension $n \geq 2$ and let $K\in \R$. Then the following assertions are equivalent:
\begin{enumerate}
\item[(1)]  $\Ric(v,v)\leq K g(v,v)$ for every causal vector $v\in \cC$.
\item[(2)] For every $p\in (0,1)$ and for every relatively  compact  open subset $E\subset\subset  \Int(\cC)$ there exist $R=R(E)\in (0,1)$ and a function 
\begin{align*}
&\epsilon=\epsilon_{E}:(0,\infty)\to (0,\infty)\textrm{ with }\lim_{r\downarrow 0}\epsilon(r)=0 \textrm{ such that }\\
&\forall x\in p_{TM\to M}(E) \textrm{ and }v\in T_{x}M\cap E \textrm{ with }g(v,v)=-R^{2}
\end{align*}
 the next assertion holds.
For every $r\in (0,R)$,  there exists an $r$-concentrated regular $c_{p}$-optimal  dynamical plan $\Pi=\Pi(x,v,r)$ in the direction of $v$ (with respect to $E$) 
which has $-K$-concave entropy in the sense that
\begin{equation}\label{eq:ChUB}
\frac{4}{r^{2}}\left[\Ent(\mu_{1}|\vol_{g})-2 \Ent(\mu_{1/2}|\vol_{g})+\Ent(\mu_{0}|\vol_{g}) \right] \leq  -K+\epsilon(r).
\end{equation}
\item [(3)] There exists  $p\in (0,1)$ such that the analogous  assertion as in  {\rm (2)} holds true.
\end{enumerate}
\end{corollary}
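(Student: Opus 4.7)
The plan is to derive Corollary \ref{thm:RicciUpperBound} as an immediate specialization of Theorem \ref{thm:RiccileqT}, taking the quadratic form $\tilde{T}$ to be a constant multiple of the metric, $\tilde{T} := K g$. This parallels exactly how Corollary \ref{thm:RicciLowerBound} was deduced from Theorem \ref{thm:RiccigeqT} in the lower-bound setting, so no new ideas are required beyond a careful comparison of the two statements.

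Under the substitution $\tilde{T} = Kg$, condition (1) of Theorem \ref{thm:RiccileqT}, namely $\Ric(v,v) \leq \tilde{T}(v,v)$ for every $v \in \cC$, is literally condition (1) of the corollary. For the implication (1)$\Rightarrow$(2) the plan is to invoke Theorem \ref{thm:RiccileqT} with this choice of $\tilde{T}$ and transcribe the output: for every $p \in (0,1)$ and every relatively compact $E \subset\subset \Int(\cC)$ one obtains the radius $R = R(E)$, the infinitesimal $\epsilon_E$, and the family of regular $c_p$-optimal dynamical plans $\Pi(x,v,r)$ with the prescribed structure of $\mu_{1/2}$ and $\supp(\mu_1)$, satisfying the entropy bound
\[
\frac{4}{r^{2}}\bigl[\Ent(\mu_1|\vol_g) - 2\Ent(\mu_{1/2}|\vol_g) + \Ent(\mu_0|\vol_g)\bigr] \leq \tilde{T}(v,v) + \epsilon(r) = K\,g(v,v) + \epsilon(r).
\]
For $v$ normalized so that $g(v,v) = -1$ the right-hand side is exactly $-K + \epsilon(r)$, which is \eqref{eq:ChUB}; when $v$ is instead taken with $g(v,v) = -R^{2}$ the overall constant factor $R^{2}$ is absorbed by a trivial rescaling, since $R$ depends only on $E$ and the error function $\epsilon(r)$ may be redefined accordingly. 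The implication (2)$\Rightarrow$(3) is trivial.

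For (3)$\Rightarrow$(1) the plan is to rerun the contradiction argument used in the proof of Theorem \ref{thm:RiccileqT}(3)$\Rightarrow$(1), now invoking Corollary \ref{thm:RicciLowerBound} (equivalently, Theorem \ref{thm:RiccigeqT} with $\tilde{T}' = (K-\epsilon)g$) in place of the general lower-bound theorem. If (1) failed, continuity would supply $\epsilon > 0$, a point $x \in M$, and a relatively compact neighborhood $E \subset\subset \Int(\cC)$ of some causal vector on which $\Ric(w,w) \geq (K - \epsilon)\,g(w,w)$ for every $w \in E$; the sign inversion relative to the usual Riemannian phrasing is automatic because $g(w,w) < 0$ on $\cC$. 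Along the very family of regular plans furnished by (3), the lower-bound corollary then forces a strict convexity of $\Ent(\cdot|\vol_g)$ with quantitative margin proportional to $\epsilon$, contradicting the concavity bound \eqref{eq:ChUB} as soon as $r$ is small enough that $\epsilon(r) < \epsilon$. Since the substantive analytic work, namely the Jacobi-field Monge--Amp\`ere estimate of Proposition \ref{prop:VolDist}, the $q$-Hamilton--Jacobi equation of Proposition \ref{thm:HopfLax}, the $q$-Bochner identity, and the localized Kantorovich potentials supplied by Lemma \ref{lem:SmoothKP}, is already fully carried out in the proofs of Theorems \ref{thm:RiccigeqT} and \ref{thm:RiccileqT}, the only remaining task is this routine matching of normalizations and signs rather than a genuine obstacle.
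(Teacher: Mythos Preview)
Your approach is correct and matches the paper's treatment: the corollary is presented immediately after Theorem~\ref{thm:RiccileqT} as the direct specialization $\tilde{T}=Kg$, with no separate proof given beyond the introductory remark that in the vacuum case the upper bound reads as $\Ric\leq Kg$. The normalization wrinkle you flag (the bound $-K$ versus $Kg(v,v)=-KR^{2}$) is already present in the paper's own statement and is harmless since $R=R(E)$ is a fixed constant depending only on $E$.
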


\subsection{Optimal transport formulation of the Einstein  equations}

Recall that Einstein equations of general relativity,  with cosmological constant equal to $\Lambda \in \R$ and energy-momentum tensor $T$, read as
\begin{equation}\label{eq:EFERicT2}
\Ric=\frac{2\Lambda}{n-2} g + 8 \pi T - \frac{8\pi} {n-2} \Tr_{g}(T) \, g,
\end{equation}
 for an $n$-dimensional space-time $(M,g,\cC)$.  Combining  Theorem  \ref{thm:RiccigeqT}  with Theorem  \ref{thm:RiccileqT}, both with the choice 
 $$\tilde{T}=\frac{2\Lambda}{n-2} g + 8 \pi T - \frac{8\pi} {n-2} \Tr_{g}(T) \, g,$$ we obtain the following optimal transport formulation of  \eqref{eq:EFERicT2}.

\begin{theorem}\label{thm:RiccieqT}
Let $(M,g,\cC)$ be a space-time of dimension $n \geq 2$ and let $\tilde{T}$ be a quadratic form on $M$. Then the following assertions are equivalent:
\begin{enumerate}
\item[(1)]  $\Ric(v,v)=\tilde{T}(v,v)$ for every $v\in T_{x}M$
\item[(2)]  $\Ric(v,v)=\tilde{T}(v,v)$ for every causal vector $v\in \cC$.
\item[(3)]  For every $p\in (0,1)$ and for every relatively  compact  open subset $E\subset\subset  \Int(\cC)$ there exist $R=R(E)\in (0,1)$ and a function 
\begin{align*}
&\epsilon=\epsilon_{E}:(0,\infty)\to (0,\infty)\textrm{ with }\lim_{r\downarrow 0}\epsilon(r)=0 \textrm{ such that }\\
&\forall x\in p_{TM\to M}(E) \textrm{ and }v\in T_{x}M\cap E \textrm{ with }g(v,v)=-R^{2}
\end{align*}
 the next assertion holds.
For every $r\in (0,R)$,  there exists an $r$-concentrated regular $c_{p}$-optimal  dynamical plan $\Pi=\Pi(x,v,r)$ in the direction of $v$ (with respect to $E$)
 satisfying
\begin{equation}\label{eq:RiceqT}
\tilde{T}(v,v)-\epsilon(r)\leq \frac{4}{r^{2}}\left[\Ent(\mu_{1}|\vol_{g})-2 \Ent(\mu_{1/2}|\vol_{g})+\Ent(\mu_{0}|\vol_{g}) \right] \leq \tilde{T}(v,v)+\epsilon(r).
\end{equation}
\item [(4)] There exists  $p\in (0,1)$ such that the analogous  assertion as in  {\rm (3)}  holds true.
\end{enumerate}
\end{theorem}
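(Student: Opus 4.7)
The strategy is to reduce the four-fold equivalence to the two one-sided characterizations already proved, namely Theorem \ref{thm:RiccigeqT} (for $\Ric \geq \tilde{T}$) and Theorem \ref{thm:RiccileqT} (for $\Ric \leq \tilde{T}$), using that $\Ric = \tilde{T}$ on $TM$ is equivalent, by polarization, to $\Ric = \tilde{T}$ on $\cC$, which in turn splits into $\Ric \geq \tilde{T}$ and $\Ric \leq \tilde{T}$ on causal vectors. The equivalence (1)$\Leftrightarrow$(2) follows immediately since $\Int(\cC_x)$ is a non-empty open set in $T_xM$ for each $x$, and a symmetric bilinear form is determined on $T_xM$ by its values on any such open set.

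For (2)$\Rightarrow$(3), fix $p \in (0,1)$ and $E \subset\subset \Int(\cC)$; I would take $\Pi = \Pi(x,v,r)$ to be the plan produced by the proof of Theorem \ref{thm:RiccileqT}((1)$\Rightarrow$(2)). Concretely, for each admissible $(x,v,r)$, use Lemma \ref{lem:SmoothKP} to build a smooth $c_p$-concave potential $\phi$ with $\nabla_g^q \phi(x) = v$ and $\Hess_\phi(x) = 0$, then set $\mu_{1/2} := \vol_g(B^{g,E}_{r^4}(x))^{-1}\,\vol_g\llcorner B^{g,E}_{r^4}(x)$ and $\mu_t := (\Psi_{1/2}^t)_\sharp \mu_{1/2}$. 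This $\Pi$ is regular, meets the structural constraints of (3), and the Bochner computation in the proof of Theorem \ref{thm:RiccileqT}((1)$\Rightarrow$(2)) delivers the upper half of \eqref{eq:RiceqT}. Since $\Pi$ is a regular $c_p$-optimal plan, Theorem \ref{thm:RiccigeqT}((1)$\Rightarrow$(2)) applied to the lower bound $\Ric \geq \tilde{T}$ yields the convexity inequality \eqref{eq:EntConvtT} along this same $\Pi$; evaluating at $t=1/2$ (so that $\int_0^1 \rG(s,1/2)\,ds = 1/8$) and using that the velocities satisfy $\dot\gamma_s = rv + o(r)$ along $\Pi$-a.e. curve (consequence of $\Hess_\phi(x)=0$ combined with the support constraints) converts the convexity inequality into the lower half of \eqref{eq:RiceqT}.

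Item (3)$\Rightarrow$(4) is trivial. For (4)$\Rightarrow$(2), the upper half of \eqref{eq:RiceqT} is literally the hypothesis of assertion (3) in Theorem \ref{thm:RiccileqT}, giving $\Ric \leq \tilde{T}$ on causal vectors. To obtain the reverse inequality I would argue by contradiction in parallel with the proof of Theorem \ref{thm:RiccileqT}((3)$\Rightarrow$(1)): suppose $\Ric(v_0, v_0) < \tilde{T}(v_0, v_0)$ for some $v_0 \in \Int(\cC_{x_0})$; by continuity pick $\epsilon > 0$ and a relatively compact neighborhood $E \ni v_0$ with $\Ric(w,w) \leq (\tilde{T} + \epsilon g)(w,w)$ for $w \in E$ (note $\tilde{T} + \epsilon g < \tilde{T}$ on causal vectors since $g < 0$ there). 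Applying the Theorem \ref{thm:RiccileqT}((1)$\Rightarrow$(2)) construction with $\tilde T' := \tilde T + \epsilon g$ produces, for small $r$, a regular $c_p$-optimal plan matching the structural constraints of (4) whose entropy satisfies
\[
\frac{4}{r^{2}}\left[\Ent(\mu_{1}|\vol_g) - 2\Ent(\mu_{1/2}|\vol_g) + \Ent(\mu_{0}|\vol_g)\right] \leq \tilde{T}(v,v) - \epsilon R^{2} + \epsilon'(r),
\]
which for sufficiently small $r$ contradicts the lower half of \eqref{eq:RiceqT}.

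The delicate point, and the main obstacle, is identifying the plan produced by this last construction with the one whose existence is guaranteed by (4), since (4) only furnishes \emph{some} plan satisfying both bounds for a given $(x,v,r)$. This is the same subtlety encountered in the proof of Theorem \ref{thm:RiccileqT}((3)$\Rightarrow$(1)) and is resolved in the same way: the Lemma \ref{lem:SmoothKP} construction is itself an explicit regular $c_p$-optimal dynamical plan with the required $\mu_{1/2}$ and $\supp(\mu_1)$, so the existential statement in (4) can be invoked for this canonical plan. With that identification in hand, all remaining computations are a routine assembly of the two one-sided characterizations established earlier.
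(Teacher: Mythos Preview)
Your strategy coincides with the paper's: (1)$\Leftrightarrow$(2) via polarization on the open set $\Int(\cC_x)$; (2)$\Rightarrow$(3) by taking the plan from Theorem~\ref{thm:RiccileqT}((1)$\Rightarrow$(2)) and reading off the lower half of \eqref{eq:RiceqT} from Theorem~\ref{thm:RiccigeqT} along that same plan via the velocity estimate $\dot\gamma_t=rv+o(r)$; (4)$\Rightarrow$(2) via Theorem~\ref{thm:RiccileqT} for $\Ric\le\tilde T$ plus a contradiction argument for $\Ric\ge\tilde T$. This matches the paper essentially verbatim.

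You are right to flag the existential quantifier in (4) as the delicate point, but your proposed resolution is circular. Assertion (4) only guarantees, for each admissible $(x,v,r)$, that \emph{some} regular plan satisfies both halves of \eqref{eq:RiceqT}; it does not assert that the specific plan you construct via Lemma~\ref{lem:SmoothKP} is that witness. Writing ``the existential statement in (4) can be invoked for this canonical plan'' conflates your construction with the plan furnished by the hypothesis: the strict upper bound you derive for your plan does not by itself contradict the lower half of \eqref{eq:RiceqT} holding for a \emph{different} plan. The paper's own proof of (4)$\Rightarrow$(2) has exactly this structure and leaves the identification implicit (relying tacitly on the notation $\Pi=\Pi(x,v,r)$ to suggest a canonical choice), so you have faithfully reproduced its argument; but your final paragraph does not actually close the gap. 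A genuine fix would require showing that the strict upper bound holds for \emph{every} regular plan meeting the structural constraints of (3), which from the identity $\frac{d^2}{dt^2}\Ent(\mu_t|\vol_g)=\int\big[\Tr_g(\cU_t^2)+\Ric(\dot\gamma_t,\dot\gamma_t)\big]\,d\Pi$ amounts to controlling $\int\Tr_g(\cU_t^2)\,d\Pi$ without the $\Hess_\phi(x)=0$ hypothesis.
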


\begin{remark}[$\mu_{1/2}$ can be chosen more general]\label{rem:Thmmu12}
From the proof of Theorem \ref{thm:RiccieqT} it follows that one can replace (2) (and analogously (3)) with the following (a priori stronger, but a fortiori equivalent) statement. For every $p\in (0,1)$ the following holds.  For every relatively  compact  open subset $E\subset\subset  \Int(\cC)$ there exist $R=R(E)\in (0,1)$ and a function 
$$\epsilon=\epsilon_{E}:(0,\infty)\to (0,\infty) \textrm{ with }\lim_{r\downarrow 0}\epsilon(r)=0$$ 
such that 
$$
\forall x\in p_{TM\to M}(E), \forall v\in T_{x}M\cap E \textrm{ with }g(v,v)=-R^{2}$$  the next assertion holds. For every $r\in (0,R)$ and every 
$$
\mu_{1/2}\in \cP(M) \textrm{ with }\mu_{1/2}\ll \vol_{g} \textrm{ and }\supp (\mu_{1/2})\subset B^{g,E}_{r^{4}}(x),
$$
 setting $y=\exp^{g}_{x}(rv)$, there exists a regular $c_{p}$-optimal  dynamical plan $\Pi=\Pi(\mu_{1/2},v,r)$ with associated curve of  probability measures 
$$(\mu_{t}:=(\ee_{t})_{\sharp}\Pi)_{t\in [0,1]}\subset \cP(M)$$ 
such that 
$$\supp(\mu_{1})\subset \{\exp_{y}^{g}(r^{2}w): w\in T_{y}M\cap \cC, \, g(w,w)=-1 \}$$ and satisfying \eqref{eq:RiceqT}.
\end{remark}

\begin{remark}[An equivalent statement via an auxiliary Riemannian metric $h$]\label{rem:ThmRiemMeth}
Given an auxiliary Riemannian metric $h$ on $M$,  with analogous arguments as in the proof below, the condition $(3)$ in Theorem \ref{thm:RiccieqT}  can be replaced by 
\begin{enumerate}
\item[(3)']  For every $p\in (0,1)$ the following holds.  For every $x\in M$ there exist $R=R(x)>0$ and  a function 
\begin{align*}
\epsilon=\epsilon_{x}:(0,\infty)\to (0,\infty) \textrm{ with }\lim_{r\downarrow 0}\epsilon(r)=0 \text{ such that}\\ 
\forall v\in \Int(\cC_{x}) \text{ with } h(v,v)\leq R^{2} \text{ the next assertion holds.}
\end{align*}
For every $r\in (0,R)$,   there exists an $r$-concentrated regular $c_{p}$-optimal  dynamical plan $\Pi=\Pi(x,v,r)$ in the direction of $v$ 
 satisfying \eqref{eq:RiceqT}.
\end{enumerate}
Moreover, both in (3) and (3') one can replace $B^{g,E}_{r^{4}}(x)$ by $B^{h}_{r^{4}}(x)$
and  $$\{\exp_{y}^{g}(r^{2}w)\,:\, w\in T_{y}M\cap \cC, \, g(w,w)=-1 \}$$ by  $B^{h}_{r^{2}}(y)$.
\end{remark} 

\begin{remark}[The tensor $\tilde{T}$]
As mentioned above, we will assume the cosmological constant $\Lambda$ and the energy momentum tensor $T$ to be given, say from physics and/or mathematical general relativity.  Given $g,\Lambda$ and $T$, for convenience of notation we set $\tilde{T}$ to be defined in \eqref{eq:deftildeTIntro}.
\\Let us stress that not any symmetric bilinear form $\tilde{T}$ would correspond to a physically meaningful situation; in order to be physically relevant,  it is crucial that $\tilde{T}$ is given by  \eqref{eq:deftildeTIntro}  where $T$ is a physical energy-momentum tensor (in particular $T$ has to satisfy $\nabla^{a} T_{ab}=0$, i.e.  be ``freely gravitating'', it has to satisfy some suitable energy condition like the ``dominant energy condition'', etc.). 
\end{remark}

\medskip

\begin{proof}[Proof of Theorem \ref{thm:RiccieqT}]
\textbf{(1)$ \Rightarrow$ (2)}: trivial.
\\

\textbf{(2)$ \Rightarrow$ (1)}: Follows by the identity theorem for polynomials and the fact that $\cC$ has non-empty open interior.

\medskip
\textbf{(2)$ \Rightarrow$ (3)}: From the implication (1)$ \Rightarrow$ (2) in Theorem   \ref{thm:RiccileqT}, we get a regular $c_{p}$-optimal  dynamical plan $\Pi=\Pi(x,v,r)$ as in (3) such that the upper bound in \eqref{eq:RiceqT} holds.  Moreover, from \eqref{eq:Estphit}
it holds that
\begin{equation}\label{eq:gammaPiv}
{\rm d}^{TM}_{h} (\dot{\gamma}_{t}, rv)\leq r \epsilon(r), \quad \text{for $\Pi$-a.e. } \gamma,   \text{ for all } t\in [0,1].
\end{equation}
Recalling that the implication  (1)$ \Rightarrow$ (2) in Theorem \ref{thm:RiccigeqT} gives the convexity property \eqref{eq:EntConvtT} of the entropy along  \emph{every} regular $c_{p}$-optimal  dynamical plan, and using \eqref{eq:gammaPiv}, we conclude that also the lower bound in \eqref{eq:RiceqT} holds.
\\

\textbf{(3)$ \Rightarrow$ (4)}: trivial.
\\

\textbf{(4)$ \Rightarrow$ (2)}. 
\\The fact that  $$\Ric(v,v)\leq \tilde{T}(v,v) \quad \forall v\in \cC$$ follows directly from Theorem  \ref{thm:RiccileqT}. The fact that $$\Ric(v,v)\geq \tilde{T}(v,v) \quad \forall v\in \cC$$  can be showed following arguments already used in the paper, let us briefly discuss it. 
 Fix $p\in (0,1)$ given by $(4)$ and assume by contradiction that 
 $$
 \exists x\in M, \, \delta>0 \textrm{ and }v\in T_{x}M\cap \cC \textrm{ with }-g(v,v)>0$$
  such that $$\Ric(v,v)\leq (\tilde{T}+3\delta g) (v,v) .$$ Thanks to Lemma \ref{lem:SmoothKP}, up to replacing $v$ with $sv$ for some $s\in (0,1)$ small enough, we know that we can  construct a  $c_{p}$-convex function $\phi:M\to \R$, smooth in a neighbourhood of $x$ and satisfying 
  $$\nabla^{q}_{g}\phi(x)=v \qquad \qquad \Hess_\phi (x)=0.$$
 Then, by continuity, we can find
\begin{itemize}
\item  a relatively compact open neighbourhood $E\subset\subset  \Int(\cC)$ of $v$  in $TM$ such that
\begin{equation}\label{eq:RicwwEleq}
\Ric(w,w)\leq  (\tilde{T}+2\delta g) (w,w), \quad \forall w\in E;
\end{equation}
\item $\phi$ is smooth on  $B^{h}_{100r}(x)$, 
$$\nabla^{q}_{g}\phi\in E \textrm{ on }B^{h}_{10r}(x),$$ 
$${\rm d}^{TM}_{h}(\nabla^{q}_{g}\phi,v) \leq   \epsilon(r) \textrm{ on }B^{h}_{10r}(x);$$
\item $|\Hess_\phi|_{h} \leq  \epsilon(r)$ on $B^{h}_{10r}(x)$; 
\end{itemize}
where 
$$\epsilon(r)\to 0 \textrm{ as } r\to 0.
$$
For $t\in [0,1]$, consider the map 
$$\Psi_{1/2}^{t}:z\mapsto \exp_{z}(r(t-1/2) \nabla^{q}_{g}\phi(z)).$$ 
Notice that 
$$\Psi_{1/2}^{t} ( B^{g,E}_{r^{4}}(x))\subset B^{h}_{10r}(x) \quad \forall t\in [0,1].
$$
Call $$\mu_{1/2}=\vol_{g}(B^{g,E}_{r^{4}}(x))^{-1}\,  \vol_{g}\llcorner B^{g,E}_{r^{4}}(x)$$  and define 
$$\mu_{t}:=(\Psi_{1/2}^{t})_{\sharp} (\mu_{1/2}) \textrm{ for } t\in [0,1].
$$
 By the properties of $\phi$,  the plan $\Pi$ representing the curve of probability measures $(\mu_{t})_{t\in [0,1]}$ is a regular  $c_{p}$-optimal  dynamical plan and  
 $$\supp(\mu_{1})\subset \{\exp_{y}^{g}(r^{2}w): w\in T_{y}M\cap \cC, \, g(w,w)=-1 \}.
 $$
  Moreover
\begin{equation}\label{eq:dtPsiContrEE}
\frac{1}{r}\frac{d}{dt} \Psi_{1/2}^{t}(z)\in E  \quad  \forall z\in B^{h}_{r}(x), \quad \forall t\in [0,1].
\end{equation}
We can now follow verbatim the arguments in (1)$\Rightarrow$(2) of Theorem \ref{thm:RiccileqT} by using   \eqref{eq:RicwwEleq} and \eqref{eq:dtPsiContrEE}, obtaining  a function $\epsilon(r)\to 0$ as $r\to 0$ such that
\begin{equation}
\Ent(\mu_{1}|\vol_{g})-2 \Ent(\mu_{1/2}|\vol_{g})+\Ent(\mu_{0}|\vol_{g}) \leq  \frac{ (\tilde{T}+\delta g) (v,v)  +\epsilon(r)}{4} \, r^{2}.  \nonumber
\end{equation}
The last inequality clearly contradicts the lower bound in \eqref{eq:RiceqT}.

\end{proof}

In the vacuum case $T\equiv 0$ with cosmological constant $\Lambda \in \R$, the Einstein equations read as
\begin{equation}\label{eq:EELambda}
\Ric \equiv \frac{\Lambda}{\frac{n}{2}-1} g,
\end{equation}
 for an $n$-dimensional space-time $(M,g,\cC)$. Specializing Theorem \ref{thm:RiccieqT}  with the choice 
 $$\tilde{T}=\frac{\Lambda}{\frac{n}{2}-1} g$$
  and using Corollary \ref{thm:RicciLowerBound} to sharpen the lower bound in \eqref{eq:RicLambdaOT} for the constant case, we obtain the following optimal transport formulation of Einstein vacuum equations.

\begin{theorem}\label{thm:RicciConstLambda}
Let $(M,g,\cC)$ be a space-time of dimension $n \geq 3$ and let $\Lambda\in \R$. Then the following assertions are equivalent:
\begin{enumerate}
\item[(1)] The space-time $(M,g,\cC)$ satisfies Einstein vacuum equations  of General Relativity \eqref{eq:EELambda} corresponding to cosmological constant equal to $\Lambda$.   
\item[(2)]  For every $p\in (0,1)$ and for every relatively  compact  open subset $E\subset\subset  \Int(\cC)$ there exist $R=R(E)\in (0,1)$ and a function 
\begin{align*}
&\epsilon=\epsilon_{E}:(0,\infty)\to (0,\infty)\textrm{ with }\lim_{r\downarrow 0}\epsilon(r)=0 \textrm{ such that }\\
&\forall x\in p_{TM\to M}(E) \textrm{ and }v\in T_{x}M\cap E \textrm{ with }g(v,v)=-R^{2}
\end{align*}
 the next assertion holds.
For every $r\in (0,R)$,  there exists an $r$-concentrated regular $c_{p}$-optimal  dynamical plan $\Pi=\Pi(x,v,r)$ in the direction of $v$ (with respect to $E$)
 satisfying
\begin{equation}\label{eq:RicLambdaOT}
-\frac{\Lambda}{\frac{n}{2}-1} \leq \frac{4}{r^{2}}\left[\Ent(\mu_{1}|\vol_{g})-2 \Ent(\mu_{1/2}|\vol_{g})+\Ent(\mu_{0}|\vol_{g}) \right] \leq - \frac{\Lambda}{\frac{n}{2}-1}+\epsilon(r).
\end{equation}
\item [(3)] There exists  $p\in (0,1)$ such that the analogous  assertion as in  {\rm (2)}  holds true.
\end{enumerate}
\end{theorem}

\medskip

It is worth to isolate the case of zero cosmological constant.

\begin{corollary}\label{cor:Ricciflat}
Let $(M,g,\cC)$ be a space-time of dimension $n \geq 2$. Then the following assertions are equivalent:
\begin{enumerate}
\item[(1)] The space-time $(M,g,\cC)$ satisfies Einstein vacuum equations of General Relativity with zero cosmological constant, i.e.  $\Ric\equiv 0$.  
\item[(2)]  For every $p\in (0,1)$ and for every relatively  compact  open subset $E\subset\subset  \Int(\cC)$ there exist $R=R(E)\in (0,1)$ and a function 
\begin{align*}
&\epsilon=\epsilon_{E}:(0,\infty)\to (0,\infty)\textrm{ with }\lim_{r\downarrow 0}\epsilon(r)/r^2=0 \textrm{ such that }\\
&\forall x\in p_{TM\to M}(E) \textrm{ and }v\in T_{x}M\cap E \textrm{ with }g(v,v)=-R^{2}
\end{align*}
 the next assertion holds.
For every $r\in (0,R)$,  there exists an $r$-concentrated regular $c_{p}$-optimal  dynamical plan $\Pi=\Pi(x,v,r)$ in the direction of $v$ (with respect to $E$)
 satisfying
\begin{equation}\label{eq:Ric0OT}
0\leq \Ent(\mu_{1}|\vol_{g})-2 \Ent(\mu_{1/2}|\vol_{g})+\Ent(\mu_{0}|\vol_{g}) \leq  \epsilon(r).
\end{equation}
\item [(3)] There exists  $p\in (0,1)$ such that the analogous  assertion as in  {\rm (2)}  holds true.
\end{enumerate}
\end{corollary}

\begin{appendix}
\section{A $q$-Bochner identity in Lorentzian setting}

In this section we prove a Bochner type identity in Lorentzian setting for the linearization of the $q$-Box operator, the Lorentzian analog of the $q$-Laplacian; let us mention that related results have been obtained in the Riemannian \cite{Matei00, Valt12}   and  Finsler settings \cite{Xia1,Xia2} but  at best our knowledge this section is original in the Lorentzian $\cL_{p}$ framework. 

Throughout the section,  $(M,g,\cC)$ is a space-time,  $U\subset M$ is an open subset and  $q\in (-\infty, 0)$ is fixed.
Let  $\phi\in C^{3}(U)$ satisfy 
$$-\nabla_{g} \phi\in \Int(\cC) \textrm{ on }U.$$ 
Denote by  
$$\nabla^{q}_{g} \phi:=- |g(\nabla_{g} \phi, \nabla_{g} \phi)|^{\frac{q-2}{2}} \nabla_{g} \phi$$ 
the $q$-gradient, by  
$$ \Box_{g}^{q} \phi:={\rm div}(- \nabla_{g}^{q} \phi) $$ 
the $q$-Box operator of  $\phi$ and by $L^{q}_{\phi}$ 
the linearization of the $q$-Box operator at $\phi$ defined by the following relation:
\begin{equation}\label{eq:linDeltaqDef}
\left.\frac{d}{dt}\right|_{t=0} \Box^{q}_{g}(\phi+tu)= L^{q}_{\phi}u, \quad \forall u\in C^{\infty}_{c}(M).
\end{equation}

The ultimate goal of the section is to prove the following result.

\begin{proposition}\label{prop:qBochner}
Under the above notation, the following  $q$-Bochner identity holds:
 
 \begin{eqnarray} \label{eq:q-Bochner}
 &&-L^{q}_{\phi}\left(\frac{(-g(\nabla_{g} \phi, \nabla_{g} \phi))^{ \frac{q} {2}}}{q} \right)= g( \nabla_{g} \Box_{g}^{q} \phi,- \nabla_{g}^{q} \phi)  \nonumber\\
 &&\qquad \qquad \qquad  + (q-2)^{2}  |g(\nabla_{g} \phi, \nabla_{g} \phi)|^{q-2} \left( \frac{\Hess_{\phi}(\nabla_{g}\phi, \nabla_{g}\phi)}{|g(\nabla_{g} \phi, \nabla_{g} \phi)|} \right)^{2} \nonumber\\
&& \qquad \qquad \qquad  +   |g(\nabla_{g} \phi, \nabla_{g} \phi)|^{q-2} \big( \Ric (\nabla_{g} \phi, \nabla_{g} \phi)  +  g(\Hess_{\phi}, \Hess_{\phi}) \big)  \nonumber\\
&&\qquad \qquad \qquad  - 2 (q-2)   \, |g(\nabla_{g} \phi, \nabla_{g} \phi)|^{q-3}  \Hess_{\phi}\left(\nabla_{g}\phi, \Hess_{\phi}(\nabla_{g}\phi)\right) .
\end{eqnarray}
\end{proposition}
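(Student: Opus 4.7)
The plan is to reduce the identity \eqref{eq:q-Bochner} to a direct computation whose only non-trivial input is the standard Lorentzian Bochner identity
\[
\dive\bigl(\Hess_\phi(\nabla_g\phi)\bigr) = g(\Hess_\phi,\Hess_\phi) + g(\nabla_g\phi,\nabla_g \Box\phi) + \Ric(\nabla_g\phi,\nabla_g\phi),
\]
where $\Box\phi := \dive(\nabla_g\phi)$. This identity holds in arbitrary pseudo-Riemannian signature and follows from commuting covariant derivatives, $\nabla_i\nabla_j\nabla^i\phi = \nabla_j\Box\phi + R_{jk}\nabla^k\phi$, together with the symmetry of $\Hess_\phi$.

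\textbf{Step 1 (linearization).} Set $h := -g(\nabla_g\phi,\nabla_g\phi) > 0$ on $U$, so that $-\nabla_g^q\phi = h^{(q-2)/2}\nabla_g\phi$ and $\Box_g^q\phi = \dive\bigl(h^{(q-2)/2}\nabla_g\phi\bigr)$. Differentiating $\Box_g^q(\phi+tu)$ at $t=0$, and using that the derivative of $h(\phi+tu) := -g(\nabla_g(\phi+tu),\nabla_g(\phi+tu))$ at $t=0$ equals $-2g(\nabla_g\phi,\nabla_g u)$, one obtains
\[
L_\phi^q u = \dive\Bigl(h^{(q-2)/2}\nabla_g u - (q-2)\,h^{(q-4)/2}\,g(\nabla_g\phi,\nabla_g u)\,\nabla_g\phi\Bigr).
\]

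\textbf{Step 2 (applied to $u := h^{q/2}/q$).} The identity $Xh = -2\Hess_\phi(X,\nabla_g\phi)$ (valid for any vector field $X$) yields $\nabla_g h = -2\Hess_\phi(\nabla_g\phi)$ by symmetry of the Hessian, hence $\nabla_g u = -h^{(q-2)/2}\Hess_\phi(\nabla_g\phi)$ and, writing $H := \Hess_\phi(\nabla_g\phi,\nabla_g\phi)$, $g(\nabla_g\phi,\nabla_g u) = -h^{(q-2)/2}H$. Substituting into the formula of Step~1 gives
\[
-L_\phi^q u = \dive\bigl(h^{q-2}\Hess_\phi(\nabla_g\phi)\bigr) - (q-2)\,\dive\bigl(h^{q-3} H\,\nabla_g\phi\bigr).
\]
I would then expand each divergence using $\dive(fX) = g(\nabla_g f,X) + f\dive X$, invoking $g(\nabla_g h,\nabla_g\phi) = -2H$ in the second and the Bochner identity above in the first. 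The outcome is an explicit linear combination of seven building blocks: $h^{q-2}\nabla_g\phi(\Box\phi)$, $h^{q-2}g(\Hess_\phi,\Hess_\phi)$, $h^{q-2}\Ric(\nabla_g\phi,\nabla_g\phi)$, $h^{q-3}\Hess_\phi(\nabla_g\phi,\Hess_\phi(\nabla_g\phi))$, $h^{q-3}H\Box\phi$, $h^{q-3}g(\nabla_g H,\nabla_g\phi)$, and $h^{q-4}H^2$.

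\textbf{Step 3 (matching).} Expanding $\Box_g^q\phi = \dive(h^{(q-2)/2}\nabla_g\phi)$ by the product rule gives $\Box_g^q\phi = h^{(q-2)/2}\Box\phi - (q-2)\,h^{(q-4)/2}H$. Differentiating this along $\nabla_g\phi$ (using repeatedly $\nabla_g\phi(h^\alpha) = -2\alpha\,h^{\alpha-1}H$) produces an expression for $g(\nabla_g\Box_g^q\phi,\,-\nabla_g^q\phi) = h^{(q-2)/2}\nabla_g\phi(\Box_g^q\phi)$ as a linear combination of exactly the same seven building blocks found in Step~2. Matching coefficients, every contribution cancels except the $h^{q-4}H^2$ term, whose mismatch is $2(q-2)(q-3) - (q-2)(q-4) = (q-2)^2$; this produces precisely the $(q-2)^2 h^{q-2}(H/h)^2$ contribution on the right-hand side of \eqref{eq:q-Bochner}. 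The main obstacle is not conceptual but notational: the careful tracking of seven coefficients with their various $h$-powers and signs is where errors could creep in.
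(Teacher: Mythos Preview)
Your approach is correct and relies on the same ingredient as the paper---the classical Bochner identity for $\dive(\Hess_\phi(\nabla_g\phi))$---but the organization is different. The paper first expands $L^q_\phi u$ fully (Lemma~\ref{lem:cLphiu}), isolates its principal part $A^q_\phi u$ defined in \eqref{eq:defLqphi}, proves a separate Bochner-type identity for $A^q_\phi$ applied to $h^{q/2}/q$ via a normal-coordinate computation (Lemma~\ref{lem:qBochner2ord}), and only then reassembles. You instead keep $L^q_\phi u$ in divergence form, substitute $u=h^{q/2}/q$ immediately, and expand once; this is coordinate-free and avoids the intermediate operator $A^q_\phi$ altogether. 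The trade-off is that the paper's decomposition makes the ``second-order Bochner'' piece (Lemma~\ref{lem:qBochner2ord}) visible as a standalone statement, whereas your route is shorter but gives only the final identity. Your Step~3 coefficient check $2(q-2)(q-3)-(q-2)(q-4)=(q-2)^2$ is the crux and matches the paper's outcome; the remaining bookkeeping (cancellation of the $\nabla_g\phi(\Box\phi)$, $\nabla_g\phi(H)$, and $H\Box\phi$ terms) is routine once written out.
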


The proof of Proposition \ref{prop:qBochner} requires some preliminary lemmas.  First of all we derive an explicit expression for the operator  $L^{q}_{\phi}$.

\begin{lemma}\label{lem:cLphiu}
Under the above notation, it holds

\begin{align}
L^{q}_{\phi}u=&   \big(-g(\nabla_{g}\phi, \nabla_{g} \phi) \big)^{\frac{q-2}{2}} \left( \Box_{g}u - (q-2)  \frac{ \Hess_{u}(\nabla_{g}\phi, \nabla_{g} \phi)}{ -g(\nabla_{g} \phi, \nabla_{g} \phi)} \right) \nonumber\\
& + (2-q)   \big(-g(\nabla_{g}\phi, \nabla_{g} \phi) \big)^{-1} g(\nabla_{g}\phi, \nabla_{g} u) \Box_{g}^{q} \phi  \nonumber\\
&+ 2 (2-q)   \,\big(-g(\nabla_{g} \phi, \nabla_{g} \phi) \big)^{\frac{q-4}{2}}  \Hess_{\phi}\left(\nabla_{g}\phi, \nabla_{g}u+ \frac{g(\nabla_{g}\phi, \nabla_{g} u)}{-g(\nabla_{g}\phi, \nabla_{g} \phi)} \nabla_{g}\phi  \right) \label{eq:Lqphi}.
\end{align}
\end{lemma}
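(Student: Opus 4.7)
The plan is to compute $L^q_\phi u$ directly from the definition \eqref{eq:linDeltaqDef} by differentiating the expression $\Box^q_g f = \dive\!\big(|g(\nabla_g f, \nabla_g f)|^{(q-2)/2}\,\nabla_g f\big)$ at $f = \phi + tu$, $t=0$, and then expanding the resulting divergence by the Leibniz rule. The only step that is not pure algebra is a single substitution to re-introduce $\Box^q_g \phi$ in place of the standard d'Alembertian $\Box_g\phi = \dive\nabla_g\phi$.

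Writing $|f| := -g(\nabla_g\phi, \nabla_g\phi) > 0$ on $U$ (which uses the time-like assumption on $-\nabla_g\phi$), the chain rule yields at $t=0$
$$L^q_\phi u \;=\; \dive\!\Big(|f|^{(q-2)/2}\,\nabla_g u \;-\; (q-2)\,|f|^{(q-4)/2}\,g(\nabla_g\phi, \nabla_g u)\,\nabla_g\phi\Big).$$
Each of the two divergences expands via $\dive(\alpha X) = \alpha\,\dive X + g(\nabla_g \alpha, X)$, and two computations then drive everything. First, from $d(g(\nabla_g\phi,\nabla_g\phi))(X) = 2\,\Hess_\phi(X,\nabla_g\phi)$ together with the symmetry \eqref{eq:HessSym}, one obtains $g(\nabla_g |f|, X) = -2\,\Hess_\phi(\nabla_g\phi, X)$, which controls every gradient $\nabla_g |f|^\beta$ that appears. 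Second, the metric compatibility of the Levi--Civita connection gives the cross-derivative identity
$$\nabla_g\phi\big(g(\nabla_g\phi, \nabla_g u)\big) \;=\; \Hess_\phi(\nabla_g\phi, \nabla_g u) + \Hess_u(\nabla_g\phi, \nabla_g\phi),$$
and this is precisely how $\Hess_u(\nabla_g\phi, \nabla_g\phi)$ enters the final formula (a direct expansion would otherwise only produce $\Hess_\phi$-type contractions).

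After these expansions, $L^q_\phi u$ appears as a linear combination of five terms, one of which is $|f|^{(q-4)/2}\,g(\nabla_g\phi, \nabla_g u)\,\Box_g \phi$ and does not feature in the target \eqref{eq:Lqphi}. To eliminate it, I apply the same Leibniz expansion to $\Box^q_g\phi = \dive(|f|^{(q-2)/2}\nabla_g\phi)$, obtaining
$$\Box^q_g\phi \;=\; |f|^{(q-2)/2}\,\Box_g\phi \;-\; (q-2)\,|f|^{(q-4)/2}\,\Hess_\phi(\nabla_g\phi, \nabla_g\phi),$$
and I substitute the resulting formula for $\Box_g\phi$. An arithmetic check then shows that the resulting $|f|^{(q-6)/2}\,g(\nabla_g\phi, \nabla_g u)\,\Hess_\phi(\nabla_g\phi, \nabla_g\phi)$-coefficient is $(q-2)(q-4) - (q-2)^2 = 2(2-q)$, which pairs with the $|f|^{(q-4)/2}\,\Hess_\phi(\nabla_g\phi, \nabla_g u)$-coefficient $-2(q-2) = 2(2-q)$ to reassemble precisely as $2(2-q)\,|f|^{(q-4)/2}\,\Hess_\phi\!\big(\nabla_g\phi,\, \nabla_g u + \tfrac{g(\nabla_g\phi, \nabla_g u)}{|f|}\,\nabla_g\phi\big)$, i.e.\ the third line of \eqref{eq:Lqphi}.

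The main obstacle is organizational rather than conceptual: many terms of the same symbolic form arise with different coefficients and signs, and a single misidentification would destroy the final factorization. The only geometric inputs beyond the Leibniz rule are the symmetry of the Hessian and the metric compatibility of $\nabla$; the latter is what lets $\Hess_u$ appear at all, so the cross-derivative identity above is the conceptual pivot of the argument.
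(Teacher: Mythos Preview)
Your proof is correct and follows essentially the same route as the paper's: differentiate under the divergence at $t=0$, expand via the Leibniz rule using the two identities $\nabla_g|f|^\beta = -2\beta|f|^{\beta-1}\Hess_\phi(\nabla_g\phi)$ and $\nabla_g(g(\nabla_g\phi,\nabla_g u)) = \Hess_u(\nabla_g\phi)+\Hess_\phi(\nabla_g u)$, and simplify. The paper's proof records precisely these two identities and then simply asserts that plugging them in yields \eqref{eq:Lqphi}; you go further by making explicit the substitution of $\Box_g\phi$ in terms of $\Box_g^q\phi$ (via $\Box_g^q\phi = |f|^{(q-2)/2}\Box_g\phi - (q-2)|f|^{(q-4)/2}\Hess_\phi(\nabla_g\phi,\nabla_g\phi)$) and verifying the resulting coefficients, which is exactly the bookkeeping the paper suppresses.
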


\begin{proof}
By the very definitions of  $L^{q}_{\phi}u$ and $ \Box_{g}^{q} \phi$, we have
\begin{align}
L^{q}_{\phi}u&={\rm div}\left( \left.\frac{d}{dt}\right|_{t=0}  (-g(\nabla_{g}(\phi+tu), \nabla_{g} (\phi+tu)))^{\frac{q-2}{2}}   \nabla_{g} (\phi+tu) \right) \nonumber \\
&= {\rm div} \left( (2-q)   \big(-g(\nabla_{g}\phi, \nabla_{g} \phi) \big)^{\frac{q-4}{2}} g(\nabla_{g}\phi, \nabla_{g} u) \, \nabla_{g}\phi   \right. \nonumber \\
&\qquad \qquad \left.  + \big(-g(\nabla_{g}\phi, \nabla_{g} \phi) \big)^{\frac{q-2}{2}}  \, \nabla_{g}u  \right). \label{eq:Lqphi1}
\end{align}
In order to explicit the last formula, compute
\begin{align} 
\nabla_{g}  \big(-g(\nabla_{g} \phi, \nabla_{g} \phi)\big)^{\alpha} &= -2 \alpha \,\big(-g(\nabla_{g} \phi, \nabla_{g} \phi) \big)^{\alpha-1}    \Hess_{\phi}(\nabla_{g}\phi)  \label{eq:nablanablaphialpha} \\
\nabla_{g}  \big(g(\nabla_{g} \phi, \nabla_{g} u)\big) &=  \Hess_{u}(\nabla_{g}\phi)+\Hess_{\phi}(\nabla_{g}u)  \label{eq:nablanablaphinablau}.
\end{align}
Plugging  \eqref{eq:nablanablaphialpha} and  \eqref{eq:nablanablaphinablau} into  \eqref{eq:Lqphi1} gives  \eqref{eq:Lqphi}.
\end{proof}

We next show a $q$-Bochner identity for the operator $A^{q}_{\phi}$ defined as 
\begin{equation}\label{eq:defLqphi}
A^{q}_{\phi}(u):= \big(-g(\nabla_{g} \phi, \nabla_{g} \phi) \big)^{\frac{q-2}{2}} \left( \Box_{g} u-(q-2) \frac{\Hess_{u}(\nabla_{g}\phi, \nabla_{g}\phi)}{-g(\nabla_{g}\phi, \nabla_{g}\phi)} \right).
\end{equation}

\begin{lemma}\label{lem:qBochner2ord}
Under the above notation, the following identity holds:
\begin{align}
 &-L_{\phi}^{q}\left(\frac{(-g(\nabla_{g} \phi, \nabla_{g} \phi))^{\frac{q}{2}} }{q} \right)+ g( \nabla_{g} \Box_{g}^{q} \phi, \nabla_{g}^{q} \phi) = \nonumber \\ 
&\qquad  = (q-2) |g(\nabla_{g} \phi, \nabla_{g} \phi)|^{\frac{q-4}{2}}  \Box^{q}_{g}\phi\, \Hess_{\phi}(\nabla_{g}\phi, \nabla_{g}\phi)   \nonumber \\ 
& \qquad \qquad + |g(\nabla_{g} \phi, \nabla_{g} \phi)|^{q-2} \Big(q(q-2) \left( \frac{\Hess_{\phi}(\nabla_{g}\phi, \nabla_{g}\phi)}{|g(\nabla_{g} \phi, \nabla_{g} \phi)|} \right)^{2}      \nonumber \\ 
&  \qquad \qquad \qquad \qquad \qquad \qquad \quad  +  \Ric (\nabla_{g} \phi, \nabla_{g} \phi)+  g(\Hess_{\phi}, \Hess_{\phi})  \Big).
 \label{eq:q-Bochner2ord}
\end{align}
\end{lemma}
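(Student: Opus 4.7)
My plan is to prove \eqref{eq:q-Bochner2ord} by a direct computation combining Lemma \ref{lem:cLphiu}, the chain rule, the Lorentzian Bochner identity, and the elementary relation between $\Box_g$ and $\Box_g^q$. For brevity I set $h := -g(\nabla_g\phi, \nabla_g\phi) > 0$ (positive on $U$ since $\nabla_g\phi$ is time-like), $f := h^{q/2}/q$, and $H := \Hess_\phi(\nabla_g\phi, \nabla_g\phi)$; I also write $\Hess_\phi(\nabla_g\phi)$ for the vector field characterized by $g(\Hess_\phi(\nabla_g\phi), Y) = \Hess_\phi(\nabla_g\phi, Y)$ for every $Y \in TM$.

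First I compute first-order data on $f$: applying \eqref{eq:nablanablaphialpha} with $\alpha = q/2$ and dividing by $q$ gives $\nabla_g f = -h^{(q-2)/2}\Hess_\phi(\nabla_g\phi)$, whence
\[
g(\nabla_g\phi, \nabla_g f) = -h^{(q-2)/2}H, \qquad \Hess_\phi(\nabla_g\phi, \nabla_g f) = -h^{(q-2)/2}\Hess_\phi(\nabla_g\phi, \Hess_\phi(\nabla_g\phi)).
\]
Substituting $u = f$ in Lemma \ref{lem:cLphiu} and inserting these expressions reduces the correction terms $L^q_\phi f - A^q_\phi f$ to an explicit polynomial combination of $h^{(q-4)/2}\Box^q_g\phi \cdot H$, $h^{(q-4)/2}\Hess_\phi(\nabla_g\phi, \Hess_\phi(\nabla_g\phi))$, and $h^{(q-6)/2}H^2$. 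What remains to handle is the principal part $A^q_\phi f = h^{(q-2)/2}\Box_g f - (q-2)h^{(q-4)/2}\Hess_f(\nabla_g\phi, \nabla_g\phi)$.

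Next I differentiate $\nabla_j f = -h^{(q-2)/2}(\Hess_\phi(\nabla_g\phi))_j$ a second time and contract with $g^{ij}$ (for $\Box_g f$) respectively $\phi^i\phi^j$ (for $\Hess_f(\nabla_g\phi, \nabla_g\phi)$). Each produces an algebraic quadratic-in-Hessian part plus a single third-order contribution. For the traced contraction, the Ricci commutation identity
\[
g^{ij}\nabla_i\nabla_j\nabla_k\phi \;=\; \nabla_k\Box_g\phi + \Ric(\nabla_g\phi, \cdot)_k
\]
reproduces inside $\Box_g f$ the classical Lorentzian Bochner formula
\[
\tfrac12\Box_g\bigl[g(\nabla_g\phi, \nabla_g\phi)\bigr] \;=\; g(\nabla_g\phi, \nabla_g\Box_g\phi) + \Ric(\nabla_g\phi, \nabla_g\phi) + g(\Hess_\phi, \Hess_\phi);
\]
this is the only place curvature enters. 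The triple contraction appearing in $\Hess_f(\nabla_g\phi, \nabla_g\phi)$ is handled by a differentiation by parts along $\nabla_g\phi$, which rewrites $\phi^i\phi^j\phi^k\nabla_i\nabla_j\nabla_k\phi$ as $\nabla_{\nabla_g\phi}H - 2\Hess_\phi(\nabla_g\phi, \Hess_\phi(\nabla_g\phi))$.

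Finally, the product rule applied to $\Box^q_g\phi = \dive(h^{(q-2)/2}\nabla_g\phi)$ yields the scalar identity $\Box^q_g\phi = h^{(q-2)/2}\Box_g\phi - (q-2)h^{(q-4)/2}H$. Differentiating it and pairing with $\nabla^q_g\phi = -h^{(q-2)/2}\nabla_g\phi$ produces, up to algebraic remainders in $\Box^q_g\phi\cdot H$ and $H^2$, exactly the combination $-h^{q-2}g(\nabla_g\phi, \nabla_g\Box_g\phi) + (q-2)h^{q-3}\nabla_{\nabla_g\phi}H$, which cancels the two third-order contributions once $g(\nabla_g\Box^q_g\phi, \nabla^q_g\phi)$ is added to $-L^q_\phi f$. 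What remains is a purely algebraic identity in $h$, $H$, $\Box^q_g\phi$, $\Hess_\phi(\nabla_g\phi, \Hess_\phi(\nabla_g\phi))$, $g(\Hess_\phi, \Hess_\phi)$ and $\Ric(\nabla_g\phi, \nabla_g\phi)$, checked term by term against the right-hand side of \eqref{eq:q-Bochner2ord}. The main obstacle is precisely this bookkeeping: keeping track of many $h^{\alpha}$-weighted quadratic contractions of $\Hess_\phi$ against $\nabla_g\phi$ and verifying the required cancellations. The only conceptually non-trivial ingredient is the Ricci commutation identity, which is the sole source of the $\Ric(\nabla_g\phi, \nabla_g\phi)$ term on the right-hand side.
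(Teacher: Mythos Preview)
Your strategy is sound and matches the paper's line by line in its core: compute $\Box_g f$ via the Lorentzian Bochner identity, compute $\Hess_f(\nabla_g\phi,\nabla_g\phi)$ directly, expand $g(\nabla_g\Box^q_g\phi,\nabla^q_g\phi)$ from the scalar relation $\Box^q_g\phi = h^{(q-2)/2}\Box_g\phi - (q-2)h^{(q-4)/2}H$, and collect. However, when you carry out the final ``checked term by term'' step you will find the algebra does not close against the stated right-hand side. The reason is a typo in the statement: the operator on the left should be $A^q_\phi$ (defined in \eqref{eq:defLqphi}), not $L^q_\phi$. This is confirmed by the sentence introducing the lemma (``We next show a $q$-Bochner identity for the operator $A^q_\phi$''), by the paper's own proof (which plugs $\Box_g f$ and $\Hess_f(\nabla_g\phi,\nabla_g\phi)$ directly into \eqref{eq:defLqphi} and never invokes Lemma~\ref{lem:cLphiu}), and by the proof of Proposition~\ref{prop:qBochner}, which only afterwards passes from $A^q_\phi$ to $L^q_\phi$ via \eqref{eq:LqphiuLq}--\eqref{eq:Lqphinablaphi2} to obtain \eqref{eq:q-Bochner}. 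Concretely, the right-hand sides of \eqref{eq:q-Bochner2ord} and \eqref{eq:q-Bochner} differ exactly by the correction terms in \eqref{eq:Lqphinablaphi2}: the term $(q-2)h^{(q-4)/2}\Box^q_g\phi\cdot H$ and the coefficient $q(q-2)$ on $H^2/h^2$ get traded for $-2(q-2)h^{q-3}\Hess_\phi(\nabla_g\phi,\Hess_\phi(\nabla_g\phi))$ and the coefficient $(q-2)^2$.

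Once the statement is corrected to $A^q_\phi$, your first step (invoking Lemma~\ref{lem:cLphiu} to produce the $L^q_\phi - A^q_\phi$ correction terms) becomes superfluous for this lemma, and the rest of your argument is the paper's proof. If instead you keep those correction terms, your computation proves Proposition~\ref{prop:qBochner} directly, landing on the right-hand side \eqref{eq:q-Bochner} rather than \eqref{eq:q-Bochner2ord}: you have simply merged the paper's two-step argument into one.
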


\begin{proof}
We perform the computation at an arbitrary point $x_{0}\in U$. In order to simplify the computations, we consider normal coordinates $(x^{i})$ in a neighbourhood of $x_{0}$ with $\frac{\partial}{\partial x^{1}}\in \cC$. \
It holds
\begin{align}
&-\frac{\Box_{g}(-g(\nabla_{g} \phi, \nabla_{g} \phi))^{ \frac{q} {2}}}{q}=g^{ij} \partial_{i}  \left( (-g(\nabla_{g} \phi, \nabla_{g} \phi))^{\frac{q-2}{2}} g^{kl} \partial_{jk}\phi \, \partial_{l} \phi  \right) \nonumber\\
&\qquad \quad =  (-g(\nabla_{g} \phi, \nabla_{g} \phi))^{\frac{q-2}{2}} \Big(- \frac{q-2}{ -g(\nabla_{g} \phi, \nabla_{g} \phi)} g^{ij} g^{mn} \partial_{im} \phi \partial_{n} \phi   g^{kl} \partial_{jk}\phi \, \partial_{l} \phi    \nonumber\\
& \qquad \qquad \qquad \qquad \qquad \qquad \qquad \qquad \qquad  + g^{ij} g^{kl}   \partial_{ijk}\phi \, \partial_{l} \phi + g^{ij} g^{kl} \partial_{jk}\phi \, \partial_{li} \phi  \Big). \label{Delta|nablaphi|1}
\end{align}
Now, from the symmetry of second order derivatives and the very definition of the the Riemann tensor \eqref{def:R}, we have
\begin{equation}\label{eq:3rdOrder}
\partial_{ijk}\phi= \partial_{ikj}\phi = g(R(\partial_{x^{i}}, \partial_{x^{k}}) \nabla_{g} \phi,  \partial_{x^{j}}) +  \partial_{kij}\phi.
\end{equation}
Thus 
$$
g^{ij} g^{kl}   \partial_{ijk}\phi \, \partial_{l} \phi = g(\nabla_{g} \Box_{g} \phi, \nabla_{g}\phi) + \Ric(\nabla_{g} \phi, \nabla_{g} \phi),
$$
and we can rewrite \eqref{Delta|nablaphi|1} as
\begin{align} \label{Delta|nablaphi}
-\frac{\Box_{g}(-g(\nabla_{g} \phi, \nabla_{g} \phi))^{ \frac{q} {2}}}{q}=&(-g(\nabla_{g} \phi, \nabla_{g} \phi))^{\frac{q-2}{2}} \Big((2-q)  \frac{ g(\Hess_{\phi} (\nabla_{g} \phi), \Hess_{\phi} (\nabla_{g} \phi))}{ -g(\nabla_{g} \phi, \nabla_{g} \phi)}   \nonumber\\
& \qquad  \qquad   \qquad  \qquad  \quad   +  g(\nabla_{g} \Box_{g} \phi, \nabla_{g}\phi) + \Ric(\nabla_{g} \phi, \nabla_{g} \phi)   \nonumber\\
&\qquad  \qquad   \qquad  \qquad  \quad + g(\Hess_{\phi}, \Hess_{\phi}) \Big).
\end{align}
We now compute the second part of $-L_{\phi}^{q}\left(\frac{(-g(\nabla_{g} \phi, \nabla_{g} \phi))^{\frac{q}{2}} }{q} \right)$. To this aim observe that
\begin{align}
\nabla_{g} \left( \frac{\big(-g(\nabla_{g} \phi, \nabla_{g} \phi) \big)^{\frac{q}{2} }}{q} \right)&= - \big(-g(\nabla_{g} \phi, \nabla_{g} \phi) \big)^{\frac{q-2}{2}} \, \Hess_{\phi}(\nabla_{g}\phi)  \label{eq:nablanablanabla}   \\
\Hess_{\frac{(-g(\nabla_{g} \phi, \nabla_{g} \phi))^{\frac{q}{2} }}{q}} (\nabla_{g}\phi, \nabla_{g}\phi)&= (q-2) \big(-g(\nabla_{g} \phi, \nabla_{g} \phi) \big)^{\frac{q-4}{2}} [\Hess_{\phi} (\nabla_{g} \phi, \nabla_{g} \phi)]^{2}  \nonumber\\
&- \big(-g(\nabla_{g} \phi, \nabla_{g} \phi) \big)^{\frac{q-2}{2}} g(\nabla_{ \nabla_{g} \phi}\nabla_{ \nabla_{g} \phi} \nabla_{g} \phi, \nabla_{g}\phi). \label{eq:Hessnablanabla}
\end{align}
It is useful to express $\Box^{q}_{g}$ in terms of  $\Box_{g}$:
\begin{align}
\Box^{q}_{g} \phi&:=  {\rm div}\Big( \big(-g(\nabla_{g} \phi, \nabla_{g} \phi) \big)^{\frac{q-2}{2}} \nabla_{g}\phi \Big) \nonumber\\
&= \big(-g(\nabla_{g} \phi, \nabla_{g} \phi) \big)^{\frac{q-2}{2}} \left( \Box_{g} \phi-(q-2) \frac{\Hess_{\phi}(\nabla_{g}\phi, \nabla_{g}\phi)}{-g(\nabla_{g}\phi, \nabla_{g}\phi)} \right). \label{eq:DeltaqDelta}
\end{align}
Using \eqref{eq:DeltaqDelta}, we can write
\begin{align} \label{eq:nablaDeltapnabla}
g(&\nabla_{g}  \Box^{q}_{g} \phi,  \nabla_{g}\phi)=  \nonumber\\
&= g\left(\nabla_{g} \Big(\big(-g(\nabla_{g} \phi, \nabla_{g} \phi) \big)^{\frac{q-2}{2}} \left( \Box_{g} \phi-(q-2) \frac{\Hess_{\phi}(\nabla_{g}\phi, \nabla_{g}\phi)}{-g(\nabla_{g}\phi, \nabla_{g}\phi)} \right)\Big) , \nabla_{g}\phi \right)  \nonumber\\
&=\big(-g(\nabla_{g} \phi, \nabla_{g} \phi) \big)^{\frac{q-2}{2}}  g(\nabla_{g} \Box_{g} \phi, \nabla_{g}\phi) -(q-2) \frac{\Hess_{\phi}(\nabla_{g} \phi, \nabla_{g} \phi )}{-g(\nabla_{g} \phi, \nabla_{g} \phi )}  \Box^{q}_{g} \phi    \nonumber\\
&\qquad - (q-2) \big(-g(\nabla_{g} \phi, \nabla_{g} \phi) \big)^{\frac{q-4}{2}}  \Big( g(\nabla_{\nabla_{g}\phi}\nabla_{\nabla_{g}\phi} \nabla_{g}\phi ,\nabla_{g}\phi)   \nonumber\\
&\qquad \qquad \qquad \qquad \qquad \qquad \qquad \qquad + g(\Hess_{\phi} (\nabla_{g}\phi), \Hess_{\phi} (\nabla_{g}\phi) \Big) \nonumber\\
&\qquad  -2(q-2)  \big(-g(\nabla_{g} \phi, \nabla_{g} \phi) \big)^{\frac{q-6}{2}}  [\Hess_{\phi} (\nabla_{g} \phi, \nabla_{g} \phi)]^{2}.
\end{align}
Plugging \eqref{Delta|nablaphi} and \eqref{eq:Hessnablanabla} into \eqref{eq:defLqphi}, and simplifying using  \eqref{eq:nablaDeltapnabla}, gives the desired  \eqref{eq:q-Bochner2ord}.
\end{proof}

\textbf{Proof of Proposition \ref{prop:qBochner}}
Combining the expression of $L^{q}_{\phi}u$ as in  \eqref{eq:Lqphi} with the definition of $A^{q}_{\phi}u$ as in \eqref{eq:defLqphi} we can write
\begin{align}
L^{q}_{\phi}u=&   A^{q}_{\phi}u + (2-q)   \big(-g(\nabla_{g}\phi, \nabla_{g} \phi) \big)^{-1} g(\nabla_{g}\phi, \nabla_{g} u) \Box_{g}^{q} \phi  \nonumber\\
&+ 2 (2-q)   \,\big(-g(\nabla_{g} \phi, \nabla_{g} \phi) \big)^{\frac{q-4}{2}}  \Hess_{\phi}\left(\nabla_{g}\phi, \nabla_{g}u+ \frac{g(\nabla_{g}\phi, \nabla_{g} u)}{-g(\nabla_{g}\phi, \nabla_{g} \phi)} \nabla_{g}\phi  \right) \label{eq:LqphiuLq}.
\end{align}
Specializing \eqref{eq:LqphiuLq} with $u=\frac{(-g(\nabla_{g} \phi, \nabla_{g} \phi))^{ \frac{q} {2}}}{q}$ and using  \eqref{eq:nablanablanabla} gives
\begin{align}\label{eq:Lqphinablaphi2}
&   -L^{q}_{\phi}\left(\frac{(-g(\nabla_{g} \phi, \nabla_{g} \phi))^{ \frac{q} {2}}}{q} \right)=  -A^{q}_{\phi}\left(\frac{(-g(\nabla_{g} \phi, \nabla_{g} \phi))^{ \frac{q} {2}}}{q} \right)  \nonumber\\
 &  \qquad \qquad  \qquad  \qquad - (q-2) \big(-g(\nabla_{g} \phi, \nabla_{g} \phi) \big)^{\frac{q-2}{2}}  \frac{\Hess_{\phi}(\nabla_{g}\phi, \nabla_{g}\phi)}{ -g(\nabla_{g}\phi, \nabla_{g} \phi)}  \Box_{g}^{q} \phi  \nonumber\\
& \qquad \qquad  \qquad  \qquad - 2 (q-2)   \,\big(-g(\nabla_{g} \phi, \nabla_{g} \phi) \big)^{q-3}  \Hess_{\phi}\left(\nabla_{g}\phi, \Hess_{\phi}(\nabla_{g}\phi)\right)  \nonumber\\
&  \qquad \qquad  \qquad  \qquad - 2 (q-2)   \,\big(-g(\nabla_{g} \phi, \nabla_{g} \phi) \big)^{q-2}  \left(\frac{ \Hess_{\phi}\left(\nabla_{g}\phi,  \nabla_{g}\phi  \right)}{-g(\nabla_{g}\phi, \nabla_{g} \phi)}\right)^{2}.
\end{align}
Now, the combination of  \eqref{eq:Lqphinablaphi2} and \eqref{eq:q-Bochner2ord} yields
\begin{align}\label{eq:Lqphinablaphi22}
 &-L^{q}_{\phi}\left(\frac{(-g(\nabla_{g} \phi, \nabla_{g} \phi))^{ \frac{q} {2}}}{q} \right)=- g( \nabla_{g} \Box_{g}^{q} \phi, \nabla_{g}^{q} \phi) \nonumber\\
 & \qquad \qquad + (q-2)^{2}  |g(\nabla_{g} \phi, \nabla_{g} \phi)|^{q-2} \left( \frac{\Hess_{\phi}(\nabla_{g}\phi, \nabla_{g}\phi)}{|g(\nabla_{g} \phi, \nabla_{g} \phi)|} \right)^{2} \nonumber\\
& \qquad \qquad+   |g(\nabla_{g} \phi, \nabla_{g} \phi)|^{q-2} \big( \Ric (\nabla_{g} \phi, \nabla_{g} \phi)  +  g(\Hess_{\phi}, \Hess_{\phi}) \big)  \nonumber\\
&\qquad \qquad - 2 (q-2)   \, |g(\nabla_{g} \phi, \nabla_{g} \phi)|^{q-3}  \Hess_{\phi}\left(\nabla_{g}\phi, \Hess_{\phi}(\nabla_{g}\phi)\right) .
\end{align}
\hfill$\Box$

\section{A synthetic formulation of Einstein's vacuum equations in a non-smooth setting}\label{AppB}

\subsection{The Lorentzian synthetic framework}
The goal of this appendix in to give a synthetic  formulation of Einstein's vacuum equations (i.e. zero stress-energy tensor $T\equiv 0$ but possibly non-zero cosmological constant $\Lambda$) and show its stability under a natural adaptation for Lorentzian synthetic spaces of the measured Gromov-Hausdorff convergence (classically designed as a notion of convergence for metric measure spaces).  
\\This appendix was written in early 2021, more than two years after the rest of the paper was posted in arXiv (in October 2018). One of the reasons is that we will build on top of the synthetic time-like Ricci curvature lower bounds in a non-smooth setting developed by the first author in collaboration with Cavalletti in \cite{CaMoLor} (posted on arXiv in 2020).
\\

The general framework is given by Lorentzian pre-length/geodesic spaces. Let us start by recalling some basics of theory, following the approach of Kunziger-S\"amann \cite{KS}.
\\A \emph{causal space}  $(X,\ll,\leq)$ is a set $X$ endowed with a preorder $\leq$ and a transitive relation $\ll$ contained in $\leq$.
\\We write $x<y$ if $x\leq y$ and  $x\neq y$.  When $x\ll y$  (resp. $x\leq y$),  we say that $x$ and $y$ are \emph{time-like} (resp. \emph{causally}) related.

We define the \emph{chronological} (resp. \emph{causal}) future of a subset $E\subset X$ as
\begin{align*}
I^{+}(E)&:=\{y\in X\,:\, \exists x\in E,\, x\ll y\} \\
J^{+}(E)&:=\{y\in X\,:\, \exists x\in E,\, x\leq y\}
\end{align*}
respectively. Analogously, we define $I^{-}(E)$ (resp. $J^{-}(E)$) the   \emph{chronological} (resp. \emph{causal}) past of $E$. In case $E=\{x\}$ is a singleton, with a slight abuse of notation, we will write $I^{\pm}(x)$ (resp. $J^{\pm}(x)$) instead of  $I^{\pm}(\{x\})$ (resp. $J^{\pm}(\{x\})$).

%
Recall that a metric space $(X,\sfd)$ is said to be \emph{proper} if closed and bounded subsets are compact.

\begin{definition}[Lorentzian pre-length space $(X,\sfd, \ll, \leq, \tau)$]
A \emph{Lorentzian pre-length space} $(X,\sfd, \ll, \leq, \tau)$ is a  casual space $(X,\ll,\leq)$ additionally equipped with a proper metric $\sfd$  and a lower semicontinuous function $\tau: X\times X\to [0,\infty]$,  called \emph{time-separation function}, satisfying
\begin{equation}\label{eq:deftau}
\begin{split}
\tau(x,y)+\tau(y,z)\leq \tau (x,z) &\quad\forall x\leq y\leq z \quad \text{reverse triangle inequality} \\
\tau(x,y)=0, \; \text{if } x\not\leq y, & \quad  \tau(x,y)>0 \Leftrightarrow x\ll y.
\end{split}
\end{equation}
\end{definition}
\noindent
\\We will consider $X$ endowed with the metric topology induced by $\sfd$.

We say that $X$ is (resp. \emph{locally}) \emph{causally closed} if $\{x\leq y\}\subset X\times X$ is a closed subset (resp. if every point $x\in X$ has neighbourhood $U$ such that $\{x\leq y\}\cap \bar{U}\times \bar{U}$ is closed in $\bar{U}\times \bar{U}$).


Throughout this appendix, $I\subset \R$ will denote an arbitrary interval.  A non-constant curve $\gamma:I\to X$ is called (future-directed) \emph{time-like} (resp. \emph{causal}) if $\gamma$ is locally Lipschitz continuous (with respect to $\sfd$) and if for all $t_{1}, t_{2}\in I$, with $t_{1}<t_{2}$, it holds $\gamma_{t_{1}}\ll \gamma_{t_{2}}$ (resp. $\gamma_{t_{1}}\leq \gamma_{t_{2}}$).


The length of a causal curve is defined via the time separation function, in analogy to the theory of length  metric spaces: 
for $\gamma:[a,b]\to X$ future-directed causal we set 
\begin{equation*}
{\rm L}_{\tau}(\gamma):=\inf\left\{ \sum_{i=0}^{N-1} \tau(\gamma_{t_{i}}, \gamma_{t_{i+1}})  \,:\, a=t_{0}<t_{1}<\ldots<t_{N}=b, \; N\in \N \right\}.
\end{equation*}

A future-directed causal curve $\gamma:[a,b]\to X$ is \emph{maximal} (also called \emph{geodesic}) if  it realises the
time separation, i.e. if ${\rm L}_{\tau}(\gamma)=\tau(\gamma_{a}, \gamma_{b})$.
\\A Lorentzian pre-length space $(X,\sfd, \ll, \leq,\tau)$ is called 
\begin{itemize}
\item \emph{non-totally imprisoning} if for every compact set $K\Subset X$ there is constant $C>0$ such that the $\sfd$-arc-length of all causal curves contained in $K$ is bounded by $C$;
\item  \emph{globally hyperbolic} if it is non-totally imprisoning and  for every $x,y\in X$ the set $J^{+}(x)\cap J^{-}(y)$ is compact in $X$;
\item \emph{$\mathcal K$-globally hyperbolic}  if it is non-totally imprisoning and  for every $K_{1}, K_{2}\Subset X$ compact subsets,  the set $J^{+}(K_{1})\cap J^{-}(K_{2})$ is compact in $X$;
\item   \emph{geodesic} if for all $x,y\in X$ with $x \leq y$ there is a future-directed causal curve $\gamma$ from $x$ to $y$ with $\tau(x,y)= L_{\tau}(\gamma)$, i.e. a (maximizing) geodesic from $x$ to $y$.
\end{itemize}
For a globally hyperbolic  Lorentzian geodesic (actually length would suffice) space $(X,\sfd, \ll, \leq,\tau)$, the time-separation function $\tau$ is finite and continuous, see \cite[Theorem 3.28]{KS}. Moreover, any globally hyperbolic  Lorentzian length space (for the definition of Lorenzian length space see \cite[Definition 3.22]{KS}, we omit it for brevity since we will not use it) is geodesic  \cite[Theorem 3.30]{KS}.
\\
A \emph{measured Lorentzian pre-length space} $(X,\sfd, \mm, \ll, \leq, \tau)$ is a Lorentzian pre-length space endowed with a Radon non-negative measure $\mm$ with $\supp \, \mm=X$.

\subsection*{Examples entering the class of Lorentzian synthetic  spaces}
In this short section we briefly recall some notable examples entering the aforementioned framework of  Lorentzian synthetic spaces.
\medskip

\noindent 
\textit{Spacetimes with a continuous Lorentzian metric.} Let $M$ be a smooth manifold endowed with a continuous Lorentzian metric $g$. Assume that $(M,g)$ is time-oriented, i.e. there is a continuous time-like
vector field.  Observe that, for $C^0$-metrics, the natural class of differentiability of the underlying manifolds is $C^{1}$; now, $C^{1}$ manifolds always
admit a $C^{\infty}$ subatlas, and one can pick such sub-atlas whenever convenient.  
 
A causal (respectively time-like) curve in $M$ is by definition a locally Lipschitz
curve whose tangent vector is causal (resp. time-like) almost everywhere. One could also start from absolutely continuous (AC for short) curves, but since causal  AC
curves always admit a Lipschitz re-parametrisation  \cite[Sec. 2.1, Rem. 2.3]{Min}, we do not loose in generality with the above convention.  
Set  ${\rm L}_{g}(\gamma)$ to be the $g$-length of a causal curve $\gamma:I\to M$, i.e. 
$${\rm L}_{g}(\gamma):=\int_{I}\sqrt{-g(\dot \gamma, \dot \gamma)}\, dt.$$
The time separation function $\tau:M\times M\to [0,\infty]$ is then defined in the classical way, i.e. 
$$\tau(x,y):=\sup\{{\rm L}_{g}(\gamma): \gamma \text{ is future directed causal from $x$ to $y$}\}, \quad  \text{if $x\leq y$},$$
and $\tau(x,y)=0$ otherwise.  The reverse triangle inequality \eqref{eq:deftau} follows directly from the definition. Moreover, every  ${\rm L}_{g}$-maximal curve $\gamma$ is also ${\rm L_{\tau}}$-maximal, and  ${\rm L}_{g}(\gamma) = {\rm L}_{\tau}(\gamma)$ (see for instance \cite[Remark 5.1]{KS}). In order to have an underlying metric structure, we also fix a complete Riemannian
metric $h$ on $M$ and denote by $\sfd^{h}$ the associated distance function.

For a  spacetime with a Lorentzian $C^{0}$-metric: 
\begin{itemize}
\item Global hyperbolicity implies causal closedness and  $\cK$-global hyperbolicity  \cite[Proposition 3.3 and Corollary 3.4]{SaC0}.
\item Recall that a Cauchy hypersurface is a subset  which is met exactly  once by every inextendible causal curve. Every Cauchy hypersurface is a closed acasual topological hypersurface \cite[Proposition 5.2]{SaC0}.   Global hyperbolicity is equivalent to the existence of a Cauchy hypersurface \cite[Theorem 5.7, Theorem 5.9]{SaC0}.
\item By \cite[Proposition 5.8]{KS},   if $g$ is a causally plain (or, more strongly, locally Lipschitz) Lorentzian $C^{0}$-metric on $M$ then the associated synthetic structure is a pre-length Lorentzian space.
More strongly, from   \cite[Theorem 3.30 and Theorem 5.12]{KS}, if $g$ is a   globally hyperbolic and causally plain Lorentzian $C^{0}$-metric on $M$ then the associated synthetic structure is a  causally closed,  $\cK$-globally hyperbolic Lorentzian geodesic space.
\end{itemize}
\medskip

\noindent 
\textit{Closed cone structures}.  Closed cone structures provide
a rich source of examples of Lorentzian pre-length and length spaces, which can be seen as the synthetic-Lorentzian analogue of Finsler manifolds (see \cite[Section 5.2]{KS} for more details). See Minguzzi's review paper \cite{Min} for a comprehensive analysis of causality theory in the framework of closed cone structures, including embedding and singularity theorems.
\medskip

\noindent \textit{Some examples towards  quantum gravity}. The framework of Lorentzian pre-length spaces allows to handle situations
where one may not have the structure of a manifold or a Lorentz(-Finsler) metric.  A remarkable example of such a situation is given by certain approaches to quantum gravity, see for instance \cite{MP} where it is shown that it is possible to reconstruct a globally hyperbolic spacetime and the causality relation from  a countable dense set of events,  in a purely order theoretic manner. 
\\Two approaches to quantum gravity,  particularly linked to Lorentzian pre-length spaces,  are the theory of \emph{causal Fermion systems} \cite{FinsterPrimer, FinsterBook} and the  \emph{theory of causal sets} \cite{BLMS}. The basic idea in both cases is that the structure of spacetime needs to be adjusted
on a microscopic scale to include quantum effects. This leads to non-smoothness of
the underlying geometry, and  the classical  structure of Lorentzian manifold emerges only in the macroscopic regime.   For the connection to the theory of Lorentzian (pre-)length spaces we refer to \cite[Section 5.3]{KS}, \cite[Section 5.1]{FinsterPrimer}.

\subsection{The $p$-Lorentz-Wasserstein distance on a Lorentzian pre-length space.}
We denote with $\mathcal P(X)$ (resp.  $\mathcal{P}_{c}(X)$, or  $\mathcal{P}_{ac}(X)$)  the  collection of all Borel probability
measures (resp. with compact support, or  absolutely continuous with respect to $\mm$).
\\Given a probability measure $\mu \in \mathcal{P}(X)$ we define 
its relative entropy by
\begin{equation*}
\Ent(\mu|\mm) = \int_{X} \rho \log(\rho) \, d\mm,
\end{equation*}
if $\mu = \rho \, \mm\in \mathcal{P}_{ac}(X)$ and $(\rho\log(\rho))_{+}$ is  $\mm$-integrable. 
Otherwise we set $\Ent(\mu|\mm) = +\infty$. We set ${\rm Dom}(\Ent(\cdot|\mm))$ the finiteness domain of $\Ent(\cdot|\mm)$.
\\

We next recall some basics of optimal transport theory in Lorentzian pre-length spaces, following the approach and the notation in \cite{CaMoLor}.
\\Given $\mu,\nu\in \mathcal{P}(X)$, denote
\begin{align*}
 \Pi(\mu,\nu)&:=\{\pi\in  \mathcal{P}(X\times X) \,:\, (P_{1})_{\sharp}\pi=\mu, \, (P_{2})_{\sharp}\pi=\nu \}, \nonumber \\
 \Pi_{\leq}(\mu,\nu)&:=\{\pi\in  \Pi(\mu,\nu) \,:\,  \pi(X^{2}_{\leq})=1 \}, 
\nonumber \\
  \Pi_{\ll}(\mu,\nu)&:=\{\pi\in  \Pi(\mu,\nu) \,:\,  \pi(X^{2}_{\ll})=1 \} 
\end{align*}
where $X^{2}_{\leq}:=\{(x,y) \in X^{2}\,:\, x\leq y \}$ and   $X^{2}_{\ll}:=\{(x,y) \in X^{2}\,:\, x\ll y \}$.


\begin{definition}[The $p$-Lorentz-Wasserstein distance]\label{def:Wp}
Let  $(X,\sfd, \ll, \leq, \tau)$ be a Lorentzian pre-length space and let $p\in (0,1]$. Given $\mu,\nu\in \mathcal{P}(X)$, the \emph{$p$-Lorentz-Wasserstein distance} is defined by
\begin{equation}\label{eq:defWp}
\ell_{p}(\mu,\nu):= \sup_{\pi \in \Pi_{\leq}(\mu,\nu)} \left(  \int_{X\times X}  \tau(x,y)^{p} \, d\pi(x,y)\right)^{1/p}.
\end{equation}
When $\Pi_{\leq}(\mu,\nu)=\emptyset$ we set $\ell_{p}(\mu,\nu):=-\infty$.
\end{definition}
Note that Definition \ref{def:Wp}  extends to Lorentzian pre-length spaces the corresponding notion given in the smooth Lorentzian setting in \cite{EM17}  (see also \cite{McCann18}, and \cite{suhr} for $p=1$); when $\Pi_{\leq}(\mu,\nu)=\emptyset$ we adopt the convention of McCann  \cite{McCann18} (note that \cite{EM17} set $\ell_{p}(\mu,\nu)=0$ in this case).
\\

A key property of $\ell_{p}$ is the reverse triangle inequality. 
This was proved in the smooth Lorentzian setting by 
Eckstein-Miller \cite[Theorem 13]{EM17} and in the present synthetic setting in \cite[Proposition 2.5]{CaMoLor}.
Such a property is the natural Lorentzian 
analogue of the fact that the Kantorovich-Rubinstein-Wasserstein distances 
$W_{p}$, $p\geq 1$, in the metric space setting satisfy 
the usual triangle inequality (see for instance \cite[Section 6]{Vil}).

\begin{proposition}[$\ell_{p}$ satisfies the reverse triangle inequality]\label{prop:RTIellp}
Let  $(X,\sfd, \ll, \leq, \tau)$ be a Lorentzian pre-length space and let $p\in (0,1]$. Then $\ell_{p}$ satisfies the reverse triangle inequality:
\begin{equation}\label{eq:RTInellq}
\ell_{p}(\mu_{0},\mu_{1})+ \ell_{p}(\mu_{1},\mu_{2})
\leq \ell_{p}(\mu_{0}, \mu_{2}), 
\quad \forall \mu_{0},\mu_{1},\mu_{2}\in \mathcal{P}(X),
\end{equation}
where we adopt the convention that $\infty-\infty=-\infty$ to interpret the left hand side of \eqref{eq:RTInellq}.
\end{proposition}

A coupling  $\pi\in  \Pi_{\leq}(\mu,\nu)$ maximising in \eqref{eq:defWp} is called \emph{$\ell_{p}$-optimal}. The set of \emph{$\ell_{p}$-optimal} couplings from $\mu$ to $\nu$ is denoted by $  \Pi_{\leq}^{p\text{-opt}}(\mu,\nu)$. We also set 
$ \Pi_{\ll}^{p\text{-opt}}(\mu,\nu):=\Pi_{\ll}(\mu,\nu)\cap  \Pi_{\leq}^{p\text{-opt}}(\mu,\nu)$.

We say that $(\mu_{s})_{s\in [0,1]}\subset \mathcal{P}(X)$ is an $\ell_{p}$-geodesic if and only if 
\begin{equation}\label{eq:defgeodellq}
\ell_{p}(\mu_{s}, \mu_{t})=(t-s)\,  \ell_{p}(\mu_{0}, \mu_{1}) \in (0,\infty), \quad \text{for all }0\leq s< t\leq 1.
\end{equation}
Note that, with the convention above, $\ell_{p}$-geodesics are implicitly future-directed and time-like.

The next lemma follows by the classical theory of optimal transport, since the assumptions allow to localise the argument on $\supp\, \mu_{0}\times \supp \, \mu_{1}\subset X^{2}_{\leq}$ where the optimal transport problem becomes standard.
\begin{lemma}\label{lem:OptCycMon}
Assume $\supp\, \mu \times \supp \, \nu \subset X^{2}_{\leq}$. Then
\begin{enumerate}
\item Every $\ell_{p}$-optimal coupling $\pi\in \Pi_{\leq}^{p\text{-opt}}(\mu, \nu)$ is also $\tau^{p}$-cyclically monotone; 
\item If $\pi\in \Pi_{\leq}(\mu, \nu)$ is  $\tau^{p}$-cyclically monotone then  $\pi\in \Pi_{\leq}^{p\text{-opt}}(\mu, \nu)$  is $\ell_{p}$-optimal. 
\end{enumerate}
\end{lemma}

\subsection{Synthetic time-like Ricci upper bounds.}

Inspired by Sturm's approach to Riemannian/metric Ricci curvature upper bounds \cite{StUB} and by Theorem \ref{thm:RiccileqT} (see also Remark \ref{rem:RiccileqT}), we propose the following synthetic notion of time-like Ricci curvature bounded above. We denote the metric ball in $(X,\sfd)$ with center $x$ and radius $r$ by $B^{\sfd}_{r}(x)$.

 \begin{definition}[Synthetic time-like Ricci curvature bounded above]\label{def:RUB}
Fix $p\in (0,1)$, $K\in \R$. We say that  a  measured  Lorentzian  pre-length space $(X,\sfd,\mm, \ll, \leq, \tau)$ has time-like Ricci curvature bounded above by $K$ in a synthetic sense if there exists $r_{0}>0$ and  a function $\omega:[0,r_{0})\to [0,\infty)$ with $\lim_{r\downarrow 0} \omega(r)=0$ such that for every $r\in [0,r_{0})$ the following holds. 
\begin{itemize}
\item For every $x,y\in X$ with $\sfd(x,y)=r>0$ and such that $B^{\sfd}_{r^4}(x)\times B^{\sfd}_{r^2}(y)\subset X^{2}_{\ll}$,
\item for every $\mu_{0}\in {\rm Dom}(\Ent(\cdot|\mm)$ with $\supp\, \mu_{0}\subset B^{\sfd}_{r^4}(x)$,
\end{itemize}
there exists an $\ell_{p}$-geodesic $(\mu_{t})_{t\in [-1,1]}$ satisfying
\begin{itemize}
\item  $\supp\, \mu_{1}\subset B^{\sfd}_{r^2}(y)$, 
\item $\supp\, \mu_{-1}\times \supp \, \mu_{1}\subset X^{2}_{\leq},$
\item  $\bigcup_{t\in [-1, 1]} \supp \, \mu_{t} \subset B_{10 r_{0}}^{\sfd} (x) $,
\item $
\Ent(\mu_{-1}|\mm)-2 \Ent(\mu_{0}|\mm)+\Ent(\mu_{1}|\mm) \leq (K+\omega(r)) \, r^{2}.
$
\end{itemize}
\end{definition}

A key property of the above notion of time-like Ricci bounded above is the stability under weak convergence of measured  Lorentzian  pre-length spaces. The stability of Riemannian/metric Ricci upper bounds via optimal transport was proved by Sturm \cite{StUB}. The notion of convergence we use below is a slight reinforcement (we ask that the immersion maps are isometries with respect to the metric structures instead of merely topological embedding maps) of the weak convergence used in \cite{CaMoLor} to show stability of synthetic time-like Ricci \emph{lower} bounds; in any case it is a  natural adaptation to the Lorentzian setting of the mGH convergence used for metric measure spaces (see for instance \cite[Section 3]{GMS2013} for an overview of equivalent formulations of convergence for metric measure spaces).

\begin{theorem}[Stability of time-like Ricci curvature upper bounds]\label{thm:stabUB}

Let  $\{(X_{j},\sfd_{j}, \mm_{j}, \ll_{j}, \leq_{j}, \tau_{j})\}_{j\in \N\cup\{\infty\}}$ be a sequence of  measured Lorentzian geodesic spaces  satisfying the following properties :
\begin{enumerate}
\item There exists a  locally causally closed, globally hyperbolic  Lorentzian geodesic space $(X, \sfd,  \ll, \leq,  \tau)$ such that each $(X_{j},\sfd_{j}, \mm_{j}, \ll_{j}, \leq_{j}, \tau_{j})$, $j\in \N\cup\{\infty\}$, is 
 isomorphically embedded in it, i.e. there exist inclusion  maps $\iota_{j}: X_{j}\hookrightarrow X$  such that  for every $x^{1}_{j}, x^{2}_{j}\in X_{j}$,  for every $j\in \N\cup\{\infty\}$, the following holds:
\begin{itemize}
\item ${\sfd}  (\iota_{j}(x^{1}_{j}), \iota_{j}(x^{2}_{j}))= \sfd_{j} (x^{1}_{j}, x^{2}_{j})$;
\item  $x^{1}_{j} \leq_{j} x^{2}_{j}$ 
if and only if $\iota_{j}(x^{1}_{j}) \leq  \iota_{j}(x^{2}_{j})$; 
\item $\tau  (\iota_{j}(x^{1}_{j}), \iota_{j}(x^{2}_{j}))= \tau_{j} (x^{1}_{j}, x^{2}_{j})$.
\end{itemize}

\item The measures $(\iota_{j})_{\sharp} \mm_{j}$  converge to $(\iota_{\infty})_{\sharp} \mm_{\infty}$ weakly in duality with $C_{c}( X)$ in $ X$, i.e.
\begin{equation}\label{eq:weakconv}
\int \varphi\; d(\iota_{j})_{\sharp} \mm_{j} \to \int \varphi \; d(\iota_{\infty})_{\sharp} \mm_{\infty} \quad \forall \varphi \in C_{c}( X),
\end{equation} 
where  $C_{c}( X)$ denotes the set of continuous functions with compact support.

\item Volume non-collapsing: there exists a function $v:(0,\infty)\to (0,\infty)$ such that for every $x_{j}\in X_{j}$ it holds $\mm_{j}(B^{\sfd_{j}}_{r}(x_{j})) \geq v(r)>0$.
\item There exists a function $\omega:[0,\infty)\to [0,\infty)$ with $\lim_{r\downarrow 0} \omega(r)=0$ and there exist $p\in (0,1), K\in \R$ such that  $(X_{j},\sfd_{j}, \mm_{j}, \ll_{j}, \leq_{j}, \tau_{j})$ has time-like Ricci curvature bounded above by $K$ with respect to $p\in (0,1)$, $r_{0}>0$ and with remainder function $\omega$ in the synthetic sense of Definition \ref{def:RUB}.
\end{enumerate}
Then also the limit space 
$(X_{\infty},\sfd_{\infty}, \mm_{\infty}, \ll_{\infty}, \leq_{\infty}, \tau_{\infty})$  
 has time-like Ricci curvature bounded above by $K$ with respect to $p\in (0,1), r_{0}+1$ and with remainder function $\omega$ in the synthetic sense of Definition \ref{def:RUB}.

\end{theorem}

\begin{proof}
For simplicity of notation, we will identify $X_{j}$ with its isomorphic image $\iota_{j}(X_{j})\subset  X$ and the measure $\mm_{j}$ with $(\iota_{j})_{\sharp} \mm_{j}$, for each $j\in \N\cup \{\infty\}$. 

We will write $B^{\sfd_{j}}_{r}(x)$ for the ball in $(X_{j}, \sfd_{j})$ centred at $x\in X_{j}$. Notice that $B^{\sfd_{j}}_{r}(x)= B^{{\sfd}}_{r}(x)\cap X_{j}$.

Fix  arbitrary $x_{\infty},y_{\infty}\in X_{\infty}$ with $\sfd_{\infty}(x_{\infty},y_{\infty})=r>0$ and $B^{\sfd_{\infty}}_{r^{4}} (x_{\infty}) \times B^{\sfd_{\infty}}_{r^{2}} (y_{\infty}) \subset (X_{\infty}^{2})_{\ll}$. Since $(X^{2}_{\infty})_{\ll}\subset X_{\infty}\times X_{\infty}$ is an open subset, there exist $\varepsilon>0$ such that
\begin{equation}\label{eq:BrepsInf}
B^{\sfd_{\infty}}_{(r+\varepsilon)^{4}} (x_{\infty}) \times B^{\sfd_{\infty}}_{(r+\varepsilon)^{2}} (y_{\infty}) \subset (X^{2}_{\infty})_{\ll}\;.
\end{equation}
Fix  $\mu_{0}^{\infty} \in \rm{Dom}(\Ent(\cdot|\mm_{\infty}))$ with $\supp \, \mu_{0}^{\infty}\subset B^{\sfd_{\infty}}_{r^{4}} (x_{\infty})$.
\\

\textbf{Step 1}. We claim that, for $j\in \N$ sufficiently large, there exist $x_{j}, y_{j}\in X_{j}$ such that 
\begin{enumerate}
\item $x_{j}\to x_{\infty}$ and $y_{j}\to y_{\infty}$ in $( X, {\sfd})$;
\item $\sfd_{j}(x_{j}, y_{j})=:r_{j}\to r$;
\item $B^{\sfd_{j}}_{r_{j}^{4}} (x_{j}) \times B^{\sfd_{j}}_{r_{j}^{2}} (y_{j}) \subset (X_{j}^{2})_{\ll}$ .
\end{enumerate}
Since by assumption $\mm_{j} \rightharpoonup \mm_{\infty}$ weakly as measures in $({X}, {\sfd})$, by the lower-semicontinuity over open subset we have that for every $r>0$ it holds 
$$
0<\mm_{\infty}(B^{\sfd_{\infty}}_{r}(x_{\infty}))= \mm_{\infty}(B^{{\sfd}}_{r}(x_{\infty}))\leq \liminf_{j\to \infty} \mm_{j}(B^{{\sfd}}_{r}(x_{\infty})).
$$
In particular, for every $r>0$ there exists $J(r)>0$ such that $B^{{\sfd}}_{r}(x_{\infty}) \cap X_{j}\neq \emptyset$ for all $j\geq J(r)$.  By a diagonal argument we obtain that there exist $x_{j}\in X_{j}$ with $x_{j}\to x_{\infty}$ in $( X, {\sfd})$. The analogous argument gives a sequence $y_{j}\in X_{j}$ with $y_{j}\to y_{\infty}$.
\\Using that the inclusion maps $\iota_{j}$ are isometric, we also get
$$
r_{j}:= \sfd_{j}(x_{j}, y_{j})=  {\sfd}(x_{j}, y_{j})\to  {\sfd}(x_{\infty}, y_{\infty})=r.
$$ 
We are left to show the third claim.
For every $j\in \N$, let  $x_{j}'\in B^{\sfd_{j}}_{r_{j}^{4}} (x_{j})$ and  $y_{j}'\in B^{\sfd_{j}}_{r_{j}^{2}} (y_{j})$ be arbitrary.
\\Combining the volume non-collapsing assumption with the weak convergence $\mm_{j}  \rightharpoonup \mm_{\infty}$ and the properness of the metrics, we obtain that there exist 
$x_{\infty}'\in B^{ \sfd_{\infty}}_{(r+\varepsilon)^{4}} (x_{\infty})$ and  $y_{\infty}'\in B^{\sfd_{\infty}}_{(r+\varepsilon)^{2}} (y_{\infty})$ such that 
$x_{j}'\to x_{\infty}'$ and $y_{j}'\to y_{\infty}'$ up to a subsequence.
 \\Recalling \eqref{eq:BrepsInf} and that $\iota_{\infty}$ is an isomorphic embedding, we infer that $(x'_{\infty}, y'_{\infty})\in {X}^{2}_{\ll}$. Since ${X}^{2}_{\ll}\subset X^{2}$ is an open subset, we conclude that for $j$ sufficiently large it holds
 $$
 (x_{j}', y_{j}')\subset {X}^{2}_{\ll} \cap X_{j}^{2}= (X_{j}^{2})_{\ll}
 $$
 as desired.
 \\
 
 \textbf{Step 2}. We claim that, for every $j\in \N$, there exists $\mu_{0}^{j}\in \rm{Dom}(\Ent(\ \cdot\ |\mm_{j}))$ with $\supp \, \mu_{0}^{j}\subset B^{\sfd_{j}}_{r_{j}^{4}} (x_{j})$ such that $\mu_{0}^{j}\to \mu_{0}^{\infty}$ narrowly and
 \begin{equation}\label{eq:claimStep2}
 \Ent(\mu_{0}^{\infty}|\mm_{\infty})\geq \limsup_{j\to \infty} \Ent(\mu_{0}^{j}|\mm_{j}).
 \end{equation}
 
Since $\mm_{j} \rightharpoonup \mm_{\infty}$ weakly,  there exists $R\in (2r^{4}, 3 r^{4})$ such that 
\begin{equation}\label{eqzjconv}
z_{j}^{-1}:=\mm_{j}(B^{ \sfd}_{R}(x_{\infty})) \rightarrow \mm_{\infty}(B^{ \sfd}_{R}(x_{\infty}))=:z_{\infty}^{-1}.
\end{equation}
For $j\in \N\cup \{\infty\}$,  denote  $\tilde{\mm}_{j}:= z_{j} \, \mm_{j}\llcorner B^{ \sfd}_{R}(x_{\infty})$ and observe that $\tilde{\mm}_{j}  \rightharpoonup \tilde{\mm}_{\infty}$ weakly and thus (since now the supports are uniformly bounded in ${X}$) in $W_{q}^{( X,  \sfd)}$ for some (or equivalently every) $q\in [1,\infty)$. In particular $\tilde{\mm}_{j}\to \tilde{\mm}_{\infty}$ in $W_{2}^{( X,  \sfd)}$. Let 
\begin{equation}\label{eq:defggammaj}
\text{${\eta}_{j}\in \Pi(\tilde{\mm}_{\infty}, \tilde{\mm}_{j})$ be a  $W_{2}^{( X,  \sfd)}$-optimal coupling.} 
\end{equation}

 Let us first consider the case $\mu_{0}^{\infty}=\rho\, \tilde{\mm}_{\infty}$ with $\rho \in L^{\infty}(\tilde{\mm}_{\infty})$.
 \\
Define $\eta_j'\in {\mathcal P}( X)$ as $d\eta_j'(x,y):=\rho(x) \, d\eta_j(x,y)$ and $\tilde{\mu}_{0}^{j}:=(p_{2})_\sharp \eta_j'\in {\mathcal P}(X_{j})$.
By construction, $\eta_j'\ll\eta_j$, hence $\tilde{\mu}_{0}^{j} \ll (p_{2})_\sharp\eta_j=\tilde\mm_j$.
Let $\tilde{\mu}_{0}^{j}=\tilde{\rho}_{0}^{j} \tilde\mm_j$. It is readily checked from the definition that it holds $\tilde{\rho}_{0}^{j}(y)=\int \rho(x)\, d(\eta_j)_{y}(x)$, where $\{(\eta_{j})_y\}$ is the disintegration of $\eta_j$ w.r.t. the projection on the second marginal. By Jensen's inequality applied to the convex function $u(s):=s\log(s)$, we have:
\begin{equation}\label{eq:Enttildemuileq}
\begin{split}
\Ent(\tilde{\mu}_{0}^{j}|\tilde{\mm}_{j})&=\int u(\tilde{\rho}_{0}^{j})\, d\tilde\mm_j= \int u\left(\int \rho(x)\, d(\eta_j)_{y}(x)\right)\, d\tilde\mm_j(y)\\
&\leq \int \int u(\rho(x))\,d(\eta_j)_{y}(x)\,d\tilde\mm_j(y)=\int \int u(\rho(x))\,d\eta_{j}(x,y)\\
&=\int u(\rho)\,d(p_{1})_\sharp\eta_{j}=\int u(\rho)\,d\tilde\mm_\infty=\Ent(\mu_{0}^{\infty}|\tilde{\mm}_{\infty}).
\end{split}
\end{equation}
Since by construction we have $\gamma_j'\in\Pi(\mu_{0}^{\infty}, \tilde{\mu}_{0}^{j})$, it holds
\[
\begin{split}
W_2^2(\mu_{0}^{\infty}, \tilde{\mu}_{0}^{j})&\leq \int\sfd^2(x,y)\,d\eta_j'(x,y)= \int\rho(x)\, \sfd^2(x,y)\,d\eta_j(x,y)\\
&\leq  \| \rho\|_{L^{\infty}(\tilde{\mm}_{\infty})} \,  W_2^2(\tilde\mm_\infty,\tilde\mm_j),
\end{split}
\]
and therefore $W_2(\mu_{0}^{\infty}, \tilde{\mu}_{0}^{j})\to 0$.  In particular, $\tilde{\mu}_{0}^{j}\to \mu_{0}^{\infty}$ narrowly in $( X,  \sfd)$.
\\ Using that $x_{j}\to x_{\infty}, r_{j}\to r$ and  $\supp \, \mu_{0}^{\infty}\subset B^{\sfd_{\infty}}_{r^{4}} (x_{\infty})$,  it follows that $c_{j}^{-1}:=  \tilde{\mu}_{0}^{j} \big(B^{\sfd_{j}}_{(r_{j}')^{4}}(x_{j}) \big)\to 1$, where we set $r_{j}':=r_{j}-\frac{1}{j}$. 
Define
 ${\mu}_{0}^{j}:=c_j\, \tilde{\mu}_{0}^{j}\llcorner B^{\sfd_{j}}_{(r_{j}')^{4}}(x_{j})$.
 \\Clearly,  $\mu_{0}^{j}\in \rm{Dom}(\Ent(\ \cdot\ |\mm_{j}))$,  $\supp \, \mu_{0}^{j}\subset B^{\sfd_{j}}_{r_{j}^{4}} (x_{j})$,  $W_2(\mu_{0}^{\infty}, {\mu}_{0}^{j})\to 0$ and 
 \begin{equation}\label{eq:Entmujmujtilde}
\lim_{j\to \infty} \big|  \Ent({\mu}_{0}^{j}|\tilde{\mm}_{j})- \Ent(\tilde{\mu}_{0}^{j}|\tilde{\mm}_{j}) \big|=0.
 \end{equation}
 The combination of \eqref{eqzjconv}, \eqref{eq:Enttildemuileq} and \eqref{eq:Entmujmujtilde} gives the desired \eqref{eq:claimStep2}.

If $\rho$ is not bounded, for $k\in\N$ define $\rho^k:=C_k\min\{\rho,k\}$, $C_k$ being such that  $\mu^k_{\infty}:=\rho^k\, \tilde\mm_\infty\in {\mathcal P} (X_{\infty})$. Clearly, it holds
\[
\limsup_{k\to\infty}\Ent(\mu^k_{\infty}|\mm_{\infty})\leq \Ent(\mu_\infty|\mm_{\infty}),\qquad \lim_{k\to\infty}W_2(\mu^k_{\infty},\mu_\infty)=0.
\]
Then apply the previous procedure to $\mu^k_{\infty}$ and conclude with a diagonal argument.
\\
 
  \textbf{Step 3}. Conclusion.
\\By assumption, for  every $j\in \N$  the space   $(X_{j},\sfd_{j}, \mm_{j}, \ll_{j}, \leq_{j}, \tau_{j})$ has time-like Ricci curvature bounded above by $K$ with respect to $p\in (0,1), r_{0}>0$ and with remainder function $\omega$. Thus there exists an $\ell_{p}$-geodesic $(\mu_{t}^{j})_{t\in [-1,1]}$ with  $\supp\, \mu_{1}^{j}\subset B^{\sfd_{j}}_{r_{j}^2}(y_{j})$,   $\bigcup_{t\in [-1, 1]} \supp \, \mu_{t}^{j} \subset B_{10 r_{0}}^{\sfd_{j}} (x_{j}) $ such that
\begin{equation}\label{eq:RUBKj}
\Ent(\mu_{-1}|\mm_{j})-2 \Ent(\mu_{0}|\mm_{j})+\Ent(\mu_{1}|\mm_{j}) \leq (K+\omega(r)) \, r^{2}.
\end{equation}
Since by construction $x_{j}\to x_{\infty}$, there exists $J>0$ such that 
\begin{equation}\label{eq:suppmujt}  
\bigcup_{j\geq J} \bigcup_{t\in [-1, 1]} \supp \, \mu_{t}^{j} \subset B_{11 r_{0}}^{{\sfd}} (x_{\infty}).
\end{equation}
\\Let $\eta_{j}\in {\mathcal P} (C[-1,1], ({X},  \sfd))$ be the dynamical plan associated to the geodesic $(\mu_{t}^{j})_{t\in [-1,1]}$, i.e. such that $(\ee_{t})_{\sharp} \eta_{j}=\mu^{j}_{t}$ for every $t\in [0,1]$, $j\in \N$.
From the non-total imprisoning property of $ X$, we infer that there exists $C>0$ such that
$$
\sup\{ {\rm L}_{{\sfd}}(\gamma) : \gamma\in \supp \, \eta_{j}, \, j\in \N  \} \leq C <\infty.
$$
From \cite[Theorem 4]{Lisini}, we infer that $(\mu^{j}_{t})_{t\in [-1, 1]}$ is an absolutely continuous curve in the $W_{1}$-Kantorovich Wasserstein space $(\mathcal{P}( X), W_{1})$ w.r.t. $ \sfd$, with bounded length:
$$
 {\rm L}_{W_{1}} ( (\mu^{j}_{t})_{t\in [-1, 1]} ) \leq \int  {\rm L}_{{\sfd}}(\gamma) \, d\eta_{j}(\gamma) \leq C<\infty, \quad \forall j\in \N.
$$
By the metric Arzel\'a-Ascoli Theorem we deduce that there exists a limit continuous curve $(\mu^{\infty}_{t})_{t\in [-1,1]} \subset \mathcal{P}(X_{\infty}) \cap \mathcal{P}({X}, W_{1}^{({X},{\sfd})})$ such that (up to a sub-sequence)
$ W_{1}^{({X},{\sfd})}\left(\mu^{j}_{t}, \mu^{\infty}_{t} \right)\to 0 $ and thus $\mu^{j}_{t} \to \mu^{\infty}_{t}$ narrowly in $ X$, as $j\to \infty$ for every $t\in [0,1]$. 
From \eqref{eq:suppmujt}  we have
$$
\bigcup_{t\in [-1, 1]} \supp \, \mu_{t}^{\infty} \subset {B}_{11 r_{0}}^{{\sfd}} (x_{\infty})\subset B_{10 (r_{0}+1)}^{{\sfd}} (x_{\infty}).
$$
By assumption, we also know that for every $j\in \N$ it holds 
\begin{equation}\label{eq:suppmujXleq}
\supp\, \mu_{-1}^{j}\times \supp \, \mu_{1}^{j}\subset (X_{j}^{2})_{\leq}=  {X}^{2}_{\leq} \cap X_{j}^{2}, \quad \forall j\in \N.
\end{equation}
Since by assumption ${X}$ is locally causally closed, we infer that 
\begin{equation}\label{eq:suppmuinftyXleq}
\supp\, \mu_{-1}^{\infty}\times \supp \, \mu_{1}^{\infty}\subset (X_{\infty}^{2})_{\leq}=  X^{2}_{\leq} \cap X_{\infty}^{2}.
\end{equation}
For every $j\in \N$, let $\pi_{j}\in \Pi_{\leq}^{p\text{-opt}}(\mu_{-1}^{j}, \mu_{1}^{j})$. Thanks to Lemma \ref{lem:OptCycMon} we know that $\pi_{j}$ is ${\tau}^{p}$-cyclically monotone. 
\\From \eqref{eq:suppmujt}, we have that  $\supp\,  \pi_{j}\subset  B_{11 r_{0}}^{{\sfd}} (x_{\infty})^{2}$ for $j\geq J$ and thus (by Prokhorov Theorem) there exists $\pi_{\infty}\in \mathcal{P}( X\times  X)$ such that $\pi_{j}\to \pi_{\infty} $ narrowly.
 Using the continuity of the projection maps, it is readily seen that $\pi_{\infty}\in \Pi(\mu_{-1}^{\infty}, \mu_{1}^{\infty})$ and thus $\pi_{\infty}\in \Pi_{\leq}(\mu_{-1}^{\infty}, \mu_{1}^{\infty})$, thanks to \eqref{eq:suppmuinftyXleq}. 
 \\Since by global hyperbolicity $ \tau^{p}$ is continuous, the $ \tau^{p}$-cyclical monotonicity is preserved under narrow convergence. We infer that $\pi_{\infty}$ is  $ \tau^{p}$-cyclically monotone and thus $\ell_{p}$-optimal thanks to \eqref{eq:suppmuinftyXleq} and Lemma \ref{lem:OptCycMon}. Therefore:
 \begin{align}
 \ell_{p} (\mu_{-1}^{j}, \mu_{1}^{j})= & \left( \int {\tau}(x,y)^{p}\, d\pi_{j}(x,y) \right)^{1/p} \nonumber \\
 &\longrightarrow    \left(  \int {\tau}(x,y)^{p}\, d\pi_{\infty}(x,y)  \right)^{1/p} =  \ell_{p} (\mu_{-1}^{\infty}, \mu_{1}^{\infty}). \label{eq:ellpjtoinf}
 \end{align}
For intermediate $t\in (-1,1)$,  using that $(\mu^{j}_{t})_{t\in [-1,1]}$ is an $\ell_{p}$-geodesic, we have
\begin{equation}\label{eq:ellqnuinf}
\ell_{p}(\mu^{\infty}_{-1},\mu^{\infty}_{t})\geq  \lim_{j\to \infty}  \ell_{p}(\mu^{j}_{-1}, \mu^{j}_{t})=  (t+1) \;\lim_{j\to \infty}  \ell_{p}(\mu^{j}_{-1}, \mu^{j}_{1}) =  (t+1)\,\ell_{p}(\mu^{\infty}_{-1},\mu^{\infty}_{1}).
\end{equation}
By reverse triangle inequality \eqref{eq:RTInellq}, we get that the curve   $(\mu^{\infty}_{t})_{t\in [-1,1]}$ is an $\ell_{p}$-geodesic from $\mu^{\infty}_{-1}$ to $\mu^{\infty}_{1}$. 
\\The joint lower semicontinuity of $\Ent(\cdot|\cdot)$ under narrow convergence of probability measures (this is a well-known fact, for a proof see for instance \cite[Lemma 9.4.3]{AGSBook}) yields:
\begin{equation*}
\Ent(\mu^{\infty}_{t}|\tilde{\mm}_{\infty}) \leq \liminf_{j\in \N} \Ent(\mu^{j}_{t}|\tilde{\mm}_{j}), \quad \forall t\in [0,1],
\end{equation*}
and thus, using \eqref{eqzjconv}:
\begin{equation}\label{eq:Gammaliminfjt}
\Ent(\mu^{\infty}_{t}| {\mm}_{\infty}) \leq \liminf_{j\in \N} \Ent(\mu^{j}_{t}| {\mm}_{j}), \quad \forall t\in [0,1].
\end{equation}
The combination of \eqref{eq:claimStep2}, \eqref{eq:RUBKj} and  \eqref{eq:Gammaliminfjt} gives that 
\begin{equation*}
\Ent(\mu_{-1}^{\infty}|\mm_{\infty})-2 \Ent(\mu_{0}^{\infty}|\mm_{\infty})+\Ent(\mu_{1}^{\infty}|\mm_{\infty}) \leq (K+\omega(r)) \, r^{2}.
\end{equation*}
as desired.

\end{proof}

\subsection{Synthetic time-like Ricci lower bounds.}
In this subsection we briefly recall some basics of the synthetic theory of time-like Ricci lower bounds developed in \cite{CaMoLor}.

\begin{definition}[Time-like $p$-dualisable measures]\label{def:Tpdual}
 Let  $(X,\sfd, \ll, \leq, \tau)$ be a Lorentzian pre-length space and fix $p\in (0,1]$. We say that $(\mu,\nu)\in \mathcal{P}(X)^{2}$ is \emph{time-like $p$-dualisable (by $\pi\in \Pi_{\ll}(\mu,\nu)$)}  if 
 \begin{enumerate}
\item  $\ell_{p}(\mu,\nu)\in (0,\infty)$;
\item  $\pi\in  \Pi_{\leq}^{p\text{-opt}}(\mu,\nu)$ and $\pi(X^{2}_{\ll})=1$;
\item there exist measurable functions $a,b:X\to \R$, with $a\oplus b \in L^{1}(\mu\otimes \nu)$ such that  $\ell^{p}\leq a\oplus b$ on $\supp \, \mu \times  \supp \, \nu $.
\end{enumerate}
The pair $(\mu,\nu) \in (\mathcal{P}(X))^{2}$ is said to be \emph{strongly time-like $p$-dualisable} if in addition there exists a measurable $\ell^{p}$-cyclically monotone set $\Gamma\subset X^{2}_{\ll} \cap (\supp \, \mu \times \supp \,\nu)$ such that a coupling  $\pi\in \Pi_{\leq}(\mu,\nu)$ is $\ell_{p}$-optimal if  and only if $\pi$ is concentrated on $\Gamma$, i.e. $\pi(\Gamma)=1$.
 \end{definition} 

The motivation for considering (strongly) time-like $p$-dualisable pairs of measures is twofold: firstly the $p$-optimal coupling $d\pi(x,y)$ matches events described by $d\mu(x)$  with events described by $d\nu(y)$ so that $x\ll y$; secondly
 Kantorovich duality holds (see \cite[Proposition 2.7]{suhr} in smooth Lorentzian setting and in case $p=1$ and \cite[Section 2.4]{CaMoLor} for the non-smooth setting and general $p\in (0,1]$).
 
 Motivated by the work of McCann \cite{McCann18} and by Theorem \ref{thm:RiccigeqT}, in \cite{CaMoLor} the following synthetic notion of time-like Ricci curvature bounded below by $K\in \R$ and dimension bounded above by $N\in (0,\infty] $ was proposed:
   
 \begin{definition}[$\mathsf{TCD}^{e}_{p}(K,N)$ and  $\mathsf{wTCD}^{e}_{p}(K,N)$ conditions]\label{def:TCD(KN)}
Fix $p\in (0,1)$, $K\in \R$, $N\in (0,\infty]$. We say that measured  Lorentzian  pre-length space $(X,\sfd,\mm, \ll, \leq, \tau)$ satisfies  $\mathsf{TCD}^{e}_{p}(K,N)$ 
(resp. $\mathsf{wTCD}^{e}_{p}(K,N)$)  if the following holds.
For any couple $(\mu_{0},\mu_{1})\in ({\rm Dom}(\Ent(\cdot|\mm)))^{2}$ which is   time-like $p$-dualisable   
(resp. $(\mu_{0},\mu_{1})\in [{\rm Dom}(\Ent(\cdot|\mm))\cap  \mathcal{P}_{c}(X)]^{2}$  
which is   strongly time-like $p$-dualisable) by some 
$\pi\in \Pi^{p\text{-opt}}_{\ll}(\mu_{0},\mu_{1})$,  
there exists an  $\ell_{p}$-geodesic $(\mu_{t})_{t\in [0,1]}$ such that  
the function $[0,1]\ni t\mapsto e(t) : = \Ent(\mu_{t}|\vol_{g})$ is 
semi-convex (and thus in particular it is locally Lipschitz in $(0,1)$) and it satisfies
\begin{equation}\label{eq:conveKN}
e''(t) - \frac{1}{N} e'(t)^{2 } \geq K \int_{X\times X} \tau(x,y)^{2} \, d\pi(x,y),
\end{equation}
in the distributional sense on $[0,1]$, where we adopt the convention that the second adding term in the left hand side is zero if $N=\infty$.
\end{definition}

The following stability result for synthetic time-like Ricci lower bounds was proved in  \cite[Theorem 3.14]{CaMoLor}.

 \begin{theorem}[Weak stability of  $\TCD^{e}_{p}(K,N)$]\label{thm:StabTCD}
Let  $\{(X_{j},\sfd_{j}, \mm_{j}, \ll_{j}, \leq_{j}, \tau_{j})\}_{j\in \N\cup\{\infty\}}$ be a sequence of  measured Lorentzian geodesic spaces  satisfying the following properties:
\begin{enumerate}
\item There exists a locally causally closed, $\cK$-globally hyperbolic  Lorentzian geodesic space $(X,  \sfd,   \ll,  \leq,  \tau)$ such that each $(X_{j},\sfd_{j}, \mm_{j}, \ll_{j}, \leq_{j}, \tau_{j})$, $j\in \N\cup\{\infty\}$, is isomorphically embedded in it (as in ${\rm (1)}$ of Theorem \ref{thm:stabUB}).
\item The measures $(\iota_{j})_{\sharp} \mm_{j}$  converge to $(\iota_{\infty})_{\sharp} \mm_{\infty}$ weakly in duality with $C_{c}( X)$ in $ X$, i.e. \eqref{eq:weakconv} holds.
\item There exist $p\in (0,1), K\in \R, N\in (0,\infty]$ such that   $(X_{j},\sfd_{j}, \mm_{j}, \ll_{j}, \leq_{j}, \tau_{j})$  satisfies  $\TCD^{e}_{p}(K,N)$, for each $j\in \N$.
\end{enumerate}
Then  the limit space $(X_{\infty},\sfd_{\infty}, \mm_{\infty}, \ll_{\infty}, \leq_{\infty}, \tau_{\infty})$  satisfies the $\wTCD^{e}_{p}(K,N)$ condition.
\end{theorem}

 \subsection{Synthetic vacuum Einstein's equations,}
 
 Combining the synthetic upper and lower bounds on the time-like Ricci curvature, i.e. Definitions \ref{def:RUB} and \ref{def:TCD(KN)}, it is natural to propose the following synthetic version for the  vacuum Einstein's equations (with possibly non-zero cosmological constant):
 
  \begin{definition}[Synthetic vacuum Einstein's equations]\label{def:SVEE}
  Fix $p\in (0,1)$, $\Lambda\in \R$, $N\in (0,\infty]$. We say that the measured  Lorentzian  pre-length space $(X,\sfd,\mm, \ll, \leq, \tau)$ satisfies the (resp. weak) synthetic formulation of the vacuum Einstein equations $\Ric\equiv \Lambda$ with cosmological constant $\Lambda\in \R$ and has synthetic dimension $\leq N$ if 
  \begin{itemize}
  \item  $(X,\sfd,\mm, \ll, \leq, \tau)$ satisfies the $\TCD^{e}_{p}(\Lambda,N)$ (resp. $\wTCD^{e}_{p}(\Lambda,N)$) condition;
  \item There exists $r_{0}>0$ and a function  $\omega:[0,r_{0})\to [0,\infty)$ with $\lim_{r\downarrow 0} \omega(r)=0$ such that $(X,\sfd,\mm, \ll, \leq, \tau)$ has time-like Ricci curvature bounded above by $\Lambda$, with respect to  $p\in (0,1), r_{0}$ and $\omega$.
    \end{itemize} 
  
  \end{definition}
 
 The combination of the stability of time-like Ricci lower and upper bounds (i.e. Theorem \ref{thm:stabUB} and Theorem \ref{thm:StabTCD}) gives the stability of the synthetic vacuum Einstein's equations under the aforementioned natural Lorentzian variant of measured Gromov-Hausdorff convergence. 
 
  \begin{theorem}[Weak stability of synthetic vacuum Einstein's equations]\label{thm:StabVEE}
Let  $\{(X_{j},\sfd_{j}, \mm_{j}, \ll_{j}, \leq_{j}, \tau_{j})\}_{j\in \N\cup\{\infty\}}$ be a sequence of  measured Lorentzian geodesic spaces  satisfying the following properties:
\begin{enumerate}
\item There exists a locally causally closed, $\cK$-globally hyperbolic  Lorentzian geodesic space $(X,  \sfd,   \ll,  \leq,  \tau)$ such that each $(X_{j},\sfd_{j}, \mm_{j}, \ll_{j}, \leq_{j}, \tau_{j})$, $j\in \N\cup\{\infty\}$, is isomorphically embedded in it (as in ${\rm (1)}$ of Theorem \ref{thm:stabUB}).
\item The measures $(\iota_{j})_{\sharp} \mm_{j}$  converge to $(\iota_{\infty})_{\sharp} \mm_{\infty}$ weakly in duality with $C_{c}( X)$ in $ X$, i.e. \eqref{eq:weakconv} holds.
\item  Volume non-collapsing: there exists a function $v:(0,\infty)\to (0,\infty)$ such that for every $x_{j}\in X_{j}$ it holds $\mm_{j}(B^{\sfd_{j}}_{r}(x_{j})) \geq v(r)>0$.
\item There exist $p\in (0,1), \Lambda \in \R, N\in (0,\infty], r_{0}>0$ and $\omega:[0,r_{0})\to [0,\infty)$ with $\lim_{r\downarrow 0} \omega(r)=0$ such that   $(X_{j},\sfd_{j}, \mm_{j}, \ll_{j}, \leq_{j}, \tau_{j})$  satisfies the synthetic formulation of the vacuum Einstein equations $\Ric\equiv \Lambda$ with cosmological constant $\Lambda\in \R$, with synthetic dimension $\leq N$ with respect to $p\in (0,1), r_{0}$ and $\omega$ as in Definition \ref{def:SVEE}.
\end{enumerate}
Then  the limit space $(X_{\infty},\sfd_{\infty}, \mm_{\infty}, \ll_{\infty}, \leq_{\infty}, \tau_{\infty})$  satisfies the weak synthetic formulation of the vacuum Einstein equations $\Ric\equiv \Lambda$ with cosmological constant $\Lambda\in \R$, with synthetic dimension $\leq N$ with respect to $p\in (0,1), r_{0}+1$ and $\omega$ as in Definition \ref{def:SVEE}.
\end{theorem}

\begin{remark}
By \cite[Theorem 3.1]{CaMoLor} after \cite{McCann18} (see also Corollary \ref{thm:RicciLowerBound}) and by Theorem \ref{thm:RiccileqT} (see also Remark \ref{rem:RiccileqT}), if $(X,\sfd,\mm, \ll, \leq, \tau)$ is a (for simplicity say a compact subset in a) smooth Lorentzian manifold, then $(X,\sfd,\mm, \ll, \leq, \tau)$ satisfies the Einstein's equations $\Ric\equiv \Lambda$ in the smooth classical sense if and only if $(X,\sfd,\mm, \ll, \leq, \tau)$ satisfies the Einstein's equations in the synthetic sense of Definition   \ref{def:SVEE}.  
Therefore, Theorem \ref{thm:StabVEE} gives that the corresponding limits of smooth solutions to Einstein's equation  $\Ric\equiv \Lambda$  satisfy the weak synthetic Einstein's equations $\Ric\equiv \Lambda$ in the sense of Definition   \ref{def:SVEE}.  In other terms, the vacuum Einstein's equations are stable under the conditions (and with respect to the notion of convergence) of   Theorem \ref{thm:StabVEE}.
\\

Let us mention that the stability of the Einstein's equations under various notions of (weak) convergence is a topic of high interest in General Relativity.
\\Classically, the problem is phrased in terms of convergence of a sequence of Lorentzian metrics $g_j$ converging to a limit Lorentzian metric $g_{\infty}$, \emph{on a fixed underlying manifold}.
\\It is well known that, if $g_j$ are solutions of the vacuum Einstein equations,  $g_j\to g_\infty$ in $C^{0}_{loc}$ and the derivatives of $g_j$ converge in $L^2_{loc}$, then the limit $g_{\infty}$ satisfies the vacuum Einstein equations as well.
\\However, if the $g_j\to g_\infty$ in $C^{0}_{loc}$ and the derivatives of $g_j$ converge \emph{weakly} in $L^2_{loc}$, explicit examples are known (see \cite{Burn,GrWa} for examples  in symmetry classes) where the limit $g_{\infty}$ may satisfy the Einstein equations with a \emph{non-vanishing} stress energy momentum tensor.  Burnett \cite{Burn} conjectured that, if there exist $C>0$ and $\lambda_{j}\to 0$ such that
$$
|g_{j}- g_{\infty}| \leq \lambda_{j}, \quad |\partial g_{j}|\leq C, \quad |\partial^{2} g_{j}| \leq C \lambda_{j}^{-1},
$$
then $g_{\infty}$ is isometric to a solution to the Einstein-massless Vlasov system for some appropriate choice of Vlasov field. Such a conjecture remains open,  although there has been recent progress \cite{HuLuk1, HuLuk2} when $g_{j}$ are assumed to be $\mathbb{U}(1)$-symmetric. We also mention the recent work \cite{LukRod} where concentrations (at the level of $\partial g_{j}$) are allowed in addition to oscillations.
\\

Theorem \ref{thm:StabVEE} gives a new point of view on the stability of the vacuum Einstein's equations. Indeed, while in the aforementioned results the  metrics $g_{j}$ are converging \emph{on a fixed underlying manifold}, in Theorem \ref{thm:StabVEE} also the underlying space $X$ may vary (along the sequence and at the limit), allowing change in topology in the limit, as one may expect in case of formation of singularities. Moreover, the notion of convergence is quite different in spirit: while in the aforementioned results $g_{j}\to g_{\infty}$ in a suitable \emph{functional analytic sense}, in Theorem \ref{thm:StabVEE} the spaces are converging in a \emph{more geometric sense}  (inspired by the measured Gromov-Haudorff convergence).
\end{remark}

\end{appendix}

\end{document}